\numberwithin{equation}{section}
\newtheorem{theorem}{Theorem}[section]
\newtheorem{corollary}[theorem]{Corollary}
\newtheorem{lemma}[theorem]{Lemma}
\theoremstyle{definition}
\newtheorem{remark}[theorem]{Remark}
\newcommand{\EE}{\mathbb{E}}
\newcommand{\RR}{\mathbb{R}}
\newcommand{\PP}{\mathbb{P}}
\newcommand{\D}{\mathrm{d}}
\newcommand{\E}{\mathrm{e}}
\newcommand{\Ee}{\mathcal{E}}
\newcommand{\JJ}{\mathrm{J}}
\newcommand{\Nn}{\mathcal{N}}
\newcommand{\Ff}{\mathcal{F}}
\newcommand{\Vv}{\mathcal{V}}
\newcommand{\II}{\mathrm{I}}
\newcommand{\mm}{\mathrm{m}}
\newcommand{\BS}{\mathrm{BS}}
\newcommand{\CEV}{\mathrm{CEV}}
\newcommand{\ttau}{\widetilde{\tau}}
\newcommand{\yyp}{\overline{y}_{p}}
\begin{document}

\title{Black-Scholes in a CEV random environment}
\author{Antoine Jacquier and Patrick Roome}
\address{Department of Mathematics, Imperial College London}
\email{a.jacquier@imperial.ac.uk, p.roome11@imperial.ac.uk}
\date{\today}
\keywords {volatility asymptotics, random environment, forward smile, 
large deviations}
\subjclass[2010]{60F10, 91G99, 41A60}
\thanks{AJ acknowledges financial support from the EPSRC First Grant EP/M008436/1.}
\maketitle
\begin{abstract}
Classical (It\^o diffusions) stochastic volatility models are not able to capture the steepness of small-maturity implied volatility smiles.
Jumps, in particular exponential L\'evy and affine models, which exhibit small-maturity exploding smiles,
have historically been proposed to remedy this (see~\cite{Tank} for an overview), 
and more recently rough volatility models~\cite{AlosLeon, Fukasawa}.
We suggest here a different route, randomising the Black-Scholes variance by a CEV-generated  distribution,
which allows us to modulate the rate of explosion (through the CEV exponent)
of the implied volatility for small maturities.
The range of rates includes behaviours similar to exponential L\'evy models and fractional stochastic volatility models.
\end{abstract}

%%%%%%%%%%%%%%%%%%%%%%%%%%%%%%%%%%%%%%%%
%%%%%%%%%%%%%%%%%%%%%%%%%%%%%%%%%%%%%%%%
\section{Introduction}\label{sec:introcevvar}
We propose a simple model with continuous paths for stock prices that allows for 
small-maturity explosion of the implied volatility smile.
It is indeed a well-documented fact on Equity markets (see for instance~\cite[Chapter~5]{GatheralBook})
that standard (It\^o) stochastic models with continuous paths are not able to 
capture the observed steepness of the left wing of the smile when the maturity becomes small.
To remedy this, several authors have suggested the addition of jumps, either in the form of an independent L\'evy process or within the more general framework of affine diffusions.
Jumps (in the stock price dynamics) imply an explosive behaviour for the small-maturity smile and are better able to capture the observed steepness of the small-maturity implied volatility smile.
In particular, Tankov~\cite{Tank} showed that, for exponential L\'evy models with L\'evy measure supported
on the whole real line, 
the squared implied volatility smile explodes as $\sigma_{\tau}^2(k) \sim -k^2/(2\tau \log \tau) $,
as the maturity~$\tau$ tends to zero, where $k$ represents the log-moneyness. 
Such a small-maturity behaviour of the smile is not only captured by jump-based models,
but rough volatility (non-Markovian) models, where the stochastic volatility component
is driven by a fractional Brownian motion, are in fact also able to reflect this property of the data.
In a series of papers several authors~\cite{AlosLeon, BFGHS, FordeZhang, Fukasawa,  GJRIC14, Guli17, JPS17} have indeed proved that,  when the Hurst index of the fractional Brownian motion lies within $(0,1/2)$, 
then the implied volatility explodes at a rate of 
${\tau}^{H-1/2}$ as the maturity~$\tau$ tends to zero.

In this paper we propose an alternative framework:
we suppose that the stock price follows a standard Black-Scholes model;
however the instantaneous variance, instead of being constant, is sampled from a continuous distribution.
We first derive some general properties, interesting from a financial modelling point of view,
and devote a particular attention to a particular case of it, 
where the variance is generated from independent CEV dynamics.
Assume that interest rates and dividends are null, and 
let $S$ denote the stock price process starting at $S_0 = 1$, 
the solution to the stochastic differential equation
$\D S_\tau = S_\tau \sqrt{\Vv}\D W_\tau$, for $\tau\geq 0$, 
where~$W$ is a standard Brownian motion.
Here, $\Vv$ is a random variable, which we assume to be distributed as $\Vv\sim Y_{t}$,
for some~$t>0$, 
where $Y$ is the unique strong solution of the CEV dynamics
$\D Y_u = \xi Y_u^{p}\D B_u$, $Y_0>0$
where $p\in\RR$, $\xi>0$ and $B$ is an independent Brownian motion 
(see Section~\ref{sec:Model} for precise statements).
The main result of this paper (Theorem~\ref{theorem:genfwdsm}) is that 
the implied volatility generated from this model exhibits the following behaviour as the maturity~$\tau$ tends to zero:
\begin{align}\label{eq:genblowup}
\sigma_{\tau}^2(k)  & \sim 
\left\{
\begin{array}{ll}
\displaystyle \frac{2(1-p)}{3-2p}\left(\frac{k^2\xi^2 (1-p)t}{2\tau}\right)^{1/(3-2p)}, & \text{if }p<1,\\
\displaystyle \frac{k^2 \xi^2 t}{\tau(\log \tau)^2 }, & \text{if }p=1, \\
\displaystyle \frac{k^2}{2(2p-1) \tau |\log\tau|}, & \text{if }p>1,
\end{array}
\right.
\end{align}
for all $k\ne 0$.
Sampling the initial variance from the CEV process at time $t$ induces different term structures
for small-maturity spot smiles, thereby providing flexibility to match steep small-maturity smiles.
For $p>1$, the explosion rate is the same as exponential L\'evy models,
and the case $p\leq 1/2$ mimics the explosion rate of fractional stochastic volatility models.
The CEV exponent~$p$ therefore allows the user to modulate the short-maturity steepness of the smile.

We are not claiming here that this model should come as a replacement of fractional stochastic volatility models
or exponential L\'evy models, notably because its dynamic structure looks too simple at first sight.
However, we believe it can act as an efficient building block for more involved models, 
in particular for stochastic volatility models with initial random distribution for the instantaneous 
variance.
While we leave these extensions for future research, we shall highlight how our model
comes naturally into play when pricing forward-start options in stochastic volatility models.
In~\cite{JR2013} the authors proved that the small-maturity forward implied volatility smile explodes in the Heston model when the remaining maturity (after the forward-start date) becomes small.
This explosion rate corresponds precisely to the case $p=1/2$ in~\eqref{eq:genblowup}.
This in particular shows that the key quantity determining the explosion rate 
is the (right tail of the) variance distribution at the forward-start date (here corresponding to~$t$).

The paper is structured as follows:
in Sections~\ref{sec:Model} and~\ref{sec:cevrandom} we introduce our model and relate it to other existing approaches.
In Section~\ref{sec:MGFApproach} we use the moment generating function to derive extreme strike asymptotics 
(for some special cases) and show why this approach is not readily applicable for small and large-maturity asymptotics.
Sections~\ref{sec:smalltimegeneralcev} and~\ref{sec:LargeTime} detail the main results, 
namely the small and large-maturity asymptotics of option prices and the corresponding implied volatility.
Section~\ref{sec:numericsgenfwdsmile} provides numerical examples,
and Section~\ref{sec:forwardcev} describes the relationship between our model 
and the pricing of forward-start options in stochastic volatility models.
Finally, the proofs of the main results are gathered in Section~\ref{sec:ProofsCEVgeneral}.

\textbf{Notations}:
Throughout the paper, the $\sim$ symbol means asymptotic equivalence, namely, the ratio of the left-hand side to the right-hand side tends to one.

%%%%%%%%%%%%%%%%%%%%%%%%%%%%%%%%%%%%%%%%%%%%%%%%
%%%%%%%%%%%%%%%%%%%%%%%%%%%%%%%%%%%%%%%%%%%%%%%%
\section{Model and main results}

%%%%%%%%%%%%%%%%%%%%%%%%%%%%%%%%%%%%%%%%%%%%%%%%
%%%%%%%%%%%%%%%%%%%%%%%%%%%%%%%%%%%%%%%%%%%%%%%%
\subsection{Model description}\label{sec:Model}
We consider a filtered probability space $(\Omega, \Ff, (\Ff_s)_{s\geq 0}, \PP)$
supporting a standard Brownian motion, and let $(Z_{s})_{s\geq 0}$ 
denote the solution to the following stochastic differential equation:
\begin{equation}\label{eq:Model}
\D Z_s  = -\frac{1}{2}\Vv\D s+ \sqrt{\Vv}\D W_s, 
\qquad Z_0 = 0,
\end{equation}
where $\Vv$ is some random variable, independent of the Brownian motion~$W$, 
and in particular of the Brownian filtration at time zero 
(see~\cite[Remarks~2.2 and~2.3]{JS16} for details about this).
The process $(Z_s)_{s\geq 0}$, in finance, corresponds to the logarithm of 
the underlying stock price, and the coefficient $-1/2$ ensures (up to integrability properties of~$\E^{\Vv}$)
that
$(\E^{Z_s})_{s\geq 0}$ is a true $(\Ff_s)_{s\geq 0}$-martingale.
In the case where~$\Vv$ is a discrete random variable, this model reduces to the mixture of distributions,
analysed, in the Gaussian case by Brigo and Mercurio~\cite{Brigo, BrigoMercurio}.
In a stochastic volatility model where the instantaneous variance process~$(V_t)_{t\geq 0}$ 
is uncorrelated with the asset price process, 
the mixing result by Romano and Touzi~\cite{RT97} implies that the price of a European option with maturity~$\tau$
is the same as the one evaluated from the SDE~\eqref{eq:Model} with
$\Vv = \tau^{-1}\int_0^{\tau}V_s \D s$.
As $\tau$ tends to zero, the distribution of $\Vv$ approaches a Dirac Delta
centred at the initial variance~$V_0$.
Asymptotics of the implied volatility are well known and weaknesses of classical stochastic volatility models 
are well documented~\cite{GatheralBook}.
Although such models fit into the framework of~\eqref{eq:Model}, we will not consider them further in this paper.
Define pathwise the process $M$ by $M_s := -\frac{1}{2}s + W_s$ and let $(\mathcal{T}_s)_{s\geq0}$ be given 
by $\mathcal{T}_s := s\Vv$.
Then $\mathcal{T}$ is an independent increasing time-change process and $Z=M_{\mathcal{T}}$ in distribution.
In this way our model can be thought of as a random time change.
Let now~$N$ be a L\'evy process such that $(\E^{N_s})_{s\geq0}$ is a $(\Ff_s)_{s\geq 0}$-adapted martingale;
define $\Vv := \tau^{-1}\int_0^{\tau}V_s \D s$ where~$V$ is a positive and independent process, 
then $(\E^{N_{\mathcal{T}_s}})_{s\geq0}$ is a classical time-changed exponential L\'evy process,
and pricing vanilla options is standard~\cite[Section 15.5]{CT07}.
However, as the maturity~$\tau$ tends to zero, $\Vv$ converges in distribution to a Dirac Delta,
in which case asymptotics are well known~\cite{Tank}.

The model~\eqref{eq:Model} is also related to the Uncertain Volatility Model of Avellaneda and Par\'as~\cite{Paras}
(see also~\cite{DenisMartini, UVMMartini, Lyons}), in which the Black-Scholes volatility
 is allowed to evolve randomly within two bounds.
In this framework, sub-and super-hedging strategies (corresponding to best and worst case scenarios) 
are usually derived via the Black-Scholes-Barenblatt equation, 
and Fouque and Ren~\cite{FouqueRen} recently provided approximation results 
when the two bounds become close to each other.
One can also, at least formally, look at~\eqref{eq:Model} 
from the perspective of fractional stochastic volatility models, 
first proposed by Comte et al. in~\cite{CR98}, and later developed and revived 
in~\cite{BFG15, CCR12, BLP15, BFGMS17, BLP16, ER16, HJL, Fukasawa, GJR14, GJRIC14, HJMDonsker, JMM17, MP17}.
In these models, standard stochastic volatility models are generalised by replacing the Brownian motion
driving the instantaneous volatility by a fractional Brownian motion.
This preserves the martingale property of the stock price process, and allows, in the case of short memory
(Hurst parameter~$H$ between $0$ and $1/2$) for short-maturity steep skew of the implied volatility smile.
However, the Mandelbrot-van Ness representation~\cite{Mandelbrot} of the fractional Brownian motion reads
$$
W_t^{H} := \int_{0}^{t}\frac{\D W_s}{(t-s)^{\gamma}} + \int_{-\infty}^{0}\left(\frac{1}{(t-s)^\gamma} - \frac{1}{(-s)^\gamma}\right)\D W_s,
$$
for all $t\geq 0$, where $\gamma:=1/2-H$.
This representation in particular indicates that, at time zero, the instantaneous variance, 
being driven by a fractional Brownian motion, incorporates some randomness (through the second integral).
Finally, we agree that, at first sight, randomising the variance may sound unconventional.
As mentioned in the introduction, we see this model as a building block for more involved models, 
in particular stochastic volatility with random initial variance, the full study of which is the purpose
of ongoing research.
After all, market data only provides us with an initial value of the stock price, 
and the initial level of the variance is unknown, usually left as a parameter to calibrate.
In this sense, it becomes fairly natural to leave the latter random.

%%%%%%%%%%%%%%%%%%%%%%%%%%%%%%%%%%%%%%%%%%%%%%%%
%%%%%%%%%%%%%%%%%%%%%%%%%%%%%%%%%%%%%%%%%%%%%%%%
\subsubsection{Moment generating function}\label{sec:LeeWings}
In~\cite{FJ10, FJ11, JKRM2013}, the authors used the theory of large deviations, 
and in particular the G\"artner-Ellis theorem, to prove small-and large-maturity behaviours of the implied volatility in the Heston model and more generally (in~\cite{JKRM2013}) for affine stochastic volatility models.
This approach relies solely on the knowledge of the cumulant generating function of the underlying stock
price, and its rescaled limiting behaviour.
For any $\tau\geq 0$, let $\Lambda^Z(u,\tau):=\log\EE(\E^{uZ_\tau})$ denote the cumulant generating function of $Z_\tau$, defined on the effective domain $\mathcal{D}^Z_\tau:=\{u\in\RR: |\Lambda^Z(u,\tau)|<\infty\}$;
similarly denote $\Lambda^{\Vv}(u)\equiv \log\EE(\E^{u\Vv})$, whenever it is well defined.
A direct application of the tower property for expectations yields
\begin{equation}\label{eq:LambdaXV}
\Lambda^Z(u,\tau) = \Lambda^{\Vv}\left(\frac{u(u-1)\tau}{2}\right), \qquad
\text{for all }u \in \mathcal{D}^Z_\tau.
\end{equation}
Unfortunately, the cumulant generating function of~$\Vv$ is not available in closed-form in general.
In Section~\ref{sec:MGFApproach} below, we shall see some examples where such a closed-form solution is available,
and where direct computations are therefore possible.
We note in passing that this simple representation allows, at least in principle, 
for straightforward (numerical) computations of the slopes of the wings of the implied volatility 
using Roger Lee's Moment Formula~\cite{Lee} (see also Section~\ref{sec:LeeWingsMGF}).
The latter are indeed given directly by the boundaries (in $\RR$) of the effective domain of $\Lambda^\Vv$.
Note further that the model~\eqref{eq:Model} could be seen as a time-changed Brownian motion (with drift); 
the representation~\eqref{eq:LambdaXV} clearly rules out the case where~$Z$ is a simple exponential L\'evy process (in which case $\Lambda^Z(u,\tau)$ would be linear in~$\tau$). 
In view of Roger Lee's formula, this also implies that, contrary to the L\'evy case, 
the slopes of the implied volatility wings are not constant over time in our model.

%%%%%%%%%%%%%%%%%%%%%%%%%%%%%%%%%%%%%%%%%%%%%%%%
%%%%%%%%%%%%%%%%%%%%%%%%%%%%%%%%%%%%%%%%%%%%%%%%
\subsection{CEV randomisation}\label{sec:cevrandom}
As mentioned above, this paper is a first step towards the introduction of `random environment' into
the realm of option pricing, and we believe that, seeing it `at work' through a specific, yet non-trivial, example, will speak for its potential prowess. 
We assume from now on that $\Vv$ corresponds to the distribution
of the random variable generated, at some time~$t$, by the solution to the CEV 
stochastic differential equation
$\D Y_u = \xi Y_u^{p}\D B_u$, $Y_0=y_0>0$
where $p\in\RR$, $\xi>0$ and $B$ is a standard Brownian motion, independent of~$W$.
The CEV process~\cite{BL12, Jeanblanc} is the unique strong solution to
this stochastic differential equation, 
up to the stopping time $\tau^Y_0:=\inf_{u>0}\{Y_u=0\}$.
The behaviour of the process after~$\tau^Y_0$ depends on the value of~$p$, and shall be discussed below.
We let 
$\Gamma(n;x):=\Gamma(n)^{-1}\int_0^x t^{n-1}\E^{-t}\D t$ 
denote the normalised lower incomplete Gamma function, 
and $\mm_t := \PP(Y_t=0) = \PP(\Vv = 0)$ represent the mass at the origin.
Define the constants
\begin{equation}\label{eq:constantsCEV}
\eta := \frac{1}{2(p-1)}, \qquad \qquad 
\mu  :=\log(y_0) - \frac{\xi^2 t}{2}.
\end{equation}
Straightforward computations show that, whenever the origin is an absorbing boundary,
the density $\zeta_p(y) \equiv \mathbb{P}(Y_t\in\D y)/\D y$ is norm decreasing 
and
\begin{equation}\label{eq:mass0}
\mm_t=1-\Gamma\left(-\eta;\frac{y_0^{2(1-p)}}{2\xi^2(1-p)^2 t}\right) > 0;
\end{equation}
otherwise $\mm_t=0$ and the density~$\zeta_p$ is norm preserving.
When $p\in[1/2,1)$, the origin is naturally absorbing.
When $p\geq 1$, the process~$Y$ never hits zero $\PP$-almost surely. 

Finally, when $p<1/2$, the origin is an attainable boundary, 
and can be chosen to be either absorbing or reflecting.
Absorption is compulsory if~$Y$ is required to be a martingale~\cite[Chapter III, Lemma 3.6]{Jacod}.
Here it is only used as a building block for the instantaneous variance, 
and such a requirement is therefore not needed, so that both cases (absorption and reflection)
will be treated.
Introduce the function $\varphi_{\eta}: (0,\infty)\to (0,\infty)$ by
$$
\varphi_{\eta}(y)
 := \frac{y_0^{1/2}y^{1/2-2p}}{|1-p|\xi^2 t}
\exp\left(-\frac{y^{2(1-p)} + y_0^{2(1-p)}}{2\xi^2 t (1-p)^2}\right)
\II_{\eta}\left(\frac{(y_0 y)^{1-p}}{(1-p)^2\xi^2 t}\right),
$$
where $\II_{\eta}$ is the modified Bessel function of the first kind of order $\eta$~\cite[Section 9.6]{Abra}.
The CEV density, $\zeta_p(y) := \mathbb{P}(Y_t\in\D y)/\D y$, then reads
\begin{align}\label{eq:CEVdensity}
\zeta_p(y)= 
\left\{
\begin{array}{ll}
\displaystyle \varphi_{-\eta}(y),
& \text{if }p \in [1/2, 1) \text{ or }p < \frac{1}{2} \text{ with absorption},\\
\displaystyle \varphi_{\eta}(y),
& \text{if }p>1\text{ or }p < \frac{1}{2} \text{ with reflection},\\
\displaystyle \frac{1}{y \xi\sqrt{2\pi t}}
\exp\left(-\frac{(\log(y)-\mu)^2}{2\xi^2 t}\right),
& \text{if }p=1,
\end{array}
\right.
\end{align}
valid for $y\in (0,\infty)$.
When $p\geq 1$, the density $\zeta_p$ converges to zero around the origin, implying that paths are being pushed away from the origin.
On the other hand $\zeta_p$ diverges to infinity at the origin when $p<1/2$, 
so that the paths have a propensity towards the vicinity of the origin.

It is clear from all the quantities above that the precise horizon~$t$ itself is not fundamental,
as it only appears with the multiplicative constant factor~$\xi^2$.
By scaling of the Brownian motion, $t$ can be taken equal to unity, and is therefore rather irrelevant here;
we shall keep it explicit in the notations, however, since it will turn out useful when applying this framework
to forward-start derivatives in Section~\ref{sec:forwardcev}.

%%%%%%%%%%%%%%%%%%%%%%%%%%%%%%%%%%%%%%%%%%%%%%%
%%%%%%%%%%%%%%%%%%%%%%%%%%%%%%%%%%%%%%%%%%%%%%%
\subsection{The moment generating function approach}\label{sec:MGFApproach}
In the literature on implied volatility asymptotics, 
the moment generating function of the stock price has proved to be an extremely useful tool
to obtain sharp estimates.
This is obviously the case for the wings of the smile (small and large strikes) via Roger Lee's formula,
mentioned in Section~\ref{sec:LeeWings},
but also to describe short-and large-maturity asymptotics, as developed for instance in~\cite{JKRM2013}
or~\cite{JR12}, via the use of (a refined version of) the G\"artner-Ellis theorem.
In~\cite{JS16}, the authors used this property to study a generalised version of the Heston model,
where the starting value of the instantaneous volatility is randomised according to some distribution.
It it closed to the present model, yet does not supersede it, and makes full use of the knowledge of the moment generating function of the Heston model.
As shown in Section~\ref{sec:LeeWings}, the moment generating function of a stock price 
satisfying~\eqref{eq:Model} is fully determined by that of the random variable~$\Vv$.
However, even though the density of the latter is known in closed form (Equation~\eqref{eq:CEVdensity}), 
the moment generating function is not so for general values of~$p$.
In the cases $p=0$ (with either reflecting or absorbing boundary) and $p=1/2$,
a closed-form expression is available and direct computations are possible.

%%%%%%%%%%%%%%%%%%%%%%%%%%%%%%%%%%%%%%%%
\subsubsection{Computation of the moment generating function}
Denote by $\Lambda^{\Vv}_{0,r}$, $\Lambda^{\Vv}_{0,a}$ and $\Lambda^{\Vv}_{1/2}$
the moment generating function of the random variable~$\Vv$ when $p=0$ 
(the subscript `r' / `a' denotes the reflecting / absorbing behaviour at the origin) and $p=1/2$.
The following quantities can be computed directly from~\cite[Part~I, Section~6.4]{Jeanblanc}:
\begin{equation}\label{eq:mgfV}
\begin{array}{ll}
\Lambda^{\Vv}_{0,a}(u) & = \displaystyle 
\log\left[
\mm_t + \frac{1}{2}\exp\left(\frac{(u\xi^2 t - 2y_0)u}{2}\right)
\left\{
\E^{2uy_0}\Ee\left(\frac{u\xi^2 t +y_0}{\xi\sqrt{2t}}\right)
+\E^{2uy_0} - 1 
 - \Ee\left(\frac{u\xi^2 t - y_0}{\xi\sqrt{2t}}\right)\right\}\right],\\
\Lambda^{\Vv}_{0,r}(u) & = \displaystyle 
\log\left[\frac{1}{2}\exp\left(\frac{(u\xi^2 t - 2y_0)u}{2}\right)
\left\{
\E^{2uy_0}\Ee\left(\frac{u\xi^2 t +y_0}{\xi\sqrt{2t}}\right)
+\E^{2uy_0}+ 1
 + \Ee\left(\frac{u\xi^2 t - y_0}{\xi\sqrt{2t}}\right)\right\}\right],\\
\Lambda^{\Vv}_{1/2}(u) & = \displaystyle \frac{2y_0 u}{2-u\xi^2 t},
\end{array}
\end{equation}
where $\Ee(z) \equiv \frac{2}{\sqrt{\pi}}\int_{0}^{z}\exp(-x^2)\D x$ is the error function.
Note that when $p=1/2$ and $p=0$ in the absorption case, 
one needs to take into account the mass at zero in~\eqref{eq:mass0} when computing these expectations.

%%%%%%%%%%%%%%%%%%%%%%%%%%%%%%%%%%%%%%%%%%%%%%%
\subsubsection{Roger Lee's wing formula}\label{sec:LeeWingsMGF}
In~\cite{Lee}, Roger Lee provided a precise link between the slope of the total implied variance 
in the wings and the boundaries of the domain of the moment generating function of the stock price.
More precisely, for any $\tau\geq 0$, let $u_{+}(\tau)$ and $u_{-}(\tau)$ be defined as
$$
u_+(\tau) := \sup\{u\geq 1: |\Lambda^Z(u,\tau)|<\infty\}
\qquad\text{and}\qquad
u_-(\tau) := \sup\{u\geq 0: |\Lambda^Z(-u,\tau)|<\infty\}.
$$
The implied volatility $\sigma_\tau(k)$ then satisfies
$$
\limsup_{k\uparrow\infty}\frac{\sigma_\tau(k)^2\tau}{k} = \psi(u_+(\tau)-1)=:\beta_+(\tau)
\qquad\text{and}\qquad
\limsup_{k\downarrow-\infty}\frac{\sigma_\tau(k)^2\tau}{|k|} = \psi(u_-(\tau)))=:\beta_-(\tau),
$$
where the function $\psi$ is defined by 
$\psi(u) = 2-4\left(\sqrt{u(u+1)}-u\right)$.
Combining~\eqref{eq:mgfV} and~\eqref{eq:LambdaXV} yields a closed-form expression
for the moment generating function of the stock price when $p\in\{0,1/2\}$.
It is clear that, when $p=0$, $u_\pm(\tau) = \pm\infty$ for any $\tau\geq 0$,
and hence the slopes of the left and right wings are equal to zero (the total variance flattens 
for small and large strikes).
In the case where $p=1/2$, explosion will occur as soon as
$\left(\frac{1}{2}u(u-1)\tau\xi^2 t - 2\right) = 0$, so that
$$
u_\pm(\tau) = \frac{1}{2} \pm \frac{1}{2}\sqrt{1+\frac{16}{\xi^2 t \tau}},
\qquad\text{and}\qquad
\beta_-(\tau) = \beta_+(\tau) = \frac{2}{\xi\sqrt{t\tau}}\left(\sqrt{\xi^2 t \tau + 16}-4\right),
\qquad\text{for all }\tau >0.
$$
The left and right slopes are the same, but the product~$\xi^2 t$
can be directly calibrated on the observed wings.
Note that the map $\tau\mapsto\beta_{\pm}(\tau)$ is concave and increasing from $0$ to $2$. 
In~\cite{DFJV1, DFJV2}, the authors highlighted some symmetry properties between the small-time behaviour 
of the smile and its tail asymptotics.
We obtain here some interesting asymmetry, in the sense that one can observe the same type of rate of explosion
 (power behaviour, given by~\eqref{eq:genblowup} in the case $p<1$), but different tail behaviour for fixed maturity.
As $\tau$ tends to infinity, $\beta_\pm(\tau)$ converges to $2$,
so that the implied volatility smile does not `flatten out', as is usually the case
for It\^o diffusions or affine stochastic volatility models (see for instance~\cite{JKRM2013}).
In Section~\ref{sec:LargeTime} below, we make this more precise by investigating the large-time behaviour 
of the implied volatility using the density of the CEV-distributed variance.

%%%%%%%%%%%%%%%%%%%%%%%%%%%%%%%%%%%%%%%%%%%%%%%
%%%%%%%%%%%%%%%%%%%%%%%%%%%%%%%%%%%%%%%%%%%%%%%
\subsubsection{Small-time asymptotics}\label{sec:MGFSmallTime}
In order to study the small-maturity behaviour of the implied volatility, 
one could, whenever the moment generating function of the stock price is available in closed form
(e.g. in the case $p\in \{0,1/2\}$), apply the methodology developed in~\cite{FJ10}.
The latter is based on the G\"artner-Ellis theorem, which, essentially, consists of
finding a smooth convex pointwise limit (as $\tau$ tends to zero) of some rescaled version of the cumulant generating function.
In the case where $p=1/2$, it is easy to show that
\begin{equation}\label{eq:SmallTimeMGF12}
\Lambda^Z_0(u) := \lim_{\tau\downarrow 0}\tau^{1/2}\Lambda^Z\left(\frac{u}{\sqrt{\tau}},\tau\right) = 
\left\{
\begin{array}{ll}
0, & \displaystyle \text{if } u \in \left(-\frac{2}{\xi\sqrt{t}}, \frac{2}{\xi\sqrt{t}}\right),\\
+\infty, & \text{otherwise}.
\end{array}
\right.
\end{equation}
The nature of this limiting behaviour falls outside the scope of the G\"artner-Ellis theorem,
which requires $|\Lambda^Z_0(u)|$ to diverge to infinity as~$u$ approaches the boundaries $\pm 2 / (\xi\sqrt{t})$.
It is easy to see that any other rescaling would yield even more degenerate behaviour.
One could adapt the proof of the G\"artner-Ellis theorem, as was done in~\cite{JR2013}
for the small-maturity behaviour of the forward implied volatility smile in the Heston model
(see also~\cite{DMJR15} and references therein for more examples of this kind).
In the case~\eqref{eq:SmallTimeMGF12}, we are exactly as in the framework of~\cite{JR2013},
in which the small-maturity smile (squared) indeed explodes as $\tau^{-1/2}$, 
precisely the same explosion as the one in~\eqref{eq:genblowup}.
Unfortunately, as we mentioned above, the moment generating function of the stock price 
is not available in general, and this approach is hence not amenable here.

%%%%%%%%%%%%%%%%%%%%%%%%%%%%%%%%%%%%%%%%%%%%%%%
%%%%%%%%%%%%%%%%%%%%%%%%%%%%%%%%%%%%%%%%%%%%%%%
\subsubsection{Large-time asymptotics}\label{sec:MGFSmallTime}
The analysis above, based on the moment generating function of the stock price, can be carried over to study the large-time behaviour of the latter.
In the case $p=1/2$, computations are fully explicit, and the following pointwise limit follows from simple
straightforward manipulations:
\begin{equation*}
\lim_{\tau\uparrow\infty}\tau^{-1}\Lambda^Z(u,\tau) = 
\left\{
\begin{array}{ll}
0, & \text{if } u \in [0,1],\\
+\infty, & \text{otherwise}.
\end{array}
\right.
\end{equation*}
The nature of this asymptotic behaviour, again, falls outside the scope of standard large deviations analysis,
and tedious work, in the spirit of~\cite{Bercu, JR2013}, would be needed to pursue this route.

%%%%%%%%%%%%%%%%%%%%%%%%%%%%%%%%%%%%%%%%%%%%%%%
%%%%%%%%%%%%%%%%%%%%%%%%%%%%%%%%%%%%%%%%%%%%%%%
\subsection{Small-time behaviour of option prices and implied volatility}\label{sec:smalltimegeneralcev}
In the Black-Scholes model $\D S_t = \sqrt{w} S_t \D W_t$ starting at $S_0=1$, 
a European call option with strike $\E^{k}$ and maturity $T>0$ is worth
\begin{equation}\label{eq:BSvariance}
\BS(k, w, T)  = \Nn\left(-\frac{k}{\sqrt{w T}}+\frac{\sqrt{wT}}{2}\right)
 - \E^{k}\Nn\left(-\frac{k}{\sqrt{wT}}-\frac{\sqrt{wT}}{2}\right).
\end{equation}
For any $k\in\RR\setminus\{0\}$, $T>0$, and $p>1$, the quantity 
\begin{align}\label{eq:Ip}
\JJ^{p}(k)   & := 
\left\{
\begin{array}{ll}
\displaystyle \int_{0}^{\infty}\BS\left(k, \frac{y}{T}, T\right)y^{-p}\D y, & \text{if }k>0,\\
\displaystyle \int_{0}^{\infty}\Big(\E^k-1+\BS\left(k, \frac{y}{T}, T\right)\Big)y^{-p}\D y, & \text{if }k<0,
\end{array}
\right.
\end{align}
is well defined and independent of~$T$.
Indeed, since the stock price is a martingale starting at one, 
Call options are always bounded above by one,
and hence, for $k>0$,
$\JJ^{p}(k) \leq \int_{0}^{1}\BS(k, y/T, T) y^{-p} \D y + \int_{1}^{\infty}y^{-p} \D y$.
The second integral is finite since $p>1$.
When $k>0$, the asymptotic behaviour 
$$
\BS\left(k,\frac{y}{T}, T\right)\sim \exp\left(-\frac{k^2}{2y} + \frac{k}{2}\right) \frac{y^{3/2}}{k^2\sqrt{2\pi}}
$$
holds as $y$ tends to zero, so that 
$\lim_{y\downarrow 0}\BS(k, y/T, T) y^{-p}=0$, and hence the integral is finite.
A similar analysis holds when $k<0$ and using put-call parity.
Define now the following constants:
\begin{equation}\label{eq:betapalphasmmat}
\beta_p :=   \frac{1}{3-2p}, \qquad
\yyp  : =\left(\frac{k^2\xi^2t(1-p)}{2}\right)^{\beta_p}, \qquad
y^*:= \frac{k^2 \xi^2 t}{2},
\end{equation}
the first two only when $p<1$, and note that $\beta_p \in (0,1)$;
define further the following functions from $(0,\infty)$ to $\RR$:
\begin{equation}\label{eq:f0g0smmat}
\left\{
\begin{array}{rlrl}
f_0(y) & := \displaystyle \frac{k^2}{2y}+\frac{y^{2(1-p)}}{2\xi^2t(1-p)^2},
\qquad \qquad &  f_1(y) & :=\displaystyle \frac{(y y_0)^{(1-p)}}{\xi^2t(1-p)^2}, \\
g_0(y) & := \displaystyle  \frac{k^2}{2y}+\frac{\log(y)}{\xi^2 t},
\qquad \qquad & g_1(y) & := \displaystyle \frac{\log(y)}{\xi^2 t}, 
\end{array}
\right.
\end{equation}
as well as the following ones, parameterised by~$p$: 
\\
\begin{minipage}{\linewidth}
\centering
\begin{tabular}{c| c c c}
& $p<1$ & $p=1$ & $p>1$ \\
\hline
  $c_1(t, p)$ & $f_0(\yyp)$ & $\displaystyle 1/(2\xi^2 t)$ & 0\\
  $c_2(t, p)$ & $\displaystyle f_1(\yyp)$ & $\displaystyle 1/(2\xi^2 t)$ & 0\\
  $c_3(t, p)$ &   $\displaystyle \frac{6-5p}{6-4p}$ & $\displaystyle g_0(y^*)-\frac{\mu}{\xi^2 t}$ & $\displaystyle 2p-1$\\
  $c_4(t, p)$ & 0 & $\displaystyle g_1(y^*)-\frac{\mu}{\xi^2 t}-2$ & 0 \\
  $c_5(t, p)$ & 
$\displaystyle \frac{y_0^{\frac{p}{2}}\yyp^{\frac{3}{2}(1-p)}
\exp\left(\frac{k}{2} - \frac{y_0^{2(1-p)}}{2\xi^2t(1-p)^2} + \frac{f_1'(\yyp)^2}{2f_0''(\yyp)}\right)}{k^2\xi\sqrt{2\pi f_0''(\yyp)t}}$
 & $\displaystyle
\frac{\exp\left({\frac{k}{2}-\frac{\mu^2}{2\xi^2t}+\frac{\mu\log(y^*)}{\xi^2 t}}\right)}{4\sqrt{\pi}|k|^{-1}\xi^{-3}t^{-3/2}}$
& $\displaystyle \frac{2(p-1)
\E^{{-\frac{y_0^{2(p-1)}}{2\xi^2t(1-p)^2}}}\JJ^{2p}(k)}
{(2(1-p)^2\xi^2 t)^{\eta+1}\Gamma(\eta+1)}$\\
\hline
\\
%\label{eq:cgen}
 $h_1(\tau,p)$ & $\displaystyle \tau^{2(p-1)/(3-2p)}$ 
 & $\displaystyle \left(\log(\tau) + \log\log\left(\tau^{-1}\right)\right)^2$ & 0\\
 $h_2(\tau,p)$ & $\displaystyle  \tau^{(p-1)/(3-2p)}$
 & $\displaystyle \frac{(\log|\log (\tau)|)^2 }{|\log(\tau)|}$ & 0\\
\hline
\\
$\mathcal{R}(\tau, p)$ & $\displaystyle \mathcal{O}\left(\tau^{(1-p)/(3-2p)}\right)$
& $\displaystyle \mathcal{O}\left(\frac{1}{|\log(\tau)|}\right)$
& $\displaystyle \mathcal{O}(\tau^{p-1})$
\\
\hline
\end{tabular}
\bigskip
\captionof{table}{List of constants and functions} \label{tab:Table}
\end{minipage}

The following theorem (proved in Section~\ref{sec:ProofSection1CEV}) is the central result of this paper 
(although its equivalent below, in terms of implied volatility, is more informative for practical purposes):

\begin{theorem}\label{theorem:genfwdst}
The following expansion holds for all $k\in\RR\setminus\{0\}$ as $\tau$ tends to zero:
$$
\mathbb{E}\left(\E^{Z_{\tau}} - \E^{k}\right)^+
 = (1-\E^k)^+
+ 
\exp\Big(-c_1(t,p)h_1(\tau,p)+c_2(t,p)h_2(\tau,p)\Big)
\tau^{c_3(t,p)} |\log(\tau)|^{c_4(t,p)} c_5(t,p)
\left[1+\mathcal{R}(\tau, p)\right].
$$
\end{theorem}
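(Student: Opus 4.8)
The plan is to reduce the statement, via the mixing identity, to a Laplace-type estimate for an integral against the CEV density~\eqref{eq:CEVdensity}, and then to carry out Laplace's method (with the appropriate rescaling) separately in the regimes $p<1$, $p=1$, $p>1$. Conditionally on $\Vv=v$, the log-price $Z_\tau$ of~\eqref{eq:Model} is Gaussian with mean $-v\tau/2$ and variance $v\tau$, so the tower property gives $\EE(\E^{Z_\tau}-\E^k)^+=\mm_t(1-\E^k)^+ + \int_0^\infty\BS(k,v,\tau)\zeta_p(v)\D v$; since $\int_0^\infty\zeta_p(v)\D v=1-\mm_t$ and $\BS(k,v,\tau)\to(1-\E^k)^+$ as $\tau\downarrow0$ for each fixed $v>0$, the theorem is equivalent to
\begin{equation*}
\int_0^\infty\bigl[\BS(k,v,\tau)-(1-\E^k)^+\bigr]\zeta_p(v)\,\D v = \exp\bigl(-c_1 h_1+c_2 h_2\bigr)\,\tau^{c_3}|\log\tau|^{c_4}\,c_5\,\bigl[1+\Rr(\tau,p)\bigr].
\end{equation*}
For $k>0$ the bracket is $\BS(k,v,\tau)\ge0$, while for $k<0$ it equals the Black--Scholes put price; the identity $\BS(k,v,\tau)-(1-\E^k)=\E^k\BS(-k,v,\tau)$ (put--call parity and put--call symmetry) then reduces the case $k<0$ to the case $k>0$ applied to $-k$, and one checks that $c_1,\dots,c_4,h_1,h_2$ are even in $k$ and that the extra factor $\E^k$ converts the $\E^{-k/2}$ thereby produced in $c_5$ into $\E^{k/2}$ (and maps $\JJ^{2p}(-k)$ to $\JJ^{2p}(k)$ when $p>1$). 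I therefore take $k>0$, so that the subtracted term vanishes.

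For $p<1$ the integrand is nonnegative. Inserting the small-total-variance expansion $\BS(k,v,\tau)\sim\E^{k/2}(v\tau)^{3/2}(k^2\sqrt{2\pi})^{-1}\E^{-k^2/(2v\tau)}$ (legitimate on the dominant range, where $v\tau\to0$) and the large-argument expansion $\II_{\pm\eta}(z)=\E^z(2\pi z)^{-1/2}(1+\mathcal{O}(z^{-1}))$ into $\zeta_p$ writes the integrand as $P_\tau(v)\E^{-\Phi_\tau(v)}$, where $P_\tau$ is the product of the algebraic prefactors (from the $\BS$ expansion, from the leading coefficient of $\zeta_p$, and from the Bessel asymptotics) and
\begin{equation*}
\Phi_\tau(v):=\frac{k^2}{2v\tau}+\frac{v^{2(1-p)}+y_0^{2(1-p)}}{2\xi^2 t(1-p)^2}-f_1(v).
\end{equation*}
The rescaling $v=w\tau^{-\beta_p}$, using $1-\beta_p=2(1-p)\beta_p$ and $\beta_p(1-p)=(1-\beta_p)/2$, diagonalises the scales:
\begin{equation*}
\Phi_\tau(w\tau^{-\beta_p})=h_1(\tau,p)\,f_0(w)-h_2(\tau,p)\,f_1(w)+\frac{y_0^{2(1-p)}}{2\xi^2 t(1-p)^2},\qquad h_1\gg h_2\gg1.
\end{equation*}
Here $f_0$ has a unique minimiser $\yyp$, with $f_0(\yyp)=c_1(t,p)$ and $f_0''(\yyp)=k^2(3-2p)/(2\yyp^{3})>0$, so the integral localises near $w=\yyp$, i.e. near $v_\tau\asymp\yyp\tau^{-\beta_p}$; the subdominant term $-h_2 f_1$ shifts the effective minimiser (in the $w$-variable) by $\mathcal{O}(\tau^{(1-\beta_p)/2})$ and adds the $\tau$-free constant $f_1'(\yyp)^2/(2f_0''(\yyp))$ to the exponent, whereas the Gaussian integral contributes $\sqrt{2\pi/\Phi_\tau''(v_\tau)}$. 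Evaluating $P_\tau$ at the saddle and collecting all powers of $\tau$ (a short computation using $\beta_p=(3-2p)^{-1}$) produces the factor $\tau^{c_3}$ with $c_4=0$, while the remaining $\tau$-free factors assemble into $c_5$; the leading relative correction comes from the next scale — the $\mathcal{O}(z^{-1})$ term of the Bessel expansion, with $z=f_1(v_\tau)\asymp\tau^{-(1-\beta_p)/2}$, together with the cubic Taylor remainder of $f_0$ at $\yyp$, both of order $\tau^{(1-\beta_p)/2}$ — which gives $\Rr(\tau,p)$.

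The case $p=1$ runs along the same lines, with $\zeta_1$ lognormal and $\Phi_\tau(v)=k^2/(2v\tau)+(\log v-\mu)^2/(2\xi^2 t)$, the only new feature being that the saddle now solves the transcendental equation $\log v-\mu=y^*/(v\tau)$; iterating it once gives $v_\tau\tau\asymp y^*/|\log\tau|$, which is precisely what forces the iterated-logarithm rescaling behind $h_1=(\log\tau+\log\log\tau^{-1})^2$ and $h_2=(\log|\log\tau|)^2/|\log\tau|$ and yields a genuine power of $|\log\tau|$ with exponent $c_4=g_1(y^*)-\mu/(\xi^2 t)-2$, the constants $c_1,c_3,c_5$ following by evaluating $g_0,g_1$ at $y^*$ and performing the Gaussian integral, with $\Rr(\tau,p)=\mathcal{O}(1/|\log\tau|)$. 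The case $p>1$ is structurally different: here $\mm_t=0$, and the small-argument Bessel expansion $\II_\eta(z)\sim(z/2)^\eta/\Gamma(\eta+1)$ applied to $f_1(v)\to0$ as $v\to\infty$ shows that $\zeta_p$ has the Pareto tail $\zeta_p(v)\sim C v^{-2p}$, with $C=\E^{-y_0^{2(1-p)}/(2\xi^2 t(1-p)^2)}\bigl(|1-p|\xi^2 t\,\Gamma(\eta+1)(2(1-p)^2\xi^2 t)^{\eta}\bigr)^{-1}$; since $v\tau$ does not vanish on the dominant range one keeps $\BS$ intact, substitutes $v=y/\tau$ to recognise the $T$-independent integral $\JJ^{2p}(k)$ of~\eqref{eq:Ip} (finite because $2p>1$), and obtains $\int_0^\infty\BS(k,v,\tau)\zeta_p(v)\D v=C\,\JJ^{2p}(k)\,\tau^{2p-1}\bigl(1+\mathcal{O}(\tau^{p-1})\bigr)=c_5\,\tau^{2p-1}[1+\Rr(\tau,p)]$, the error coming from the next term in the tail expansion of $\zeta_p$.

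The hard part will be the rigorous control of these Laplace asymptotics when $p\le1$, because of the two genuinely competing scales ($\tau^{\beta_p-1}$ versus $\tau^{(\beta_p-1)/2}$ when $p<1$, and their iterated-logarithm analogues when $p=1$): one must show that replacing $\BS$ by its small-total-variance asymptotic and $\II_{\pm\eta}$ by its large-argument asymptotic introduces only multiplicative errors of the claimed order, uniformly on a neighbourhood of the saddle that shrinks at the correct rate, and that the contributions of the tails $v\to0$ (where the small-argument Bessel behaviour of $\zeta_p$ takes over) and $v\to\infty$ (where its Gaussian decay dominates) are super-exponentially smaller than the main term. Splitting the integral at $v=\yyp\tau^{-\beta_p}(1\pm\tau^{\delta})$ for a suitable $\delta\in(0,(1-\beta_p)/2)$ and estimating the three pieces separately should then deliver the expansion with the announced $\Rr(\tau,p)$; the regime $p>1$ needs only the tail estimate above together with a routine dominated-convergence argument to extract the $\mathcal{O}(\tau^{p-1})$ remainder.
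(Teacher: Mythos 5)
Your proposal follows the same strategy as the paper: reduce via the mixing representation~\eqref{eq:fstartintrep}, handle $k<0$ by put--call parity, rescale the variance by $\tau^{\beta_p}$ (for $p<1$), by $\tau|\log\tau|$ (for $p=1$), or by $\tau$ (for $p>1$), apply the small-total-variance Black--Scholes expansion and the relevant Bessel asymptotics, and then carry out Laplace's method with a tail estimate controlling the region away from the saddle; the computations of $\yyp$, $f_0''(\yyp)$, the transcendental saddle for $p=1$, and the Pareto-tail/$\JJ^{2p}$ mechanism for $p>1$ all agree with the paper's Lemmas~\ref{lem:CEVBoundspgreater1}--\ref{lemma:fwdstartoptp1}. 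The only cosmetic difference is that the paper splits the domain at a fixed $L$ in the rescaled variable and proves the tail is super-exponentially small (Lemmas~\ref{lem:Tailestimatep01} and~\ref{lem:Tailestimatepequals1}), whereas you propose a split centred at the saddle of width $\tau^{\delta}$ — both achieve the same localisation and neither is substantively easier.
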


\begin{remark}\ 
\begin{enumerate}[(i)]
\item Whenever $p\leq 1$, $c_1$ and $c_2$ are strictly positive;
the function~$c_5$ is always strictly positive;
when $p<1$, $c_3$ is strictly positive;
when $p=1$, the functions~$c_3$ and~$c_4$ can take positive and negative values;
\item Whenever $p\leq 1$, $h_2(\tau,p) \leq h_1(\tau,p)$ for $\tau$ small enough,
so that the leading order is provided by~$h_1$;
\item In the lognormal case $p=1$, $h_1(\tau,1)\sim (\log\tau)^2$ as $\tau$ tends to zero, so that the exponential decay of option prices is governed at leading order by $\exp(-c_1(t,1)(\log\tau)^2)$.
\end{enumerate}
\end{remark}

Using Theorem~\ref{theorem:genfwdst} and small-maturity asymptotics for the Black-Scholes model 
(see~\cite[Corollary 3.5]{FJL12} or~\cite{GL11}),
it is straightforward to translate option price asymptotics into asymptotics of the implied volatility:
\begin{theorem}\label{theorem:genfwdsm}
For any $k\in\RR\setminus\{0\}$, the small-maturity implied volatility smile behaves as follows:
\begin{align*}
\sigma_{\tau}^2(k)  \sim 
\left\{
\begin{array}{ll}
\displaystyle (1-\beta_p)\left(\frac{k^2\xi^2 t(1-p)}{2\tau}\right)^{\beta_p}, & \text{if }p <1,\\
\displaystyle \frac{k^2 \xi^2 t}{\tau \log(\tau)^2 }, & \text{if }p=1, \\
\displaystyle \frac{k^2}{2(2p-1) \tau |\log(\tau)|}, & \text{if }p>1.
\end{array}
\right.
\end{align*}
\end{theorem}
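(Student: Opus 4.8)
The plan is to read off the implied volatility directly from the option-price expansion of Theorem~\ref{theorem:genfwdst} by inverting the Black--Scholes formula in the small-maturity regime. Fix $k\in\RR\setminus\{0\}$ and abbreviate $w_\tau := \sigma_\tau^2(k)$ and $v_\tau := w_\tau\tau$. By definition of implied volatility, $\EE(\E^{Z_\tau}-\E^k)^+ = \BS(k,w_\tau,\tau)$, so subtracting the intrinsic value $(1-\E^k)^+$ from both sides, Theorem~\ref{theorem:genfwdst} furnishes an exact small-$\tau$ expression for the \emph{time value} $\BS(k,w_\tau,\tau)-(1-\E^k)^+$. First I would check that $v_\tau\to 0$: the time value in Theorem~\ref{theorem:genfwdst} tends to zero in every case (the factor $\exp(-c_1 h_1+c_2 h_2)$ decays when $c_1>0$, using $h_1(\tau,p)\to\infty$ and $h_2(\tau,p)=o(h_1(\tau,p))$ for $p\leq 1$, while $\tau^{c_3}\to 0$ for $p>1$), and since the Black--Scholes time value, a strictly increasing function of the total variance, is bounded away from zero on any set $\{v\geq\delta\}$, monotonicity forces $v_\tau\to0$. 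This legitimises the classical small-variance Black--Scholes expansion (\cite[Corollary~3.5]{FJL12} or~\cite{GL11})
\[
\BS(k,w_\tau,\tau)-(1-\E^k)^+ = \frac{v_\tau^{3/2}}{k^2\sqrt{2\pi}}\exp\left(\frac{k}{2}-\frac{k^2}{2v_\tau}\right)\bigl(1+\mathcal{O}(v_\tau)\bigr).
\]

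Next I would equate the logarithms of the two expressions for the time value. Since $\mathcal{R}(\tau,p)\to0$ and $c_5(t,p)$ is a positive $\tau$-independent constant, this produces, as $\tau\downarrow0$,
\[
\frac{k^2}{2v_\tau} = c_1(t,p)\,h_1(\tau,p) - c_2(t,p)\,h_2(\tau,p) - c_3(t,p)\log\tau - c_4(t,p)\log|\log\tau| + \tfrac{3}{2}\log v_\tau + \mathcal{O}(1).
\]
The remaining work is an asymptotic inversion. The dominant term on the right-hand side is: $c_1 h_1 = f_0(\yyp)\,\tau^{\beta_p-1}$ when $p<1$ (as $\beta_p\in(0,1)$, this blows up and dominates $h_2$ by part~(ii) of the Remark following Theorem~\ref{theorem:genfwdst}, and the $\log$-terms trivially); $c_1 h_1\sim(2\xi^2 t)^{-1}(\log\tau)^2$ when $p=1$; and $-c_3\log\tau=(2p-1)|\log\tau|$ when $p>1$ (where $c_1=c_2=0$ and $h_1=h_2=c_4=0$). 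A one-step bootstrap then confirms the self-referential term $\tfrac32\log v_\tau$ and the remaining lower-order terms are negligible against the leading term $L_\tau$: the crude balance $v_\tau\asymp k^2/(2L_\tau)$ makes $\log v_\tau$ at most of order $|\log\tau|$, hence $o(L_\tau)$ when $p\leq1$, and of the same logarithmic scale but with strictly smaller coefficient when $p>1$. Therefore $v_\tau\sim k^2/(2L_\tau)$.

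Finally I would substitute. For $p>1$ this gives $\sigma_\tau^2(k)=v_\tau/\tau\sim k^2/\bigl(2(2p-1)\tau|\log\tau|\bigr)$; for $p=1$ it gives $\sigma_\tau^2(k)\sim k^2\xi^2 t/\bigl(\tau(\log\tau)^2\bigr)$. For $p<1$ I would use that $\yyp$ is the minimiser of $f_0$, i.e.\ $\yyp^{3-2p}=k^2\xi^2 t(1-p)/2$, which upon substitution reduces $f_0(\yyp)$ to $k^2(3-2p)/\bigl(4(1-p)\yyp\bigr)$; together with the elementary identity $1-\beta_p=2(1-p)/(3-2p)$ this yields $v_\tau\sim 2(1-p)\yyp\,\tau^{1-\beta_p}/(3-2p)$, so that $\sigma_\tau^2(k)\sim(1-\beta_p)\bigl(k^2\xi^2 t(1-p)/(2\tau)\bigr)^{\beta_p}$, completing the three cases.

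The only point requiring genuine care is the bootstrap step ensuring that the implicit $\tfrac32\log v_\tau$ contribution (together with the $\mathcal{O}(1)$ and $\log|\log\tau|$ contributions) does not perturb the leading order of the inverted implied variance; everything else is the algebraic bookkeeping of Theorem~\ref{theorem:genfwdst} combined with the cited Black--Scholes small-maturity expansion, which is why the passage can be described as straightforward.
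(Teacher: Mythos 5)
Your proposal is correct and follows exactly the route the paper indicates but does not spell out: apply Theorem~\ref{theorem:genfwdst} for the time value, equate it with the small-total-variance Black--Scholes expansion (Lemma~\ref{lem:BSsmalltimecev} / \cite[Corollary~3.5]{FJL12}), take logarithms, and invert asymptotically with a bootstrap. The algebraic reductions in the $p<1$ case (using $f_0'(\yyp)=0$ to get $f_0(\yyp)=k^2(3-2p)/(4(1-p)\yyp)$ and $1-\beta_p=2(1-p)/(3-2p)$) are all correct, as are the $p=1$ and $p>1$ cases.

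One small imprecision worth fixing is the justification of the bootstrap for $p>1$. You write that, under the crude balance, $\log v_\tau$ is ``of the same logarithmic scale but with strictly smaller coefficient'' than $L_\tau=(2p-1)|\log\tau|$. Taken literally this is neither true nor sufficient: for $p$ close to $1$ the coefficient $2p-1$ can be smaller than $3/2$, so a coefficient comparison on the $|\log\tau|$ scale would not close the argument. What actually saves you is that the crude balance gives $v_\tau\asymp 1/|\log\tau|$, hence $\log v_\tau=O(\log|\log\tau|)=o(|\log\tau|)=o(L_\tau)$, i.e.\ $\log v_\tau$ lives on a \emph{strictly smaller} scale than $L_\tau$, not the same one. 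Equivalently, writing $A_\tau:=k^2/(2v_\tau)$ the equation reads $A_\tau+\tfrac32\log A_\tau=L_\tau+O(1)$, and since $A\mapsto A+\tfrac32\log A$ is increasing with $A+\tfrac32\log A\sim A$, one gets $A_\tau\sim L_\tau$ directly whenever $L_\tau\to\infty$, uniformly over the three cases. With that rephrasing the argument is airtight.
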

This theorem only presents the leading-order asymptotic behaviour of the implied volatility
as the maturity becomes small.
One could in principle (following~\cite{Caravenna} or~\cite{GL11, Guli, MijTankov}) derive higher-order terms,
but these additional computations would impact the clarity of this singular behaviour.
In the at-the-money $k=0$ case, the implied volatility converges to a constant:
\begin{lemma}\label{lem:explosiongenrem}
The at-the-money implied volatility $\sigma_{\tau}(0)$ converges to $\mathbb{E}(\sqrt{\Vv})$
as $\tau$ tends to zero.
\end{lemma}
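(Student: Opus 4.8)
The plan is to reduce everything to the explicit at-the-money Black--Scholes formula together with a single moment estimate for~$\Vv$. Since $\Vv$ is independent of~$W$, conditioning on~$\Vv$ shows that, given~$\Vv$, the process $(\E^{Z_s})_{s\geq 0}$ is a lognormal martingale whose total variance at time~$\tau$ equals $\Vv\tau$, so that the tower property gives, for the at-the-money call price $C(\tau):=\EE(\E^{Z_\tau}-1)^+$,
\begin{equation*}
C(\tau) = \EE\big[\BS(0,\Vv,\tau)\big],
\qquad\text{where}\qquad
\BS(0,w,\tau) = 2\Nn\Big(\tfrac12\sqrt{w\tau}\Big)-1 = \Ee\Big(\tfrac{\sqrt{w\tau}}{2\sqrt 2}\Big).
\end{equation*}
The map $w\mapsto\BS(0,w,\tau)$ is continuous and strictly increasing from~$0$ to~$1$, and $0<C(\tau)<1$ (the payoff is strictly dominated pathwise by $\E^{Z_\tau}$, which has unit expectation, while $\PP(Z_\tau>0)>0$ since $\mm_t<1$), so the at-the-money implied volatility $\sigma_\tau(0)$ is the unique solution of $\BS(0,\sigma_\tau(0)^2,\tau)=C(\tau)$.

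Next I would pin down the small-maturity behaviour of $C(\tau)$. From $\Ee(z)\sim 2z/\sqrt\pi$ as $z\downarrow 0$ one gets the pointwise limit $\tau^{-1/2}\BS(0,w,\tau)\to\sqrt{w/(2\pi)}$ for each fixed $w\geq 0$, while concavity of~$\Ee$ on $[0,\infty)$ (its derivative $\frac{2}{\sqrt\pi}\E^{-z^2}$ is decreasing) yields the uniform bound $\BS(0,w,\tau)\leq\sqrt{w\tau/(2\pi)}$. The only model-specific input here is the integrability $\EE(\sqrt\Vv)<\infty$, which follows from the explicit density~\eqref{eq:CEVdensity} (its tails decay super-exponentially for $p<1$, log-normally for $p=1$, and $\sqrt y\,\zeta_p(y)$ is integrable at the origin in every case); alternatively, outside the reflecting sub-case one simply uses that $\Vv=Y_t$ is a non-negative supermartingale, whence $\EE(\sqrt\Vv)\leq(\EE\Vv)^{1/2}\leq\sqrt{y_0}$. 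Dominated convergence then gives
\begin{equation*}
\lim_{\tau\downarrow 0}\frac{C(\tau)}{\sqrt\tau} = \frac{\EE(\sqrt\Vv)}{\sqrt{2\pi}}.
\end{equation*}

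It remains to invert this relation through $\BS(0,\cdot,\tau)$, and this is where the (mild) obstacle lies: one needs an a priori bound on $\sigma_\tau(0)$ before the leading-order expansion can be used. Such a bound follows from $C(\tau)\leq\sqrt{\tau/(2\pi)}\,\EE(\sqrt\Vv)$ combined with the linear lower bound $\Ee(z)\geq\Ee(1)\,z$ for $z\in[0,1]$ (and $\Ee(z)\geq\Ee(1)$ for $z\geq 1$), which forces $\sigma_\tau(0)$ to stay bounded for $\tau$ small. Hence $\sigma_\tau(0)\sqrt\tau\to 0$, so $\BS(0,\sigma_\tau(0)^2,\tau)=\sigma_\tau(0)\sqrt{\tau/(2\pi)}\,(1+o(1))$; equating this with $C(\tau)=\sqrt{\tau/(2\pi)}\,\EE(\sqrt\Vv)\,(1+o(1))$ and cancelling $\sqrt{\tau/(2\pi)}$ yields $\sigma_\tau(0)\to\EE(\sqrt\Vv)$. (Equivalently, one argues along subsequences using the monotonicity of $\BS(0,\cdot,\tau)$ to rule out any accumulation point other than $\EE(\sqrt\Vv)$.)
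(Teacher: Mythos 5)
Your argument is correct and is, in substance, the standard route the paper alludes to by citing~\cite[Lemma 4.3]{JR2013}: mix the explicit at-the-money Black--Scholes price $\BS(0,w,\tau)=\Ee\big(\sqrt{w\tau}/(2\sqrt{2})\big)$ over~$\Vv$ via the tower property, control $\tau^{-1/2}\BS(0,\Vv,\tau)$ by the dominating majorant $\sqrt{\Vv/(2\pi)}$ (from concavity of~$\Ee$) to conclude $C(\tau)\sim\sqrt{\tau/(2\pi)}\,\EE\sqrt{\Vv}$ by dominated convergence, and then invert. Two points are done carefully and are worth keeping: the a priori boundedness of $\sigma_\tau(0)$ (needed before linearising $\Ee$ at the origin) via the lower bound $\Ee(z)\ge \Ee(1)\min(z,1)$, and the verification that $\EE\sqrt{\Vv}<\infty$ in every CEV regime, either via the supermartingale bound $\EE\Vv\le y_0$ (all non-reflecting cases) or from the explicit density~\eqref{eq:CEVdensity} near the origin in the reflecting case.
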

The proof of the lemma follows steps analogous to~\cite[Lemma 4.3]{JR2013},
and we omit the details here.
It in fact does not depend on any particular form of distribution of~$\sqrt{\Vv}$,
as long as the expectation exists.
Note that, from Theorem~\ref{theorem:genfwdsm}, as~$p$ approaches $1$ from below, the rate of explosion approaches $\tau^{-1}$.
When~$p$ tends to~$1$ from above, the explosion rate is $1/(\tau|\log\tau|)$ instead.
So there is a "discontinuity" at $p=1$ and the actual rate of explosion is less than both these limits.
As an immediate consequence of Theorem~\ref{theorem:genfwdst} we have the following  corollary.
Define the following functions:
\begin{equation*}
h^*(\tau,p)  := 
\left\{
\begin{array}{ll}
\displaystyle \tau^{1-\beta_p}, & \text{if }p<1, \\
\displaystyle \left|\log(\tau)^{-1}\right|, & \text{if }p>1,\\
\displaystyle \log(\tau)^{-2}, & \text{if }p=1,
\end{array}
\right.
\qquad\text{and}\qquad
\Lambda^*_p(k) :=
\left\{
\begin{array}{ll}
\displaystyle c_1(t,p), & \text{if }p \leq 1, \\
\displaystyle 2p-1, & \text{if }p > 1,
\end{array}
\right.
\end{equation*}
where $c_1(t,p)$ is defined in Table~\ref{tab:Table}, and depends on~$k$ (through~$\overline{y}_p$).
\begin{corollary}\label{cor:LDP}
For any $p\in \RR$, the sequence~$\left(Z_\tau\right)_{\tau\geq0}$ satisfies 
a large deviations principle with speed $h^*(\tau,p)$ and rate function $\Lambda^*_p$ as~$\tau$ tends to zero.
Furthermore, the rate function is good only when $p<1$.
\end{corollary}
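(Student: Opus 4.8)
The plan is to deduce the large deviations principle from the sharp small-maturity option-price expansion of Theorem~\ref{theorem:genfwdst}, via the standard dictionary between the exponential decay of out-of-the-money European options and the tails of the log-price. Write $C_\tau(k):=\EE\left(\E^{Z_\tau}-\E^k\right)^+$ and, since $\E^{Z}$ is a martingale, $P_\tau(k):=\EE\left(\E^k-\E^{Z_\tau}\right)^+=C_\tau(k)-(1-\E^k)$ by put-call parity. The first step is to record that for $k>0$ the intrinsic value $(1-\E^k)^+$ in Theorem~\ref{theorem:genfwdst} vanishes, so the theorem controls $C_\tau(k)$, while for $k<0$ it cancels against $1-\E^k$, so it controls $P_\tau(k)$; then, reading off Table~\ref{tab:Table} and using $h^*(\tau,p)h_1(\tau,p)\to1$ (when $p\le1$), $h^*(\tau,p)h_2(\tau,p)\to0$, $h^*(\tau,p)\log\tau\to0$ (when $p<1$) resp.\ $h^*(\tau,p)|\log\tau|\to1$ (when $p>1$), $h^*(\tau,p)\log c_5(t,p)\to0$ and $h^*(\tau,p)\mathcal{R}(\tau,p)\to0$, one obtains for every $k\neq0$
\[
\lim_{\tau\downarrow0}h^*(\tau,p)\log C_\tau(k)=-\Lambda^*_p(k)\ \ (k>0),\qquad
\lim_{\tau\downarrow0}h^*(\tau,p)\log P_\tau(k)=-\Lambda^*_p(k)\ \ (k<0).
\]
For $p<1$ one has $\Lambda^*_p(k)=c_1(t,p)=f_0(\yyp)=\min_{y>0}f_0(y)$ (cf.\ \eqref{eq:f0g0smmat}--\eqref{eq:betapalphasmmat}), a function that is continuous, even in~$k$, vanishes at $k=0$ and grows to $+\infty$ as $|k|\to\infty$; for $p\ge1$, $\Lambda^*_p$ is a strictly positive constant on $\RR\setminus\{0\}$, and since $Z_\tau\to0$ almost surely as $\tau\downarrow0$ one sets $\Lambda^*_p(0):=0$.

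For the upper bound, let $F$ be closed with $\delta:=\mathrm{dist}(0,F)>0$ (the case $0\in F$ being trivial, since then $\inf_F\Lambda^*_p=0$); then $\PP(Z_\tau\in F)\le\PP(Z_\tau\ge\delta)+\PP(Z_\tau\le-\delta)$, and for $0<k'<\delta$ the inequality $\left(\E^{Z_\tau}-\E^{k'}\right)^+\ge\left(\E^\delta-\E^{k'}\right)\ind_{\{Z_\tau\ge\delta\}}$ (and its put counterpart) yields $\PP(Z_\tau\ge\delta)\le C_\tau(k')/(\E^\delta-\E^{k'})$ and $\PP(Z_\tau\le-\delta)\le P_\tau(-k')/(\E^{-k'}-\E^{-\delta})$; letting $k'\uparrow\delta$, invoking the above limits and (when $p<1$) the continuity of $\Lambda^*_p$, gives $\limsup_{\tau\downarrow0}h^*(\tau,p)\log\PP(Z_\tau\in F)\le-\Lambda^*_p(\delta)=-\inf_{k\in F}\Lambda^*_p$. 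For the lower bound it suffices, for each $x$ in an open set $G$ and each small $\epsilon$ with $(x-\epsilon,x+\epsilon)\subseteq G$, to bound $\liminf_{\tau\downarrow0}h^*(\tau,p)\log\PP\left(Z_\tau\in(x-\epsilon,x+\epsilon)\right)$ from below by $-\Lambda^*_p(x)$ and then to take the supremum over $x\in G$. If the origin lies in $(x-\epsilon,x+\epsilon)$ this holds because $\PP\left(Z_\tau\in(x-\epsilon,x+\epsilon)\right)\to1$; otherwise, say $0<x-\epsilon$ (the negative case being symmetric via put prices), choose $x-\epsilon<a_2<b_1<x+\epsilon$ and bound $\PP\left(Z_\tau\in(x-\epsilon,x+\epsilon)\right)\ge\PP(Z_\tau>x-\epsilon)-\PP(Z_\tau\ge x+\epsilon)$ from below, via the elementary call-spread inequalities, by
\[
\frac{C_\tau(x-\epsilon)-C_\tau(a_2)}{\E^{a_2}-\E^{x-\epsilon}}\ -\ \frac{C_\tau(b_1)-C_\tau(x+\epsilon)}{\E^{x+\epsilon}-\E^{b_1}}.
\]
For $p<1$ the second quotient is exponentially smaller than the first (strict monotonicity of $\Lambda^*_p$), so the displayed difference equals $C_\tau(x-\epsilon)(\mathrm{const}+o(1))$ and the limits above give $\liminf_{\tau\downarrow0}h^*(\tau,p)\log\PP(\cdot)\ge-\Lambda^*_p(x-\epsilon)\ge-\Lambda^*_p(x)$; the case $p=1$ is analogous, the gap between $C_\tau(x-\epsilon)$ and the higher-strike prices now being polynomial (via the factor $\tau^{c_3(t,1)}$, which is strictly increasing in $|k|$); for $p>1$ all four prices are of the same order $\tau^{2p-1}$, and one uses Theorem~\ref{theorem:genfwdst} in the form $C_\tau(k)=\tau^{2p-1}\,\kappa\,\JJ^{2p}(k)\left(1+\mathcal{O}(\tau^{p-1})\right)$, with $\kappa>0$ independent of~$k$, together with $\JJ^{2p}$ and $k\mapsto-\E^{-k}(\JJ^{2p})'(k)=\int_0^\infty\Nn\!\left(-k/\sqrt{y}-\sqrt{y}/2\right)y^{-2p}\D y$ both strictly decreasing, so that the displayed difference is $\ge c\,\tau^{2p-1}$ for some $c>0$ and small $\tau$, giving $\liminf_{\tau\downarrow0}h^*(\tau,p)\log\PP(\cdot)\ge-(2p-1)=-\Lambda^*_p(x)$.

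Finally, the sublevel set $\{k:\Lambda^*_p(k)\le\alpha\}$ is a compact interval when $p<1$ (continuity and coercivity of $\Lambda^*_p$), but equals $\{0\}$ for $\alpha<\Lambda^*_p(1)$ and all of $\RR$ otherwise when $p\ge1$, so the rate function is good precisely when $p<1$. The step I expect to be the main obstacle is the lower bound in the regime $p\ge1$: since the rate function is then flat on $\RR\setminus\{0\}$, it cannot be read off from the leading exponential order, and one must exploit the finer structure of the expansion in Theorem~\ref{theorem:genfwdst} --- the $k$-dependence of the prefactor $\tau^{c_3(t,1)}$ when $p=1$, and of the constant $\JJ^{2p}(k)$ (with the monotonicity of its logarithmic derivative) when $p>1$ --- to ensure that the relevant call-spread differences are strictly positive and of the exact order $\tau^{2p-1}$. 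A related, minor point is the discontinuity of $\Lambda^*_p$ at the origin for $p\ge1$, handled separately via $Z_\tau\to0$.
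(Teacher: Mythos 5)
Your proof is correct, but it takes a genuinely different route from the paper's. The paper proceeds by noting that the proof of Theorem~\ref{theorem:genfwdst} can be repeated ``with minor modifications'' for digital payoffs $\ind_{\{Z_\tau\le k\}}$, which gives the limit $\lim_{\tau\downarrow0}h^*(\tau,p)\log\PP(Z_\tau\le k)$ directly; it then exploits continuity and convexity of $\Lambda^*_p$ (for $p\le 1$) and the constancy of the rate function (for $p>1$) to pass from half-lines to open intervals and then to general Borel sets. You instead treat Theorem~\ref{theorem:genfwdst} as a black box on call/put prices and recover tail probabilities through the standard digital and call-spread inequalities $\PP(Z_\tau\ge\delta)\le C_\tau(k')/(\E^\delta-\E^{k'})$ and $\frac{C_\tau(a)-C_\tau(b)}{\E^b-\E^a}\le\PP(Z_\tau>a)$, proving the upper bound for closed sets and the lower bound for open sets from first principles. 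What your route buys is that you avoid redoing the Laplace analysis with a different integrand; what it costs is the extra bookkeeping in the lower bound when $p\ge1$, where $\Lambda^*_p$ is flat on $\RR\setminus\{0\}$ and you have to extract strict positivity of the call-spread difference from the subleading $k$-dependence — the power $\tau^{c_3(t,1)}$ (increasing in $|k|$) when $p=1$, and the strict convexity of $K\mapsto\JJ^{2p}(\log K)$ when $p>1$; you correctly identify both mechanisms. You are also more careful than the paper at the origin: for $p\ge1$ the constant rate function appearing in Table~\ref{tab:Table} would violate the LDP upper bound for any closed set containing $0$ (since $\PP(Z_\tau\in F)\to 1$ whenever $0\in F$ while $\inf_F\Lambda^*_p$ would be strictly positive); your explicit convention $\Lambda^*_p(0):=0$, justified by $Z_\tau\to 0$ a.s., is the minimal fix that restores lower semi-continuity and makes the statement hold, and your analysis of the sublevel sets ($\{0\}$ or $\RR$ for $p\ge1$; compact intervals for $p<1$) gives the goodness dichotomy exactly as claimed.
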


Recall that a real-valued sequence $(\mathfrak{Z}_n)_{n\geq 0}$ satisfies a large deviations principle (see~\cite{DZ93} for a precise introduction to the topic) with speed~$n$ and rate function~$\Lambda^*$ if,
for any Borel subset~$B\subset\RR$, 
the inequalities
$$
-\inf_{z\in B^o}\Lambda^*(z) \leq \liminf_{n\uparrow\infty}n^{-1}\log\PP(\mathfrak{Z}_n\in B)
\leq \limsup_{n\uparrow\infty}n^{-1}\log\PP(\mathfrak{Z}_n\in B)
\leq -\inf_{z\in \overline{B}}\Lambda^*(z) 
$$
hold, where $\overline{B}$ and $B^o$ denote the closure and interior of~$B$ in~$\RR$.
The rate function~$\Lambda^*:\RR\to\RR\cup\{+\infty\}$, by definition, 
is a lower semi-continuous, non-negative and not identically infinite, function
such that the level sets $\{x\in\RR: \Lambda^*(x)\leq \alpha\}$ are closed for all $\alpha\geq 0$.
It is said to be a good rate function when these level sets are compact (in~$\RR$).

\begin{proof}
The proof of Theorem~\ref{theorem:genfwdst} holds with only minor modifications for digital options,
which are equivalent to probabilities of the form 
$\mathbb{P}\left(Z_{\tau}\leq k\right)$ or $\mathbb{P}\left(Z_{\tau}\geq k\right)$.
For $p\in(-\infty,1]$,
one can then show that 
$$
\lim_{\tau\downarrow 0}h^*(\tau,p)\log \mathbb{P}\left(Z_{\tau}\leq k\right)
 = -\inf\left\{\Lambda^*_p(x): x\leq k\right\}.
$$
The infimum is null whenever $k>0$ and $p<1$, 
and $\Lambda^*_1(x)\equiv1/(2\xi^2t)$ is constant.
Consider now an open interval $(a,b)\subset\RR$.
Since $(a,b) = (-\infty, b)\setminus (-\infty,a]$, then by continuity and convexity of~$\Lambda^*_p$, 
we obtain
$$
\lim_{\tau\downarrow 0}h^*(\tau,p)\log \mathbb{P}\left(Z_{\tau} \in (a,b)\right)
 = -\inf_{x\in (a,b)}\Lambda_p^*(x).$$
The corollary then follows from the definition of the large deviations principle~\cite[Section 1.2]{DZ93}.
When $p\in(1,\infty)$, the only non-trivial choice of speed is $|(\log\tau)^{-1}|$, in which case
$\lim_{\tau\downarrow 0}|(\log\tau)^{-1}|\log \mathbb{P}\left(Z_{\tau}\leq k\right)
 = - (2p-1)$.
Clearly, the constant function is a rate function (the level sets, either the empty set or the real line,
being closed in~$\RR$), and the corollary follows.
\end{proof}

\begin{remark}
In the case $p=1/2$, as discussed in Section~\ref{sec:MGFSmallTime}, 
the moment generating function of~$Z$ is available in closed form.
However, the large deviations principle does not follow from the G\"artner-Ellis theorem,
since the pointwise rescaled limit of the mgf is degenerate (in the sense of~\eqref{eq:SmallTimeMGF12}).
\end{remark}

%%%%%%%%%%%%%%%%%%%%%%%%%%%%%%%%%%%%%%%%%%
%%%%%%%%%%%%%%%%%%%%%%%%%%%%%%%%%%%%%%%%%%
\subsubsection{Small-maturity at-the-money skew and convexity}

The goal of this section is to compute asymptotics for the at-the-money skew
and convexity of the smile as the maturity becomes small.
These quantities are useful to traders who actually observe them (or approximations thereof)
on real data.
We define the left and right derivatives by 
$\partial^{-}_{k}\sigma_{\tau}^2(0) := \lim_{k\uparrow 0}\partial_{k}\sigma_{\tau}^2(k)|_{k=0}$ and
$\partial^{+}_{k}\sigma_{\tau}^2(0) := \lim_{k\downarrow 0}\partial_{k}\sigma_{\tau}^2(k)|_{k=0}$, 
and similarly 
$\partial^{-}_{kk}\sigma_{\tau}^2(0) := \lim_{k\uparrow 0}\partial_{kk}\sigma_{\tau}^2(k)|_{k=0}$ and
$\partial^{+}_{kk}\sigma_{\tau}^2(0) := \lim_{k\downarrow 0}\partial_{kk}\sigma_{\tau}^2(k)|_{k=0}$.
The following lemma describes this short-maturity behaviour in the general case where $\Vv$ is any
random variable supported on $[0,\infty)$.
\begin{lemma}
Consider~\eqref{eq:Model} and assume that $\mathbb{E}(\Vv^{n/2})<\infty$ for $n=-1,1,3$,
and $\mm_t:=\PP(\Vv = 0) <1$.
As $\tau$ tends to zero,
\begin{equation*}
\begin{array}{rcllcl}
\partial^{\pm}_{k}\sigma_{\tau}^2(0)
 & \sim & \displaystyle 
\pm\frac{\mm_t \mathbb{E}(\sqrt{\Vv})\sqrt{\pi}}{\sqrt{2\tau}}, \\
  \partial^{-}_{kk}\sigma_{\tau}^2(0)
 & \sim & \partial^{+}_{kk}\sigma_{\tau}^2(0) \sim
\displaystyle \frac{\mathbb{E}(\sqrt{\Vv})}{\tau}\left(\mathbb{E}\left(\Vv^{-1/2}\right)
 - \mathbb{E}(\sqrt{\Vv})^{-1}\left(1-\frac{\mm_t^2\sqrt{\pi}}{8}\right)\right). \\
\end{array}
\end{equation*}
\end{lemma}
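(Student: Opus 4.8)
Since $\Vv$ is independent of the Brownian motion~$W$, conditioning on~$\Vv$ turns $Z_\tau$ into a Black--Scholes log-price with variance rate~$\Vv$, so that, writing $C(\tau,k):=\EE(\E^{Z_\tau}-\E^{k})^{+}$ and using $\BS(k,0,\tau)=(1-\E^{k})^{+}$,
$$
C(\tau,k)=\EE[\BS(k,\Vv,\tau)]=\mm_t(1-\E^{k})^{+}+\int_{(0,\infty)}\BS(k,w,\tau)\,\PP(\Vv\in\D w),
$$
and similarly for the put. Since $\mm_t<1$, the implied variance $\sigma_\tau^2(k)$, the unique solution of $\BS(k,\sigma_\tau^2(k),\tau)=C(\tau,k)$, is well defined and strictly positive, smooth in~$k$ away from the origin, with one-sided derivatives there. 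Differentiating the defining relation once and twice in~$k$ and solving (the implicit function theorem, with one-sided derivatives at~$0$ when $\mm_t>0$) gives
$$
\partial_k^{\pm}\sigma_\tau^2(0)=\frac{\partial_k^{\pm}C(\tau,0)-\partial_k\BS(0,v_\tau,\tau)}{\partial_w\BS(0,v_\tau,\tau)},\qquad v_\tau:=\sigma_\tau^2(0),
$$
together with an analogous, longer expression for $\partial_{kk}^{\pm}\sigma_\tau^2(0)$ involving in addition $\partial_{kk}\BS$, $\partial_{kw}\BS$, $\partial_{ww}\BS$, $\partial_{kk}^{\pm}C(\tau,0)$ and the $\partial_k^{\pm}\sigma_\tau^2(0)$ just obtained.

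I would then make every ingredient explicit. From~\eqref{eq:BSvariance}, with $d_\pm:=-k/\sqrt{w\tau}\pm\sqrt{w\tau}/2$ and $\Nn'$ the standard Gaussian density, one has $\partial_k\BS=-\E^{k}\Nn(d_-)$ and $\partial_w\BS=\tfrac{\sqrt\tau}{2\sqrt w}\Nn'(d_+)$, together with the second-order analogues, all kept compact by the identity $\Nn'(d_+)=\E^{k}\Nn'(d_-)$; crucially $\partial_{kk}\BS(0,w,\tau)$ carries a factor of order $(w\tau)^{-1/2}$ and $\partial_{ww}\BS(0,w,\tau)$ one of order $\tau^{1/2}w^{-3/2}$. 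The one-sided derivatives of the extrinsic term $(1-\E^{k})^{+}$ at the origin are $0$ from the right and $-1$ from the left (and the same at second order), and it is exactly this jump that produces the $\mm_t$-dependent asymmetry of the skew and, upon squaring $\partial_k^{\pm}\sigma_\tau^2(0)$ in the second-order identity, the $\mm_t^{2}$ term of the convexity. For the integrals against the law of~$\Vv$ I would Taylor-expand $\Nn$ and $\Nn'$ about~$0$ --- their arguments being $\mathcal O(\sqrt{w\tau})$ --- and integrate term by term; the coefficients that appear are precisely $\EE(\sqrt\Vv)$, $\EE(\Vv^{3/2})$ and $\EE(\Vv^{-1/2})$, whose finiteness is the hypothesis and, via dominated convergence ($\BS\leq1$ at the top, $\partial_{kk}\BS(0,w,\tau)\lesssim(w\tau)^{-1/2}$ near $w=0$), legitimises the interchange. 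In parallel I would refine Lemma~\ref{lem:explosiongenrem} to $v_\tau=\EE(\sqrt\Vv)^{2}+\mathcal O(\tau)$ by matching the at-the-money identity $C(\tau,0)=\BS(0,v_\tau,\tau)$ order by order, which is what one needs to evaluate the Black--Scholes-side quantities at the implied level~$v_\tau$.

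Substituting these expansions and bookkeeping the orders then delivers the stated asymptotics. For the skew, when $\mm_t>0$ the leading contribution is the $\tau^{-1/2}$ term arising from the corner of $(1-\E^{k})^{+}$ through $\partial_k^{\pm}C(\tau,0)$, while the remaining $\mathcal O(\tau)$ piece is the third-moment (drift) correction carried by $\EE(\Vv^{3/2})-\EE(\sqrt\Vv)^{3}$; for the convexity, the $\tau^{-1}$ rate is forced by the $(w\tau)^{-1/2}$ singularity of $\partial_{kk}\BS$ integrated against the law of~$\Vv$ (whence the $\EE(\Vv^{-1/2})$), corrected at the same order by the $\partial_{ww}\BS\,(\partial_k^{\pm}\sigma_\tau^2(0))^{2}$ term (whence the $\mm_t^{2}$ contribution and the fact that the left and right convexities agree).

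I expect the main obstacle to be this last step rather than any single computation: several leading terms cancel, so both the skew and the convexity are genuinely subleading effects whose coefficients are sensitive to the exact expansion order carried through each of $\partial_k^{j}C(\tau,0)$, $v_\tau$ and the Black--Scholes-side quantities; one must justify the one-sided implicit differentiation at a point where $C(\tau,\cdot)$ has a kink whenever $\mm_t>0$; and the delicate analytic point is uniform control of the remainders of the $\Vv$-integrals near $w=0$, which is precisely where the moment conditions $\EE(\Vv^{-1/2}),\EE(\Vv^{3/2})<\infty$ and $\mm_t<1$ (the latter keeping $\partial_w\BS(0,v_\tau,\tau)$ from vanishing faster than $\sqrt\tau$) come into play.
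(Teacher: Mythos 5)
Your proposal follows the paper's own route step by step: express $C(\tau,k)=\EE[\BS(k,\Vv,\tau)]$ by conditioning on~$\Vv$, differentiate the defining relation $\BS(k,\sigma_\tau^2(k),\tau)=C(\tau,k)$ in~$k$ (one-sidedly at $k=0$ when $\mm_t>0$), expand the Black--Scholes-side quantities at the at-the-money level $v_\tau=\sigma_\tau^2(0)$, integrate the $\Vv$-side term by term, and treat the kink of $\mm_t(1-\E^k)^+$ separately. Your own warning that ``several leading terms cancel'' is exactly the crux, and pushing it through exposes a gap that the paper's proof also has.

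The at-the-money matching $C(\tau,0)=\BS(0,v_\tau,\tau)$, obtained from $\BS(0,w,\tau)=\tfrac{\sqrt{w\tau}}{\sqrt{2\pi}}-\tfrac{(w\tau)^{3/2}}{24\sqrt{2\pi}}+\mathcal O\bigl((w\tau)^{5/2}\bigr)$ integrated over $\Vv$, forces
\begin{equation*}
\sigma_\tau(0)=\EE(\sqrt\Vv)-\frac{\tau}{24}\left(\EE(\Vv^{3/2})-\EE(\sqrt\Vv)^3\right)+\mathcal O(\tau^2),
\end{equation*}
so the exact $\mathcal O(\tau)$ coefficient of $v_\tau$ matters. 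You propose to stop at ``$v_\tau=\EE(\sqrt\Vv)^2+\mathcal O(\tau)$'', and the paper effectively uses only $\sigma_\tau(0)=\EE(\sqrt\Vv)+o(1)$ before treating $\EE(\sqrt\Vv)-\sigma_\tau(0)$ as negligible at order~$\tau$. Inserting the explicit coefficient into the skew numerator $\int_0^\infty\partial_k\BS(0,y,\tau)\zeta_p(y)\,\D y - \partial_k\BS(0,v_\tau,\tau)$ gives $\tfrac{\sqrt\tau}{2\sqrt{2\pi}}\bigl(\EE(\sqrt\Vv)-\sigma_\tau(0)\bigr)-\tfrac{\tau^{3/2}}{48\sqrt{2\pi}}\bigl(\EE(\Vv^{3/2})-\sigma_\tau(0)^3\bigr)+\cdots$, and the two leading contributions cancel \emph{exactly}, leaving $\mathcal O(\tau^{5/2})$; after dividing by $\partial_w\BS\asymp\sqrt\tau$ the smooth part of the skew is $\mathcal O(\tau^2)$, \emph{not} $\sim c\,\tau$. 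This cancellation is forced by symmetry: since $\Vv$ is independent of $W$, the share-measure Girsanov argument gives $\sigma_\tau(k)=\sigma_\tau(-k)$ for all $k$, hence $\partial_k^{+}\sigma_\tau^2(0)=-\partial_k^{-}\sigma_\tau^2(0)$ and any even-in-$k$ $\mathcal O(\tau)$ contribution must vanish; only the $\pm\mm_t$ terms from the kink can survive. The convexity assertion, being even in $k$, is not ruled out by this. So while your strategy mirrors the paper's, carrying the at-the-money expansion through the order you yourself flag as dangerous shows that the purported $\mathcal O(\tau)$ term in the skew should be zero, and a proof built on the coarser expansion of $v_\tau$ would not be valid.
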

When $\mm_t>0$, the at-the-money left skew explodes to $-\infty$ and the at-the-money right skew explodes to $+\infty$.
Furthermore, the small-maturity at-the-money convexity tends to infinity.

\begin{proof}
We first focus on the at-the-money skew.
By definition the Call option price with log-moneyness~$k$ and maturity~$\tau$ reads
$C(k, \tau)=\BS(k, \sigma_{\tau}^2(k), \tau)$, and therefore
$$
\partial_k C(k, \tau) = \partial_k \BS(k, \sigma_{\tau}^2(k), \tau)  +
\partial_{k}\sigma_{\tau}^2(k) \partial_w \BS(k, \sigma_{\tau}^2(k), \tau).
$$
Also by~\eqref{eq:fstartintrep},
an immediate application of Fubini yields
\begin{equation}\label{eq:skewformulacev}
\partial_k C(k, \tau) = \int_{0}^{\infty}\partial_k\BS(k, y, \tau)\PP(\Vv \in \D y) + \mm_t \partial_k\left(1-\E^{k}\right)^{+},
\end{equation}
We first assume that $\mm_t=0$.
The at-the-money skew is then given by
\begin{equation}\label{eq:atmSkew0}
\partial_{k}\sigma_{\tau}^2(k)|_{k=0}
=\left(\partial_w \BS(k, \sigma_{\tau}^2(k), \tau)|_{k=0}\right)^{-1}
\left(
\int_{0}^{\infty}\partial_k\BS(k, y, \tau)|_{k=0}\PP(\Vv \in \D y) - \partial_k \BS(k, \sigma_{\tau}^2(k), \tau)|_{k=0}\right).
\end{equation}
Note now that, for any fixed~$y>0$ and any integer~$N$,
$$
\BS(0, y, \tau) = 1-\Nn\left(\frac{\sqrt{y\tau}}{2}\right)
 = \frac{1}{2} + \sum_{n=0}^{N}\alpha_{n} y^{n+1/2}\tau^{n+1/2} + \mathcal{O}\left(\tau^{N+3/2}\right),
$$
for some explicit sequence $(\alpha_{n})_{n\geq 0}$, as~$\tau$ tends to zero, 
and therefore,
$$
C(0, \tau) = \int_{0}^{\infty}\BS(0, y, \tau)\PP(\Vv\in\D y)
 = \frac{1}{2} + \sum_{n=0}^{N}\alpha_{n} \tau^{n+1/2}\EE\left(\Vv^{n+1/2}\right) + \mathcal{O}\left(\tau^{N+3/2}\right).
$$
This allows us to refine Lemma~\ref{lem:explosiongenrem}, 
so that, for any~$N$, there exists a sequence $(\sigma_n)_{n=0,\ldots,N}$
such that 
$$
\sigma_{\tau}(0) = \sum_{n=0}^{N}\sigma_n \tau^n + \mathcal{O}\left(\tau^{N+1}\right).
$$
We are now interested in the at-the-money skew.
Since 
$$
\partial_k\BS(k, \sigma_{\tau}^2(k), \tau)|_{k=0}
 = -\Nn\left(-\frac{\sigma_{\tau}(0)\sqrt{\tau}}{2}\right),
 $$
clearly admits the same (modulo signs) expansion as the Call price, it follows from~\eqref{eq:atmSkew0}
that the difference on the right will always be zero, and the lemma follows.
When $\mm_t>0$, we need to take left and right derivatives to account for the atomic term.
Since 
$\partial^{-}_{k}\left(1-\E^{k}\right)^{+}|_{k=0}=\partial^{-}_{kk}\left(1-\E^{k}\right)^{+}|_{k=0} = -1$ and 
$\partial^{+}_{k}\left(1-\E^{k}\right)^{+}|_{k=0}=\partial^{+}_{kk}\left(1-\E^{k}\right)^{+}|_{k=0}=0$,
the asymptotic skew stated in the lemma follows immediately.

The small-maturity convexity follows similar arguments, which we only outline.
Since
\begin{align*}
\partial_{kk} C(k, \tau) &= \partial_{kk} \BS(k, \sigma_{\tau}^2(k), \tau)  +
2\partial_{k}\sigma_{\tau}^2(k) \partial_{wk} \BS(k, \sigma_{\tau}^2(k), \tau)
+\left(\partial_{k}\sigma_{\tau}^2(k)\right)^2 \partial_{ww} \BS(k, \sigma_{\tau}^2(k), \tau) \\ \nonumber
&+\partial_{kk}\sigma_{\tau}^2(k) \partial_w \BS(k, \sigma_{\tau}^2(k), \tau),\\
\partial_{kk} C(k, \tau) & = \int_{0}^{\infty}\partial_{kk}\BS(k, y, \tau)\PP(\Vv \in \D y) + \mm_t \partial_{kk}\left(1-\E^{k}\right)^{+},
\end{align*}
then
\begin{align*}
\partial_{kk}\sigma_{\tau}^2(0)
  & = \left(\partial_w \BS(0, \sigma_{\tau}^2(0), \tau)\right)^{-1}
\left\{\partial_{kk} C(0, \tau)  - \partial_{kk} \BS(0, \sigma_{\tau}^2(0), \tau)\right\}\\
  & = \left(\partial_w \BS(0, \sigma_{\tau}^2(0), \tau)\right)^{-1}
\left\{\int_{0}^{\infty}\partial_{kk}\BS(k, y, \tau)\PP(\Vv \in \D y) - \partial_{kk} \BS(0, \sigma_{\tau}^2(0), \tau) + \mm_t \partial_{kk}\left(1-\E^{k}\right)^{+}\right\},
\end{align*}
and the lemma follows by straightforward computations similar to the skew case.
\end{proof}

%%%%%%%%%%%%%%%%%%%%%%%%%%%%%%%%%%%%%%%%%%%%%%%%
%%%%%%%%%%%%%%%%%%%%%%%%%%%%%%%%%%%%%%%%%%%%%%%%
\subsection{Large-time behaviour of option prices and implied volatility}\label{sec:LargeTime}

In this section we compute the large-time behaviour of option prices and implied volatility.
The proofs are given in Section~\ref{sec:largetimeproofcevvariance}.
It turns out that asymptotics are degenerate in the sense that option prices decay algebraically to their intrinsic values.
The structure of the asymptotic depends on the CEV parameter~$p$ and whether the origin is reflecting or absorbing: 
\begin{theorem}\label{thm:largetimecevatm}
Define the following quantity:
$$
\mathfrak{M}(\eta) := 
\frac{2^{3-6p-\eta}\Gamma\left(\frac{1}{2}-2p\right)}
{\sqrt{\pi}\Gamma(1+\eta)|1-p|^{2\eta+1} (\xi^2 t)^{\eta+1}}\exp\left(-\frac{y_0^{2(1-p)}}{2\xi^2 t (1-p)^2}\right),
$$
with~$\eta$ given in~\eqref{eq:constantsCEV}.
The following expansions hold for all $k\in\mathbb{R}$ as $\tau$ tends to infinity:
\begin{enumerate}[(i)]
\item if $p<3/4$ and the origin is absorbing then
$$
\mathbb{E}\left(\E^{Z_{\tau}} - \E^{k}\right)^+
 = 1 - \mm_t+\mm_t(1-\E^{k})^{+}
- 8\E^{k/2}y_0 \left(\frac{1}{2}-2p\right)\mathfrak{M}(-\eta)
\frac{1+\mathcal{O}\left(\tau^{-1}\right)}{\tau^{2-2p}};
$$
\item if $p<1/4$ and the origin is reflecting then
$$
\mathbb{E}\left(\E^{Z_{\tau}} - \E^{k}\right)^+
 = 1 - \E^{k/2}\mathfrak{M}(\eta)
\frac{1+\mathcal{O}\left(\tau^{-1}\right)}{\tau^{1-2p}}.
$$
\end{enumerate}
\end{theorem}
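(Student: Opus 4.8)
The plan is to reduce the statement to the mixture (conditional Black--Scholes) representation and then read off the leading correction from a one--dimensional Laplace--type integral governed by the CEV--variance density near the origin.

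First, conditioning on $\Vv$ and using that $Z_\tau\equiv 0$ on $\{\Vv=0\}$,
\[
\EE\left(\E^{Z_\tau}-\E^k\right)^+=\int_0^\infty\BS(k,y,\tau)\,\zeta_p(y)\,\D y+\mm_t\left(1-\E^k\right)^+,
\]
which is the same identity already used for the skew in~\eqref{eq:skewformulacev}. Since $0\le\BS(k,y,\tau)\le1$ with $\BS(k,y,\tau)\to1$ for every fixed $y>0$ as $\tau\uparrow\infty$ (the total variance $y\tau$ diverges), dominated convergence gives $\int_0^\infty\BS(k,y,\tau)\zeta_p(y)\,\D y\to\int_0^\infty\zeta_p(y)\,\D y$, equal to $1-\mm_t$ in the absorbing case and to $1$ in the reflecting case; this is the stated leading term $1-\mm_t+\mm_t(1-\E^k)^+$, resp.\ $1$. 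Everything then rests on the decay rate of $R(\tau):=\int_0^\infty\bigl(1-\BS(k,y,\tau)\bigr)\zeta_p(y)\,\D y$.

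Second, I would pin down $\zeta_p$ at the origin. Inserting $\II_\nu(z)=\Gamma(\nu+1)^{-1}(z/2)^\nu\bigl(1+\mathcal O(z^2)\bigr)$ as $z\downarrow0$ into~\eqref{eq:CEVdensity} and collecting powers (using $(1-p)\eta=-\tfrac12$) gives, as $y\downarrow0$,
\[
\zeta_p(y)=C_\star\,y^{q}\bigl(1+\mathcal O(y^{2(1-p)})\bigr),\qquad q=1-2p\ \ (\text{absorbing, Bessel order }-\eta),\quad q=-2p\ \ (\text{reflecting, order }\eta),
\]
with $C_\star$ explicit and carrying precisely the factors $y_0$, powers of $\xi^2 t$, $\Gamma(1\mp\eta)^{-1}$ and $\exp\bigl(-y_0^{2(1-p)}/(2\xi^2t(1-p)^2)\bigr)$ appearing in $\mathfrak{M}(\mp\eta)$; moreover $\zeta_p$ decays super--exponentially at $+\infty$, so $\zeta_p(y)\le M y^q$ on all of $(0,\infty)$. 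Since $\BS(k,y,\tau)=\BS(k,1,y\tau)$ depends on $y$ only through the total variance $v=y\tau$, and $1-\BS(k,1,v)\sim\tfrac{4\E^{k/2}}{\sqrt{2\pi v}}\E^{-v/8}$ as $v\uparrow\infty$, the integrand of $R(\tau)$ is of Laplace type $\asymp y^{q-1/2}\E^{-y\tau/8}$ near $y=0$, so Watson's lemma predicts
\[
R(\tau)\sim\frac{4\E^{k/2}C_\star}{\sqrt{2\pi\tau}}\int_0^\infty y^{q-1/2}\E^{-y\tau/8}\,\D y=\frac{4\E^{k/2}C_\star\,8^{q+1/2}\,\Gamma\!\left(q+\tfrac12\right)}{\sqrt{2\pi}}\,\tau^{-(q+1)},
\]
the integral being finite exactly when $q+\tfrac12>0$, i.e.\ $p<\tfrac34$ (absorbing) and $p<\tfrac14$ (reflecting); the exponents $q+1$ are then $2-2p$ and $1-2p$, and $\Gamma(q+\tfrac12)$ equals $\Gamma(\tfrac32-2p)=(\tfrac12-2p)\Gamma(\tfrac12-2p)$ resp.\ $\Gamma(\tfrac12-2p)$, matching the $\Gamma$--factor of $\mathfrak{M}$ once $(\tfrac12-2p)$ is pulled out in the absorbing case.

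It then remains to collect constants and the error, and to make the Laplace step honest. Expanding $C_\star$, the constant $\tfrac{4\E^{k/2}C_\star 8^{q+1/2}\Gamma(q+1/2)}{\sqrt{2\pi}}$ collapses onto $8\E^{k/2}y_0(\tfrac12-2p)\mathfrak{M}(-\eta)$ in the absorbing case and onto $\E^{k/2}\mathfrak{M}(\eta)$ in the reflecting case, the overall minus sign in the statement coming from $\EE(\E^{Z_\tau}-\E^k)^+=(\text{leading})-R(\tau)$; the relative error $1+\mathcal O(\tau^{-1})$ follows by carrying the $\mathcal O(y^{2(1-p)})$ term of $\zeta_p$ through the same computation. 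The one genuine subtlety is that the large--variance tail of $1-\BS$ used above is \emph{not} uniformly valid on the scale $y\asymp\tau^{-1}$ (total variance of order one) that actually dominates $R(\tau)$: to make the argument rigorous I would keep the exact identity $1-\BS(k,1,v)=\int_v^\infty\partial_w\BS(k,1,w)\,\D w$ with $\partial_w\BS(k,1,w)=\tfrac{1}{2\sqrt{2\pi w}}\E^{k/2-k^2/(2w)-w/8}$, rescale $v=y\tau$, and pass to the limit by dominated convergence via $\zeta_p(y)\le M y^q$, recovering the same $\tau^{-(q+1)}$ order with the constant produced by an explicit Gaussian (Gamma, or a modified Bessel function of the second kind when $k\ne0$) integral. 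I expect the main work to be exactly these two items — justifying the limit interchange on the critical scale, and the patient bookkeeping of multiplicative constants so that everything lands on $\mathfrak{M}(\pm\eta)$.
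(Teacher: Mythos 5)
Your first pass (insert the large-variance Black--Scholes tail, then run Watson's lemma on the small-$y$ expansion of the CEV density obtained from $\II_\eta(z)\sim(z/2)^\eta/\Gamma(\eta+1)$) is exactly the paper's proof: the paper rewrites $C(k,\tau)=1-\mm_t+\mm_t(1-\E^k)^++\tau^{-1/2}\E^{k/2}\mathfrak{L}(\tau)\bigl(1+\mathcal{O}(\tau^{-1})\bigr)$ with $\mathfrak{L}(\tau)=\int_0^\infty q(z)\E^{-\tau z}\D z$, $q(z)=-8\zeta_p(8z)/\sqrt{\pi z}$, and then applies Watson's lemma. Your uniformity caveat is therefore the right place to push, and it is a genuine gap in the paper's argument as well: the error in Lemma~\ref{lem:BSAsymplargetimecomplete} is really $\mathcal{O}\bigl(1/(y\tau)\bigr)$ (it comes from $\exp(-k^2/(2y\tau))$ and the $1/(y\tau)$ corrections to $1/d_+^*-1/d_-^*$), and Watson's lemma concentrates the integral on the scale $y\asymp 1/\tau$, where that error is $\mathcal{O}(1)$, not $\mathcal{O}(\tau^{-1})$.

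The step that would fail, however, is your expectation that the rigorous version ``lands on $\mathfrak{M}(\pm\eta)$.'' If you substitute $y=v/\tau$, use the scaling $\BS(k,y,\tau)=\BS(k,1,y\tau)$, and then pass to the limit by dominated convergence against $\zeta_p(v/\tau)\sim c_\star(v/\tau)^q$, the leading constant is $c_\star\int_0^\infty\bigl(1-\BS(k,1,v)\bigr)v^q\,\D v$. Carrying this through by Fubini against the vega $\partial_w\BS(k,1,w)=\tfrac{1}{2\sqrt{2\pi w}}\E^{k/2-k^2/(2w)-w/8}$ gives
$\int_0^\infty\bigl(1-\BS(k,1,v)\bigr)v^q\D v=\tfrac{\E^{k/2}}{2(q+1)\sqrt{2\pi}}\int_0^\infty w^{q+1/2}\E^{-k^2/(2w)-w/8}\D w$,
which at $k=0$ equals $\tfrac{8^{q+1}\Gamma(q+3/2)}{(q+1)\sqrt{\pi}}$, whereas the Watson's-lemma shortcut (and hence $\mathfrak{M}(\pm\eta)$ as stated) produces $\tfrac{8^{q+1}\Gamma(q+1/2)}{\sqrt{\pi}}$. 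These differ by the factor $(q+\tfrac12)/(q+1)$, which is $\tfrac12$ at $q=0$ (e.g.\ $p=0$ reflecting); and for $k\neq 0$ the exact constant contains a modified Bessel factor $(4k^2)^{(q+3/2)/2}K_{q+3/2}(|k|/2)$, not merely $\E^{k/2}$. So your decomposition of the remainder and the rate $\tau^{-(q+1)}$ are correct, but the ``patient bookkeeping'' will \emph{not} collapse onto the theorem's stated coefficient: the Fubini/DCT route is the one that has to be carried through, and it yields a different (and $k$-dependent) constant.
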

For other values of $p$, asymptotics are more difficult to derive and we leave this for future research.
The asymptotic behaviour of option prices is fundamentally different to Black-Scholes asymptotics (Lemma~\ref{lem:BSAsymplargetimecomplete}) and it is not clear that one can deduce asymptotics
for the implied volatility.
For example, the intrinsic values do not necessarily match as $\tau$ tends to infinity because of the mass at the origin.
The one exception is when the origin is reflecting, 
in which case the implied volatility tends to zero.
This result is a direct translation of Theorem~\ref{thm:largetimecevatm} into implied volatility asymptotics: \begin{theorem}\label{thm:IVLargeTime}
If $p<1/4$ and the origin is reflecting, the following pointwise limit holds for all $k\in\mathbb{R}$:
$$
\lim_{\tau\uparrow\infty}\frac{\tau}{\log\tau}\sigma_{\tau}^{2}(k) = 8(1-2p).
$$
\end{theorem}
\begin{proof}
One could prove the statement directly by computing the asymptotic behaviour of the Black-Scholes Call price
$\BS(k, y, \tau)$ as the maturity~$\tau$ tends to infinity (pointwise, for any $k>0$, $y\in\RR$),
see Lemma~\ref{lem:BSAsymplargetimecomplete},
and comparing it to the Call price expansion in Theorem~\ref{thm:largetimecevatm}(ii).
Instead, we choose to apply Tehranchi's result~\cite{Tehranchi}, which is the first fully model-independent 
study of the large-maturity behaviour of the implied volatility.
Assuming (a) that the underlying stock price~$\exp(Z)$ is a non-negative local martingale under~$\PP$, 
and (b) that $\exp(Z_t)$ converges almost surely to zero as time tends to infinity, 
Tehranchi~\cite[Theorem 3.1]{Tehranchi} proved that the expansion
\begin{equation}\label{eq:Tehranchi}
\sigma_{\tau}^2(k) = -8\log\EE\left(\E^{Z_\tau}\wedge \E^{k}\right)
 - 4\log\left\{-\log\EE\left(\E^{Z_\tau}\wedge \E^{k}\right)\right\} + 4k - 4\log\pi + \varepsilon(k,\tau)
\end{equation}
then holds uniformly on compact subsets of the real line as~$\tau$ tends to infinity,
where the function~$\varepsilon(\cdot)$ accounts for higher-order error terms.
It is clear here that the two assumptions above are satisfied in our model.
Note that (b) is equivalent to Call prices converging to one as the maturity tends to infinity
(see~\cite[Lemma 3.3]{RogersTehranchi} for instance).
Using the almost sure equality $(\E^{Z_\tau}-\E^{k})_+ = \E^{Z_\tau} - \E^{Z_\tau}\wedge \E^{k}$, 
and the fact that $(\E^{Z_\tau})_{\tau\geq 0}$ is a true martingale, 
it is then straightforward to show that Theorem~\ref{thm:IVLargeTime} follows from
Theorem~\ref{thm:largetimecevatm}(ii) and Tehranchi's expansion~\eqref{eq:Tehranchi}.
\end{proof}
\begin{remark}
In fact, the proof of Theorem~\ref{thm:IVLargeTime} actually provides higher-order terms, but we omit them here for brevity.
The case of Theorem~\ref{thm:IVLargeTime}(i) shows that the Call option price converges to
\begin{equation*}
1 - \mm_t + \mm_t(1-\E^{k})^{+} =
\left\{
\begin{array}{ll}
1 - \mm_t, & \text{if } k\geq 0,\\
1 - \mm_t\E^{k}, & \text{if } k< 0,
\end{array}
\right.
\end{equation*}
as the maturity tends to infinity.
This clearly is never equal to zero since $\mm_t \in (0,1)$, so that Tehranchi's Assumption (b) (in the proof of Theorem~\ref{thm:IVLargeTime}) fails.
\end{remark}
Although we provided here the large-time behaviour of the implied volatility, 
it is not our intention to use this model for options with large expiries.
Our intention (as mentioned in Section~\ref{sec:introcevvar}) is to use it as a building block for more advanced models (such as stochastic volatility models where the initial variance is sampled from a continuous distribution) so that we are able to better match steep small-maturity observed smiles.
In these types of more sophisticated models, the large-time behaviour is governed more from the chosen stochastic volatility model rather than the choice of distribution for the initial variance (see~\cite{JR12, JR14} for examples),
especially if the variance process possesses some ergodic properties.
This also suggests to use this class of models to introduce two different time scales: 
one to match the small-time smile (the distribution for the initial variance) 
and one to match the medium- to large-time smile (the chosen stochastic volatility model).
We leave this particular point for further (on-going) research, 
and direct the interested reader to Section~\ref{sec:forwardcev}, 
where more intuition about the use of this framework for forward-start options.

%%%%%%%%%%%%%%%%%%%%%%%%%%%%%%%%%%%%%%%%%%%%%%%%
%%%%%%%%%%%%%%%%%%%%%%%%%%%%%%%%%%%%%%%%%%%%%%%%
\subsection{Numerics}~\label{sec:numericsgenfwdsmile}
We provide here two types of numerical examples.
In Section~\ref{sec:Num_BSCEVSurf}, we show how randomising the Black-Scholes model 
according to~\eqref{eq:Model}
distorts the standard flat Black-Scholes implied volatility surface,
and generates a realistic-looking one.
In Section~\ref{sec:Num_Asymptotic}, we compare numerically the asymptotic results 
for the implied volatility smile to the true smile generated from\eqref{eq:Model}.

\subsubsection{Black-Scholes-CEV surface}\label{sec:Num_BSCEVSurf}
We consider here the following values:
\begin{equation}\label{eq:paramBSCEV}
t = 1, \quad
\xi = 20\%, \quad
p = 0.5, \quad
y_0 = 10\%, \quad
S_0 = 1.
\end{equation}
In Figure~\ref{fig:BSCEVSurface}, we plot the implied volatility surface generated by~\eqref{eq:Model}
according to the values given in~\eqref{eq:paramBSCEV}.
First, note that, contrary to the standard Black-Scholes model, the surface is not flat.
Second, and more importantly, the smile becomes steeper and more pronounced as the maturity becomes small.
This is a widely recognised fact on Equity markets, and seems to validate the approach followed in this paper.
Note that, following Section~\ref{sec:LeeWingsMGF}, one can taylor the parameters of the CEV component
in order to match any desired (arbitrage-free) slope for the wings of the smile.
\begin{figure}[h]
\centering
\includegraphics[scale=0.5]{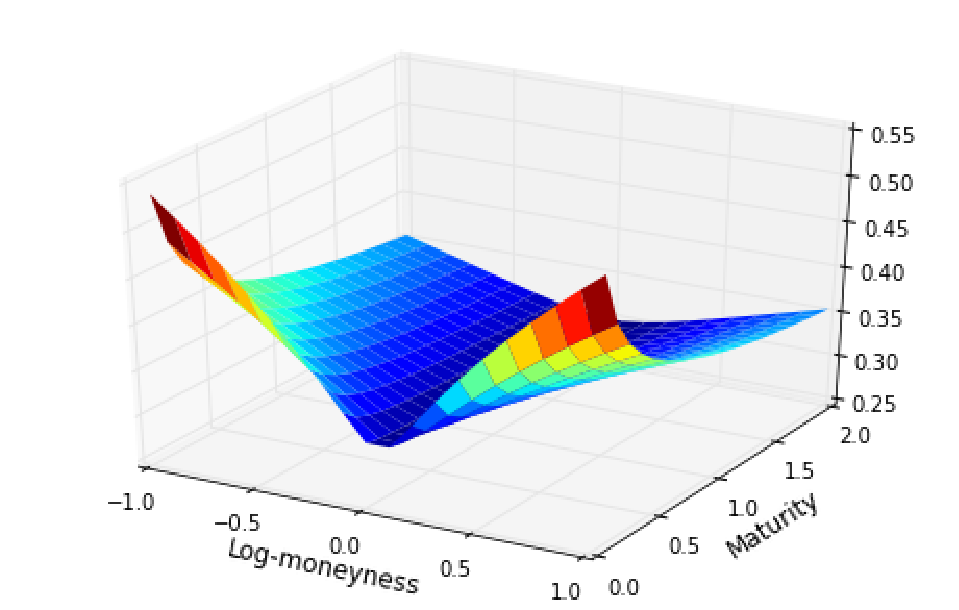}
\caption{
BS-CEV implied volatility surface, with parameters being given in~\eqref{eq:paramBSCEV}.}
\label{fig:BSCEVSurface}
\end{figure}

\subsubsection{Asymptotics}\label{sec:Num_Asymptotic}
We calculate option prices using the representation~\eqref{eq:fstartintrep} 
and a global adaptive Gauss-Kronrod quadrature scheme. 
We then compute the smile with a simple root-finding algorithm.
In Figure~\ref{fig:RandomBSSurface}(a), (b) and (c) we plot the smile for different maturities and values for the CEV power~$p$.
The model parameters are 
$y_0=0.07$, $\xi=0.2 y_0^{1/2-p}$ and $t=1/2$.
Note here that we set $\xi$ to be a different value for each $p$.
This is done so that the models are comparable: $\xi$ is then given in the same units and the quadratic variation of the CEV variance dynamics are approximately matched for different values of $p$.
The graphs highlight the steepness of the smiles as the maturity gets smaller and the role of $p$ in the shape of the small-maturity smile.
Note (as mentioned in previous sections) that the random variance acts as a shock to the small-maturity volatility surface and then flattens out.
The shape of the shock depends on the CEV power, $p$. 
Out-of-the money volatilities (for $K\notin[0.9,1.1]$) explode at a quicker rate as~$p$ increases 
(this can be seen from Theorem~\ref{theorem:genfwdsm}).
The volatility for strikes close to the money $K\in[0.9,1.1]$ appears to be less explosive as one increases $p$,
which might be explained from the strike dependence of the coefficients of the asymptotic in Theorem~\ref{theorem:genfwdsm}.
In order to compare our asymptotic to the true smile we use Theorem~\ref{theorem:genfwdst} to extend Theorem~\ref{theorem:genfwdst} to higher order. 
For the case $p<1,k\neq0$ we find that
$\sigma_{\tau}^2(k)\sim a_0(k)\tau^{-\beta_p}+a_1(k)\tau^{-p\beta_p}$ as $\tau$ tends to zero with
$$
a_0(k):=(1-\beta_p)\left(\frac{k^2\xi^2 t(1-p)}{2}\right)^{\beta_p},\qquad
a_1(k)=\frac{2f_1(y_p)a_0(k)^2}{k^2},
$$
and $\beta_p, y_p$ and $f_1$ defined in~\eqref{eq:betapalphasmmat}-\eqref{eq:f0g0smmat}.
At first order we see a close match with the true smile in Figure~\ref{fig:RandomBSSurface}(d).

\begin{figure}
\centering
\mbox{\subfigure[p=0.2.]{\includegraphics[scale=0.7]{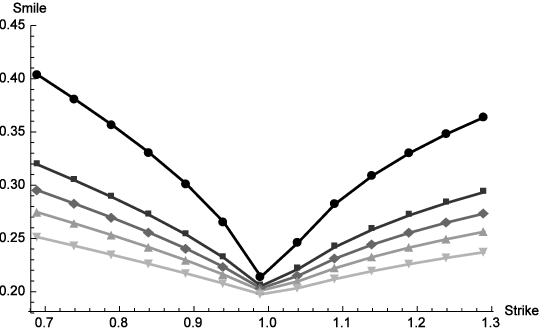}}\quad
\subfigure[p=1]{\includegraphics[scale=0.7]{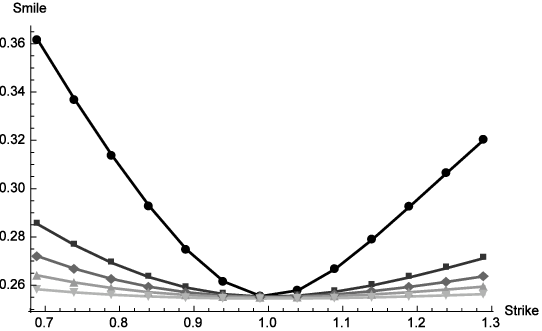}}}
\mbox{\subfigure[p=1.3]{\includegraphics[scale=0.7]{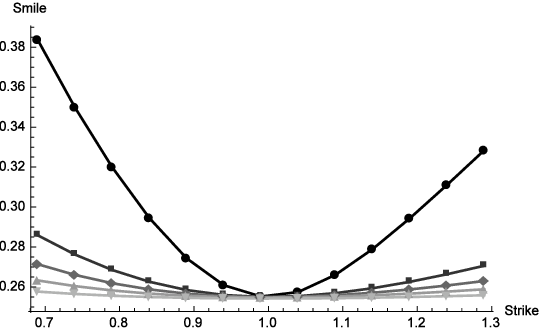}} \quad
\subfigure[Actual vs asymptotic]{\includegraphics[scale=0.7]{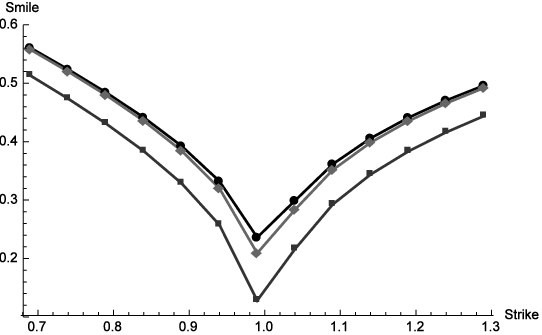}}}
\caption{
In (a), (b), (c) we plot 
$K\mapsto\sigma_{\tau}(\log K)$ for maturities of 1/12 (circles),1/2 (squares),1 (diamonds),2 (triangles) and 5 (backwards triangles) for increasing values of the CEV power~$p$.
In (d) we plot the actual small maturity smile for $p=0.2$  and $\tau=1/100$ (circles) and the zeroth (squares) and first order (diamonds) smile using Theorem~\ref{theorem:genfwdsm}.
Parameters of the model are given in the text.}
\label{fig:RandomBSSurface}
\end{figure}

%%%%%%%%%%%%%%%%%%%%%%%%%%%%%%%%%%%%%%%%%%%%%%%%
%%%%%%%%%%%%%%%%%%%%%%%%%%%%%%%%%%%%%%%%%%%%%%%%
\subsection{Application to forward smile asymptotics}\label{sec:forwardcev}
We now show how our model~\eqref{eq:Model} and the asymptotics derived above for the implied volatility
can be directly translated into asymptotics of the forward implied volatility in stochastic volatility models.
For a given martingale process~$\E^{X}$, a forward-start option with reset date~$t$, maturity~$t+\tau$
and strike~$\E^{k}$ is worth, at inception, $\EE(\exp(X_{t+\tau}-X_t) - \E^{k})_+$.
In the Black-Scholes model, the stationarity property of the increments imply that this option is simply
equal to a standard Call option on~$\E^{X}$ (started at $X_0=0$) with strike~$\E^{k}$ and maturity~$\tau$;
therefore, one can define the forward implied volatility~$\sigma_{t,\tau}(k)$, similarly to the standard implied volatility (see~\cite{JR12} for more details).
Suppose now that the log stock price process $X$ satisfies the following SDE:
\begin{equation}\label{eq:SVGeneral}
\begin{array}{rll}
\D X_s & = \displaystyle -\frac{1}{2}Y_s\D s+ \sqrt{Y_s}\D W_s, \quad & X_0=0,\\
\D Y_s & = \xi_{s} Y_s^p\D B_s, \quad & Y_0=y_0>0,\\
\D\left\langle W,B\right\rangle_s & = \rho \D s,
\end{array}
\end{equation}
with $p\in\mathbb{R}$, $|\rho|<1$ and $W, B$ are two standard Brownian motions.
Fix the forward-start date~$t>0$ and set
\begin{align*}
\xi_u : =
\left\{
\begin{array}{ll}
\displaystyle \xi, & \text{if }0\leq u \leq t, \\
\displaystyle \bar{\xi}, &\text{if } u>t,
\end{array}
\right.
\end{align*}
where $\xi>0$ and $\bar{\xi}\geq 0$.
This includes the Heston ($p=1/2$) and 3/2 ($p=3/2$) models with zero mean reversion 
as well as the SABR model ($p=1$).
Let $X_{\tau}^{(t)}:=X_{t+\tau}-X_{t}$ denote the forward price process and let $\CEV(t,\xi,p)$ be the distribution such that $\textrm{Law}(Y_t) = \textrm{Law}(\Vv) = \CEV(t,\xi,p)$.
Then the following lemma holds:
\begin{lemma}
In the model~\eqref{eq:SVGeneral} the forward price process $X_{\cdot}^{(t)}$ solves the following system of SDEs:
\begin{equation}
\begin{array}{rll}
\D X_\tau^{(t)} & = \displaystyle -\frac{1}{2} Y_\tau^{(t)}\D \tau+ \sqrt{ Y_\tau^{(t)}}\D W_\tau, \quad &  X_0^{(t)}=0,\\
\D Y_\tau^{(t)} & = \bar{\xi} \left(Y_{\tau}^{(t)}\right)^p\D B_\tau, \quad & Y_0^{(t)}\sim \CEV(t,\xi,p),\end{array}
\end{equation}
where $Y_0^{(t)}$ is independent to the Brownian motions $(W_{\tau})_{\tau\geq0}$ and $(B_{\tau})_{\tau\geq0}$.
\end{lemma}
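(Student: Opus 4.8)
The plan is to verify directly that the process $X^{(t)}_\tau := X_{t+\tau} - X_t$ solves the stated SDE system by using the time-homogeneity of the coefficients in~\eqref{eq:SVGeneral} together with the structure of~\eqref{eq:xiufwdconjecture}. First I would fix the forward-start date $t>0$ and, for $\tau\geq 0$, compute the differential $\D X^{(t)}_\tau = \D X_{t+\tau}$, reading off from~\eqref{eq:SVGeneral} that $\D X_{t+\tau} = -\tfrac12 Y_{t+\tau}\D\tau + \sqrt{Y_{t+\tau}}\,\D\widehat W_\tau$, where $\widehat W_\tau := W_{t+\tau} - W_t$ is, by stationarity and independence of Brownian increments, again a standard Brownian motion (with respect to the shifted filtration $\Ff_{t+\cdot}$). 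Defining $Y^{(t)}_\tau := Y_{t+\tau}$ and $\widehat B_\tau := B_{t+\tau} - B_t$ (another standard Brownian motion), we immediately get $X^{(t)}_0 = 0$ and the first equation of the claimed system with $W$ replaced by $\widehat W$; similarly $\D\langle \widehat W, \widehat B\rangle_\tau = \rho\,\D\tau$. Relabelling $\widehat W, \widehat B$ as $W, B$ (a harmless renaming, which I would flag explicitly) gives the first and third equations.

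Next I would handle the variance equation. Since $\tau\geq 0$ implies $t+\tau \geq t$, the piecewise definition~\eqref{eq:xiufwdconjecture} gives $\xi_{t+\tau} = \bar\xi$ for all $\tau\geq 0$, so $\D Y^{(t)}_\tau = \D Y_{t+\tau} = \xi_{t+\tau} Y_{t+\tau}^p\,\D\widehat B_\tau = \bar\xi\,(Y^{(t)}_\tau)^p\,\D\widehat B_\tau$, which is exactly the second equation. The initial condition is $Y^{(t)}_0 = Y_t$, and since the variance SDE on $[0,t]$ is $\D Y_u = \xi Y_u^p\,\D B_u$ with $Y_0 = y_0$, the law of $Y_t$ is by definition $\CEV(t,\xi,p)$; hence $Y^{(t)}_0 \sim \CEV(t,\xi,p)$, as claimed.

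Finally I would address the independence statement: $Y^{(t)}_0 = Y_t$ is $\Ff_t$-measurable, whereas the driving Brownian motions of the forward system, $(\widehat W_\tau)_{\tau\geq 0} = (W_{t+\tau}-W_t)_{\tau\geq 0}$ and $(\widehat B_\tau)_{\tau\geq 0} = (B_{t+\tau}-B_t)_{\tau\geq 0}$, have increments independent of $\Ff_t$ by the strong Markov / independent-increments property of Brownian motion; therefore $Y^{(t)}_0$ is independent of the pair $(\widehat W, \widehat B)$, and thus of the relabelled $(W,B)$. I would also remark that strong existence and uniqueness for the forward system hold on the random time interval up to absorption at the origin, inherited from the corresponding statement for the CEV dynamics recalled in Section~\ref{sec:cevrandom}. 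I do not anticipate a serious obstacle here: the only point requiring a little care is the bookkeeping around the filtration shift and the relabelling of Brownian motions, together with making sure that the condition $u>t$ versus $u\geq t$ in~\eqref{eq:xiufwdconjecture} is used consistently (so that $\xi_u = \bar\xi$ throughout the forward horizon $u = t+\tau$, $\tau\geq 0$); this is exactly why the forward variance is driven by $\bar\xi$ and not $\xi$.
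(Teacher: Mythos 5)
Your proposal is correct, and it is the natural argument: time-shift the Brownian motions by $t$, use independent increments of Brownian motion to get a new pair of correlated Brownian motions $(\widehat W,\widehat B)$ independent of $\Ff_t$, read off from~\eqref{eq:xiufwdconjecture} that $\xi_{t+\tau}=\bar\xi$ for $\tau>0$ (the single point $\tau=0$ being immaterial for the SDE), and observe that $Y^{(t)}_0=Y_t$ is $\Ff_t$-measurable, hence independent of $(\widehat W,\widehat B)$, with the stated law by the definition of $\CEV(t,\xi,p)$. The paper actually states this lemma without any proof at all, treating it as an immediate consequence of the Markov/time-homogeneous structure of~\eqref{eq:SVGeneral}, so there is nothing to compare against; your write-up simply supplies the short verification the authors elided. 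Two small remarks on phrasing: the independence of $Y_t$ from the shifted drivers is a consequence of the independent-increments property (not the strong Markov property, which concerns stopping times), and the concluding remark about strong existence and uniqueness for the forward system, while true, is not required by the statement, which only asserts that the forward process solves the given SDEs.
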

This lemma makes it clear that forward-start options in stochastic volatility models are European options on a stock price with similar dynamics to~\eqref{eq:SVGeneral}, but with initial variance sampled from the variance distribution at the forward-start date.
When $\bar{\xi}=0$, then $X_{\cdot}^{(t)}=Z$ and forward smile asymptotics follow immediately:
\begin{corollary}\label{cor:fwdsmilegene}
If $\bar{\xi}=0$, Theorem~\ref{theorem:genfwdst},
Theorem~\ref{theorem:genfwdsm} and Lemma~\ref{lem:explosiongenrem} hold with $Z=X_{\cdot}^{(t)}$ and $\sigma_{\tau}=\sigma_{t,\tau}$.
\end{corollary}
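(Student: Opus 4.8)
\emph{Proof proposal.}
The plan is to reduce the forward-start problem, when $\bar{\xi}=0$, to the already-analysed model~\eqref{eq:Model}, and then to identify the forward smile with the ordinary smile of that model. First I would invoke the preceding lemma: in the system~\eqref{eq:SVGeneral} with the piecewise-constant volatility-of-variance~\eqref{eq:xiufwdconjecture}, the forward log-price process $X^{(t)}$ solves the displayed system of SDEs, with $Y_0^{(t)}\sim\CEV(t,\xi,p)$ independent of the Brownian motions $(W_\tau)_{\tau\geq0}$ and $(B_\tau)_{\tau\geq0}$ (the independence of $Y_0^{(t)}=Y_t$, which is measurable with respect to $B_{[0,t]}$, from the post-$t$ increments is exactly what makes the correlation $\rho$ irrelevant after the reset date). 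Setting $\bar{\xi}=0$ collapses the second equation to $\D Y_\tau^{(t)} = 0$, hence $Y_\tau^{(t)}\equiv Y_0^{(t)}$ for all $\tau\geq0$, and $B$ together with $\rho$ disappears from the dynamics entirely. The first equation then reads $\D X_\tau^{(t)} = -\tfrac12 Y_0^{(t)}\D\tau + \sqrt{Y_0^{(t)}}\,\D W_\tau$ with $X_0^{(t)}=0$, which is precisely~\eqref{eq:Model} with the random variable $\Vv$ instantiated as $Y_0^{(t)}$. Since $\textrm{Law}(Y_0^{(t)}) = \CEV(t,\xi,p) = \textrm{Law}(\Vv)$ by construction, the process $X^{(t)}$ has the same law as $Z$, so every distributional statement proved for $Z$ transfers verbatim: the small-maturity call-price expansion of Theorem~\ref{theorem:genfwdst}, the implied volatility asymptotics of Theorem~\ref{theorem:genfwdsm}, and the at-the-money limit of Lemma~\ref{lem:explosiongenrem} all hold with $Z$ replaced by $X^{(t)}$.

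Second, I would close the loop between these distributional results and the forward smile. By the definition recalled in the paragraph preceding~\eqref{eq:SVGeneral}, a forward-start call with reset date~$t$, maturity $t+\tau$ and strike $\E^k$ is worth $\EE(\exp(X_{t+\tau}-X_t)-\E^k)^+ = \EE(\exp(X_\tau^{(t)})-\E^k)^+$, i.e. an ordinary European call on the martingale $\exp(X^{(t)})$ started at $1$, with strike $\E^k$ and maturity $\tau$; the forward implied volatility $\sigma_{t,\tau}(k)$ is then defined by inverting the Black-Scholes formula~\eqref{eq:BSvariance} against this price. Consequently $\sigma_{t,\tau}(k)$ coincides with the implied volatility $\sigma_\tau(k)$ attached to $Z$ in~\eqref{eq:Model}, and substituting $Z=X^{(t)}$, $\sigma_\tau = \sigma_{t,\tau}$ into the three cited results gives the corollary.

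There is essentially no analytical obstacle: the substance is the algebraic observation that $\bar{\xi}=0$ freezes the variance, together with the bookkeeping identity between forward-start and European prices. The only points deserving a line of care are (a) the integrability hypothesis $\EE(\sqrt{\Vv})<\infty$ used in Lemma~\ref{lem:explosiongenrem}, which for the $\CEV(t,\xi,p)$ law follows from the explicit density~\eqref{eq:CEVdensity} (and is, in any case, assumed in that lemma), and (b) the martingale property of $\exp(X^{(t)})$, needed to legitimise the Black-Scholes inversion defining $\sigma_{t,\tau}$, which follows from $X_0^{(t)}=0$, the a priori bound of Call prices by $1$, and the conditioning argument already employed in Section~\ref{sec:smalltimegeneralcev}.
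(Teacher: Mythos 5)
Your proof is correct and follows essentially the same route the paper (implicitly) takes: the lemma preceding the corollary gives the SDE for $X^{(t)}$, setting $\bar{\xi}=0$ freezes $Y^{(t)}\equiv Y_0^{(t)}\sim\CEV(t,\xi,p)=\textrm{Law}(\Vv)$, so $X^{(t)}$ has the same law as $Z$ in~\eqref{eq:Model}, and the forward-start payoff $\EE(\exp(X_{t+\tau}-X_t)-\E^k)^+$ is by definition the European price on $\exp(X^{(t)})$ that defines $\sigma_{t,\tau}$. The paper states this in a single line before the corollary without further elaboration; your additional remarks on the irrelevance of $\rho$ after the reset date and on the integrability of $\sqrt{\Vv}$ are accurate and harmless, though not strictly required.
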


\begin{remark}\label{rem:5Points}\
\begin{enumerate}[(i)]
\item Corollary~\ref{cor:fwdsmilegene} explicitly links the shape and fatness of the right tail of the variance distribution at the forward-start date and the asymptotic form and explosion rate of the small-maturity forward smile. 
Take for example $p>1$: the density of the variance in the right wing is dominated by the polynomial~$y^{-2p}$ and the exponential dependence on~$y$ is irrelevant. 
So the smaller $p$ in this case, the fatter the right tail and hence the larger the coefficient of the expansion.
This also explains the algebraic (not exponential) dependence for forward-start option prices.
\item The asymptotics in the $p>1$ case are extreme and the algebraic dependence on~$\tau$ 
is similar to small-maturity exponential L\'evy models.
This extreme nature is related to the fatness of the right tail of the variance distribution: 
for example, the $3/2$ model ($p=3/2$) allows for the occurrence of extreme paths with periods of very high instantaneous volatility (see~\cite[Figure 3 ]{D12}).
\item The asymptotics in Theorems~\ref{theorem:genfwdst} and~\ref{theorem:genfwdsm} remain the same (at this order) regardless of whether the variance process is absorbing or reflecting at zero when $p\in(-\infty,1/2)$.
Intuitively this is because absorption or reflection primarily influences the left tail whereas small-maturity forward smile asymptotics are influenced by the shape of the right tail of the variance distribution.
\item When $p=1/2$ in Corollary~\ref{cor:fwdsmilegene}, 
the asymptotics are the same as in~\cite[Theorem 4.1]{JR2013} for the Heston model.
This seems to indicate that the key quantity determining the small-maturity forward smile explosion rate is the variance distribution at the forward-start date, and not the actual dynamics of the stock price.
\item 
Practitioners~\cite{Ba,B02} have stated that the Heston model ($p=1/2$) produces small-maturity forward smiles that are too convex and "U-shaped" and inconsistent with observations, 
but that SABR-or lognormal-based models ($p=1$) produce less convex or "U-shaped" small-maturity forward smiles. 
Our results provide theoretical insight into this effect.
We observed in Section~\ref{sec:numericsgenfwdsmile} and Figure ~\ref{fig:RandomBSSurface} that the explosion effect was more stable for strikes close to the money as one increased~$p$.
The strike dependence of the asymptotic implied volatility in Theorem~\ref{theorem:genfwdsm} is given by $K\mapsto \sqrt{|\log K|}$ for $p=1/2$ and $K\mapsto |\log K|$ for $p=1$.
It is clear from the figures that the forward implied volatility is more U-shaped for $p\geq 1$.
\end{enumerate}
\end{remark}

%%%%%%%%%%%%%%%%%%%%%%%%%%%%%%%%%%%%%%%%%%%%%%%%
%%%%%%%%%%%%%%%%%%%%%%%%%%%%%%%%%%%%%%%%%%%%%%%%
\section{Proofs}\label{sec:ProofsCEVgeneral}

\subsection{Proof of Theorem~\ref{theorem:genfwdst}}\label{sec:ProofSection1CEV}

Let $C(k, \tau):=\mathbb{E}(\E^{Z_{\tau}} - \E^k)^+$.
This function clearly depends on the parameter~$t$, but we omit this dependence in the notations.
The tower property implies
\begin{equation}\label{eq:fstartintrep}
C(k, \tau) = \int_{0}^{\infty}\BS(k, y, \tau)\zeta_p(y)\D y + \mm_t\left(1-\E^{k}\right)^{+},
\end{equation}
where $\BS$ is defined in~\eqref{eq:BSvariance}, 
$\zeta_p$ is density of $\Vv$ given in~\eqref{eq:CEVdensity} 
and $\mm_t$ is the mass at the origin~\eqref{eq:mass0}.
Our goal is to understand the asymptotics of this integral as $\tau$ tends to zero.
We break the proof of Theorem~\ref{theorem:genfwdst} into three parts: 
in Section~\ref{sec:smmatp1} we prove the case $p>1$, in Section~\ref{subsection:p01proof} we prove the case $p\in(-\infty,1)$ and in Section~\ref{sec:smmatsecp1} we prove the case $p=1$.
We only prove the result for $k>0$, the arguments being completely analogous when~$k<0$.
The key insight is that one has to re-scale the variance in terms of the maturity $\tau$ before asymptotics can be computed.
The nature of the re-scaling depends critically on the CEV power $p$ and fundamentally different asymptotics result in each case.
Note that for $k>0$, $\left(1-\E^{k}\right)^{+}=0$, so that the atomic term in~\eqref{eq:fstartintrep} 
is irrelevant for the analysis.
When $k<0$ the arguments are analogous by Put-Call parity.

%%%%%%%%%%%%%%%%%%%%%%%%%%%%%%%%%%
%%%%%%%%%%%%%%%%%%%%%%%%%%%%%%%%%%
\subsubsection{Case: $p>1$}\label{sec:smmatp1}
In Lemma~\ref{lem:CEVBoundspgreater1} we prove a bound on the CEV density.
This is sufficient to allow us to prove asymptotics for option prices in Lemma~\ref{lem:fstpgreaterone} 
after rescaling the variance by~$\tau$.
This rescaling is critical because it is the only one making $\BS(k, y/\tau, \tau)$ independent of $\tau$.
Let 
$$\chi(\tau,p):=\frac{\tau^{2p}}
{|1-p|\xi^2t\Gamma(1+|\eta|)\Big(2(1-p)^2\xi^2 t\Big)^{|\eta|}}
\exp\left({-\frac{y_0^{2(1-p)}}{2\xi^2t(1-p)^2}}\right),
$$ 
and we have the following lemma:
\begin{lemma}\label{lem:CEVBoundspgreater1}
The following bounds hold for the CEV density for all $y,\tau>0$ when $p>1$:
\begin{align*}
\zeta_p\left(\frac{y}{\tau}\right)
 & \geq \frac{\chi(\tau,p)}{y^{2p}}
\left\{1-\frac{1}{2\xi^2t(1-p)^2}\left(\frac{\tau}{y}\right)^{2p-2}\right\}, \\ 
\zeta_p\left(\frac{y}{\tau}\right)
 & \leq \frac{\chi(\tau,p)}{y^{2p}}
\left\{1+\exp\left({\frac{y_0^{2-2p}}{2(p-1)^2t \xi^2}}\right)
\left[\frac{1}{2\xi^2t(1-p)^2}\left(\frac{\tau}{y}\right)^{2p-2}
+\frac{1}{\xi^2t(1-p)^2}\left(\frac{\tau}{y y_0}\right)^{p-1}\right]\right\}.
\end{align*}
\end{lemma}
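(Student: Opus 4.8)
The strategy is to start from the explicit CEV density $\zeta_p(y)=\varphi_\eta(y)$ in~\eqref{eq:CEVdensity} (recall that for $p>1$ the origin is never attained, so this is the norm-preserving branch with order $\eta>0$), substitute $y\mapsto y/\tau$, and extract the leading polynomial factor $y^{-2p}$ together with the constant $\chi(\tau,p)$, controlling the remaining terms by elementary two-sided bounds on the exponential and on the modified Bessel function $\II_\eta$. Writing out $\varphi_\eta(y/\tau)$ explicitly and collecting powers of $\tau$, one sees that the prefactor $\tfrac{y_0^{1/2}(y/\tau)^{1/2-2p}}{|1-p|\xi^2 t}$ produces exactly $\tau^{2p-1/2}y^{1/2-2p}/(|1-p|\xi^2 t)$; the Bessel function, for small argument, contributes its leading term, and the bookkeeping of the normalising powers is designed so that the product matches $\chi(\tau,p)y^{-2p}$ up to a factor converging to $1$ as $\tau\downarrow 0$, uniformly in a sense made precise by the inequalities in the statement.

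\textbf{Key steps, in order.} First I would recall the asymptotic and monotonicity properties of the modified Bessel function of the first kind: for $\eta>0$ and $z>0$ one has $\II_\eta(z)=\sum_{m\geq 0}\frac{(z/2)^{2m+\eta}}{m!\,\Gamma(m+\eta+1)}$, so that $\II_\eta(z)\geq \frac{(z/2)^\eta}{\Gamma(\eta+1)}$ always, and $\II_\eta(z)\leq \frac{(z/2)^\eta}{\Gamma(\eta+1)}\E^{z^2/4}$ (bounding $\Gamma(m+\eta+1)\geq m!\,\Gamma(\eta+1)$ term by term and summing the resulting exponential series, or any equivalent standard estimate). Second, I would plug $z=\frac{(y_0 y/\tau)^{1-p}}{(1-p)^2\xi^2 t}$ into these bounds; since $p>1$, $z\to 0$ as $\tau\downarrow 0$, which is why the leading Bessel term dominates. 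Third, I would handle the Gaussian-type factor $\exp\!\big(-\tfrac{(y/\tau)^{2(1-p)}+y_0^{2(1-p)}}{2\xi^2 t(1-p)^2}\big)$: the $y_0$ term is a constant pulled out into $\chi$, while $\exp\!\big(-\tfrac{(y/\tau)^{2(1-p)}}{2\xi^2 t(1-p)^2}\big)$ is sandwiched between $1-\tfrac{(y/\tau)^{2(1-p)}}{2\xi^2 t(1-p)^2}$ (from $\E^{-x}\geq 1-x$) and $1$ from above, which supplies the $\{1-\cdots\}$ correction in the lower bound. Fourth, for the upper bound I would combine the $\E^{z^2/4}$ Bessel factor and the trivial upper bound $1$ on the negative-exponential, then use $\E^{a+b}\leq 1+\E^c(a+b)$ for $0\leq a+b\leq c$ with $c=\tfrac{y_0^{2-2p}}{2(p-1)^2 t\xi^2}$ — noting $z^2/4=\tfrac{(y_0 y/\tau)^{2(1-p)}}{2(1-p)^2\xi^2 t}$ and collecting the two contributions into the bracketed terms $\tfrac{1}{2\xi^2 t(1-p)^2}(\tau/y)^{2p-2}$ and $\tfrac{1}{\xi^2 t(1-p)^2}(\tau/(yy_0))^{p-1}$. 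Finally, assembling all constant factors and checking that they reproduce $\chi(\tau,p)$ exactly (a routine verification that $\tau^{2p-1/2}$ from the prefactor times $\tau^{-|\eta|(1-p)\cdot\text{(stuff)}}$ from the Bessel term gives $\tau^{2p}$, using $|\eta|=\eta=\tfrac{1}{2(p-1)}$) completes the argument.

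\textbf{Main obstacle.} The genuinely delicate part is not any single estimate but the \emph{uniformity in $y$}: the bounds must hold simultaneously for all $y>0$, including the regime $y\to 0$ where $z\to 0$ but also $y\to\infty$ where $z\to\infty$ and the small-argument Bessel expansion degrades. Here the exponent $p>1$ is essential — it makes $(y/\tau)^{1-p}$ \emph{decreasing} in $y$, so the Gaussian factor $\exp(-(y/\tau)^{2(1-p)}/(2\xi^2 t(1-p)^2))$ actually \emph{compensates} the growth of $\II_\eta(z)$ for large $y$ (the combination $z^{\eta}\E^{-z^2/4}\cdot(\text{algebraic})$ is globally controlled), so that the same clean two-sided inequalities survive uniformly. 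I would make this precise by writing everything in terms of the single variable $w:=(y/\tau)^{1-p}$ (or $z$ directly) and observing that the resulting one-variable inequalities $\E^{-x}(1\text{-term Bessel})\leq \varphi_\eta \leq \E^{-x}\E^{z^2/4}(1\text{-term Bessel})$ reduce to the elementary facts already invoked; the bracketed error terms in the statement are then exactly the first-order remainders, which is what makes them vanish as $\tau\downarrow 0$ for each fixed $y$ and, more importantly, integrate harmlessly against $\BS(k,y/\tau,\tau)$ in the subsequent Lemma~\ref{lem:fstpgreaterone}.
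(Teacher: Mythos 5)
Your plan follows the same broad outline as the paper's proof (two-sided Bessel bounds $\Rightarrow$ two-sided density bounds $\Rightarrow$ linearisation of the exponential), but it contains a genuine error that makes the final assembly fail.

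The paper uses the two-sided bound of Luke, $\frac{1}{\Gamma(\nu+1)}(x/2)^\nu \le I_\nu(x)\le \frac{\E^x}{\Gamma(\nu+1)}(x/2)^\nu$, so that the upper-bound correction to the exponent is \emph{linear} in the Bessel argument: with $z=\frac{(y_0 y/\tau)^{1-p}}{(1-p)^2\xi^2 t}=\frac{1}{\xi^2 t(1-p)^2}\bigl(\tau/(y y_0)\bigr)^{p-1}$, this produces precisely the second term in the bracket of the statement. You instead propose $\II_\eta(z)\le \frac{(z/2)^\eta}{\Gamma(\eta+1)}\E^{z^2/4}$. This bound is valid, but (a) you then assert $z^2/4=\frac{(y_0 y/\tau)^{2(1-p)}}{2(1-p)^2\xi^2 t}$, which is incorrect — $z^2/4=\frac{(y_0 y/\tau)^{2(1-p)}}{4(1-p)^4\xi^4 t^2}$ — and (b) the correction $z^2/4$ has exponent $2(p-1)$ in $\tau/(y y_0)$, not $p-1$, so it cannot reproduce the bracketed term $\frac{1}{\xi^2 t(1-p)^2}(\tau/(y y_0))^{p-1}$ in the statement. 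Switching to $\E^z$ would require bounding $z^2/4\le z$, which only holds for $z\le 4$, whereas the lemma must hold for all $y,\tau>0$; no such restriction is available.

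A second, smaller gap: you invoke ``$\E^{a+b}\le 1+\E^c(a+b)$ for $0\le a+b\le c$'' but the exponent here is $m(y,\tau)=-x+z$ with $x,z>0$, which can be negative as well as positive, and neither your $a+b$ nor the intended quantity $|m|$ is guaranteed to lie in $[0,c]$. The paper's proof first shows that $m(\cdot,\tau)$ is maximised at $y=y_0\tau$ with maximum value $c=y_0^{2-2p}/(2(p-1)^2 t\xi^2)$, then treats $m>0$ by Taylor with remainder and $m<0$ by $\E^m\le 1+|m|$, finishing with the triangle inequality $|m|\le x+z$. You correctly guess the value of $c$, but you do not explain why this is a global maximum, nor do you handle the $m<0$ regime. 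These steps need to be made explicit before the argument is complete.
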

\begin{proof}
From~\cite[Equation (6.25)]{L72} we know that for $x>0$ and $\nu>-1/2$, 
the modified Bessel function satisfies
\begin{equation}\label{eq:CEVdensitybounds}
\frac{1}{\Gamma(\nu+1)}\left(\frac{x}{2}\right)^\nu \leq I_{\nu}(x)\leq \frac{\E^{x}}{\Gamma(\nu+1)}\left(\frac{x}{2}\right)^\nu,
\end{equation}
so that the expression for the CEV density in~\eqref{eq:CEVdensity} 
implies that for $p>1$,
$$
\frac{\chi(\tau,p)}{y^{2p}}\exp\left({-\frac{1}{2\xi^2t(1-p)^2}\left(\frac{\tau}{y}\right)^{2p-2}}\right)
\leq \zeta_p\left(\frac{y}{\tau}\right)\leq
\frac{\chi(\tau,p)}{y^{2p}}\E^{m(y,\tau)},
$$
where
$$
m(y,\tau):=
-\frac{1}{2\xi^2t(1-p)^2}\left(\frac{\tau}{y}\right)^{2p-2}
+\frac{1}{\xi^2t(1-p)^2}\left(\frac{\tau}{y y_0}\right)^{p-1}.
$$
For fixed $\tau>0$, note that $m(\cdot,\tau):\mathbb{R}_{+}\mapsto\mathbb{R}_{+}$ 
takes a maximum positive value at $y=y_0\tau$ with $m(y_0\tau,\tau) = y_0^{2-2p}/(2(p-1)^2t \xi^2)$.
When $m(\cdot)>0$ Taylor's Theorem with remainder yields
$\E^{m(y,\tau)}=1+\E^{\gamma}m(y,\tau)$ for some $\gamma\in (0,m(y,\tau))$, 
and hence $\E^{m(y,\tau)}\leq 1+\E^{m(y_0\tau,\tau)}m(y,\tau)$.
If $m(\cdot)<0$ then $\E^{m(y,\tau)}\leq 1+|m(y,\tau)| \leq 1+\E^{m(y_0\tau,\tau)}|m(y,\tau)|$.
The result for the upper bound then follows by the triangle inequality for $|m(y,\tau)|$.
The lower bound simply follows from the inequality $1-x\leq \E^{-x}$, valid for $x>0$, and 
$$
1-\frac{1}{2\xi^2t(1-p)^2}\left(\frac{\tau}{y}\right)^{2p-2}\leq \exp\left({-\frac{1}{2\xi^2t(1-p)^2}\left(\frac{\tau}{y}\right)^{2p-2}}\right).
$$
\end{proof}

\begin{lemma}\label{lem:fstpgreaterone}
When $p>1$, Theorem~\ref{theorem:genfwdst} holds.
\end{lemma}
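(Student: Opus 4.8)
The plan is to establish the case $p>1$ of Theorem~\ref{theorem:genfwdst} by inserting the density bounds of Lemma~\ref{lem:CEVBoundspgreater1} into the integral representation~\eqref{eq:fstartintrep} and performing a Watson-type Laplace analysis. Since we only treat $k>0$, the atomic term vanishes and $C(k,\tau) = \int_0^\infty \BS(k,y,\tau)\zeta_p(y)\D y$. First I would perform the substitution $y\mapsto y/\tau$, writing
\begin{equation*}
C(k,\tau) = \frac{1}{\tau}\int_0^\infty \BS\left(k,\frac{y}{\tau},\tau\right)\zeta_p\left(\frac{y}{\tau}\right)\D y,
\end{equation*}
and note that $\BS(k,y/\tau,\tau)$ is exactly $\BS(k,y,1)$ by the scaling of~\eqref{eq:BSvariance} (the variance and maturity enter only through their product $wT$), so this rescaled Black-Scholes term is independent of~$\tau$ and integrable against $y^{-2p}$ near both endpoints by the argument already given for $\JJ^{p}(k)$ in the discussion of~\eqref{eq:Ip} — near $y=0$ the Gaussian-type decay $\BS(k,y,1)\sim \exp(-k^2/(2y)+k/2) y^{3/2}/(k^2\sqrt{2\pi})$ kills any polynomial, and near $y=\infty$ we have $2p>1$.

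Next I would sandwich $C(k,\tau)$ between the two bounds of Lemma~\ref{lem:CEVBoundspgreater1}. Multiplying each bound by $\BS(k,y/\tau,\tau)/\tau = \BS(k,y,1)/\tau$ and integrating, the leading term of both the upper and lower bound is
\begin{equation*}
\frac{\chi(\tau,p)}{\tau}\int_0^\infty \BS(k,y,1)\, y^{-2p}\D y = \frac{\chi(\tau,p)}{\tau}\,\JJ^{2p}(k),
\end{equation*}
while the correction terms are of the form $\chi(\tau,p)\tau^{-1}$ times $\tau^{2p-2}\int_0^\infty \BS(k,y,1)y^{-2p-(2p-2)}\D y$ and $\tau^{p-1}\int_0^\infty \BS(k,y,1)y^{-2p-(p-1)}\D y$; both integrals converge by the same endpoint analysis (the exponent on $y$ at infinity is still $<-1$, and the Gaussian decay at zero is unaffected), so these contribute relative errors of order $\mathcal{O}(\tau^{p-1})$ — since $p>1$ implies $2p-2>p-1>0$, the dominant error is $\mathcal{O}(\tau^{p-1})$. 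Hence $C(k,\tau) = \chi(\tau,p)\tau^{-1}\JJ^{2p}(k)\bigl(1+\mathcal{O}(\tau^{p-1})\bigr)$. It then remains to check that $\chi(\tau,p)\tau^{-1}\JJ^{2p}(k)$ matches $\exp(-c_1 h_1 + c_2 h_2)\tau^{c_3}|\log\tau|^{c_4}c_5$ with the $p>1$ column of Table~\ref{tab:Table}: there $h_1 = h_2 = 0$, $c_4 = 0$, $c_3 = 2p-1$, and $c_5 = 2(p-1)\E^{-y_0^{2(p-1)}/(2\xi^2 t(1-p)^2)}\JJ^{2p}(k)\big/\bigl((2(1-p)^2\xi^2 t)^{\eta+1}\Gamma(\eta+1)\bigr)$; recalling $\eta = 1/(2(p-1))$ so $|\eta| = \eta$ and $|1-p| = p-1$, a direct algebraic comparison of $\chi(\tau,p)\tau^{-1} = \tau^{2p-1}\cdot \bigl[(p-1)\xi^2 t\Gamma(1+\eta)(2(1-p)^2\xi^2 t)^{\eta}\bigr]^{-1}\exp(-y_0^{2(1-p)}/(2\xi^2 t(1-p)^2))$ against $\tau^{2p-1}c_5/\JJ^{2p}(k)$ confirms the identity (noting $(2(1-p)^2\xi^2 t)^{\eta+1}\Gamma(\eta+1) = 2(1-p)^2\xi^2 t\cdot(2(1-p)^2\xi^2 t)^{\eta}\cdot\eta\Gamma(\eta)$ and matching the factor $2(p-1)$).

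The only genuine obstacle is the uniform integrability bookkeeping: one must verify that the error integrals in the upper bound are finite \emph{and} that their $\tau$-dependence factors out cleanly, i.e. that there is no hidden blow-up as $\tau\to 0$ coming from the region $y$ near $0$ where $\zeta_p(y/\tau)$ is being probed at large argument. This is handled precisely because Lemma~\ref{lem:CEVBoundspgreater1} gives bounds valid for \emph{all} $y,\tau>0$ with an explicit $\tau$-independent prefactor $\exp(y_0^{2-2p}/(2(p-1)^2 t\xi^2))$ on the correction terms, so the constants in the $\mathcal{O}(\tau^{p-1})$ are genuinely uniform in $y$; combined with the exponential-in-$1/y$ decay of $\BS(k,y,1)$ at the origin, dominated convergence applies and the interchange is justified. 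I would close by remarking that the $k<0$ case follows by Put-Call parity exactly as indicated in the statement.
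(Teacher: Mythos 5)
Your argument is correct and essentially identical to the paper's: same substitution $y\mapsto y/\tau$, same sandwich via Lemma~\ref{lem:CEVBoundspgreater1}, and the same identification of the leading term $\chi(\tau,p)\tau^{-1}\JJ^{2p}(k)$ with relative error $\mathcal{O}(\tau^{p-1})$ controlled by the finiteness of the error integrals $\JJ^{4p-2}(k)$ and $\JJ^{3p-1}(k)$. Your closing appeal to dominated convergence is unnecessary: since the bounds of Lemma~\ref{lem:CEVBoundspgreater1} hold for all $y,\tau>0$ simultaneously and $\BS\geq 0$, integrating them yields a direct two-sided inequality on $C(k,\tau)$, with no limit--integral interchange to justify.
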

\begin{proof}
The substitution $y\to y/\tau$ and ~\eqref{eq:fstartintrep} imply that
the option price reads 
$C(k, \tau)
 = \int_0^{\infty}\BS(k, y, \tau)\zeta_{p}(y)\D y
 = \tau^{-1}\int_0^{\infty}\BS(k, y/\tau, \tau)\zeta_{p}(y/\tau)\D y$.
Using Lemma~\ref{lem:CEVBoundspgreater1} and Definition~\eqref{eq:Ip},
we obtain the following bounds:
\begin{align*}
\frac{\chi(\tau,p)}{\tau}\left[\JJ^{2p}(k)-\frac{\tau^{2p-2}}{2\xi^2t(1-p)^2}\JJ^{4p-2}(k)\right]
\leq C(k, \tau), \\
\frac{\chi(\tau,p)}{\tau}
\left[\JJ^{2p}(k) + \exp\left({\frac{y_0^{2-2p}}{2(p-1)^2t \xi^2}}\right)
\left(\frac{\tau^{2p-2}}{2\xi^2t(1-p)^2}\JJ^{4p-2}(k)
 + \frac{\tau^{p-1}}{\xi^2t(1-p)^2 y_0^{p-1}}\JJ^{3p-1}(k)\right)\right] 
 \geq C(k, \tau).
\end{align*}
Hence for $\tau<1$:
$$
\left|\frac{C(k, \tau)\tau}{\chi(\tau,p)\JJ^{2p}(k)}-1\right|
\leq \exp\left({\frac{y_0^{2-2p}}{2(p-1)^2t \xi^2}}\right)
\left(\frac{\JJ^{4p-2}(k)}{2\xi^2t(1-p)^2 \JJ^{2p}(k)}
 + \frac{\JJ^{3p-1}(k)}{\xi^2t(1-p)^2 y_0^{p-1} \JJ^{2p}(k)}\right)
\tau^{p-1},
$$ 
which proves the lemma since $\JJ^{2p}(k)$ is strictly positive, finite and independent of $\tau$.
\end{proof}

%%%%%%%%%%%%%%%%%%%%%%%%%%%%%%%%%%%%%%%%%%%
%%%%%%%%%%%%%%%%%%%%%%%%%%%%%%%%%%%%%%%%%%%
\subsubsection{Case: $p<1$}\label{subsection:p01proof}

We use the representation in~\eqref{eq:fstartintrep} and break the domain of the integral up 
into a compact part and an infinite (tail) one.
We prove in Lemma~\ref{lem:Tailestimatep01} that the tail integral is exponentially sub-dominant 
(compared to the compact part) and derive asymptotics for the integral in Lemma~\ref{lem:fstartoptp01}. 
This allows us to apply the Laplace method to the integral.
We start with the following bound for the modified Bessel function of the first kind and then prove a tail estimate in Lemma~\ref{lem:Tailestimatep01}.

\begin{lemma}\label{lem:besselboundSeg}
The following bound holds for all $x>0$ and $\nu>-3/2$:
$$
\II_{\nu}(x)<\frac{\nu+2}{\Gamma(\nu+2)}\left(\frac{x}{2}\right)^{\nu}\E^{2x}.
$$
\end{lemma}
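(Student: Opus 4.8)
The plan is to argue directly from the series representation $\II_\nu(x)=\left(\tfrac{x}{2}\right)^\nu\sum_{m\geq0}\frac{(x/2)^{2m}}{m!\,\Gamma(m+\nu+1)}$, isolating the $m=0$ term and dominating the rest by a multiple of $\E^{2x}$. First I would estimate the tail $\sum_{m\geq1}$. Since $\nu>-3/2$, the functional equation $\Gamma(z+1)=z\Gamma(z)$ gives $\Gamma(m+\nu+1)=\Gamma(\nu+2)\prod_{j=2}^{m}(\nu+j)$, and each factor satisfies $\nu+j>j-\tfrac32>0$ for $j\geq2$; hence, using the identity $(2k-1)!!=(2k)!/(2^{k}k!)$,
\[
\Gamma(m+\nu+1)\;\geq\;\Gamma(\nu+2)\prod_{j=2}^{m}\Bigl(j-\tfrac32\Bigr)\;=\;\Gamma(\nu+2)\,\frac{(2m-2)!}{4^{m-1}(m-1)!}\qquad(m\geq1).
\]
Substituting this and simplifying via $(x/2)^{2m}4^{m-1}=x^{2m}/4$ together with the elementary bound $2m-1\leq 4^{m-1}$ (valid for $m\geq1$) bounds the tail above by $\tfrac{1}{8\Gamma(\nu+2)}\bigl(\cosh(2x)-1\bigr)$; since $\cosh(2x)-1<\E^{2x}$, this yields the uniform estimate $\sum_{m\geq1}\frac{(x/2)^{2m}}{m!\,\Gamma(m+\nu+1)}<\dfrac{\E^{2x}}{8\,\Gamma(\nu+2)}$ for all $\nu>-3/2$ and $x>0$.

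For the $m=0$ term I would invoke $\dfrac{1}{\Gamma(\nu+1)}\leq\dfrac{(\nu+1)^{+}}{\Gamma(\nu+2)}$, which holds throughout $\nu>-3/2$: it is an equality for $\nu>-1$ (because $\Gamma(\nu+2)=(\nu+1)\Gamma(\nu+1)$), it reads $0\leq0$ at $\nu=-1$, and for $\nu\in(-3/2,-1)$ the left side is negative while the right side vanishes. Adding the two pieces gives $\II_\nu(x)<\dfrac{(x/2)^\nu}{\Gamma(\nu+2)}\Bigl((\nu+1)^{+}+\tfrac18\E^{2x}\Bigr)$, so it remains only to check $(\nu+1)^{+}+\tfrac18\E^{2x}\leq(\nu+2)\E^{2x}$, i.e. $(\nu+1)^{+}\leq\bigl(\nu+\tfrac{15}{8}\bigr)\E^{2x}$. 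That is immediate: $\nu>-3/2$ forces $\nu+\tfrac{15}{8}>\tfrac38>0$, one has $\E^{2x}>1$ for $x>0$, and $(\nu+1)^{+}<\nu+\tfrac{15}{8}$ whether $\nu+1\geq0$ or $\nu+1<0$.

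The only genuinely delicate step is pushing the Gamma comparison $\Gamma(m+\nu+1)\geq\Gamma(\nu+2)(2m-2)!/(4^{m-1}(m-1)!)$ all the way down to $\nu>-3/2$: the cited two-sided estimate \cite[Eq.~(6.25)]{L72} only covers $\nu>-1/2$, so the point is to handle the small factors $\nu+2,\dots,\nu+m$ with a little more care, and everything after that is routine. As an alternative, one could instead start from \cite[Eq.~(6.25)]{L72} and use the three-term recurrence $\II_\nu(x)=\II_{\nu+2}(x)+\frac{2(\nu+1)}{x}\II_{\nu+1}(x)$ to extend the bound from $\nu>-1/2$ to $\nu>-3/2$, distinguishing $\nu+1\geq0$ from $\nu+1<0$ and using $\E^x\geq1+x+\tfrac{x^2}{2}$ respectively $\E^x\geq x^2$ to close the estimate; but the direct series computation above is more self-contained.
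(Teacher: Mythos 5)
Your argument is correct, and it is a genuinely different proof from the one in the paper. The paper proceeds in two short steps from two cited inequalities: the Tur\'an-type bound $\II_\nu(x)<\II_{\nu+1}(x)^2/\II_{\nu+2}(x)$ (Segura, \cite[Theorem 7]{S11}, valid for $\nu\geq-2$), together with the two-sided Luke bound $\frac{1}{\Gamma(\nu+1)}(x/2)^\nu\leq\II_\nu(x)\leq\frac{\E^x}{\Gamma(\nu+1)}(x/2)^\nu$ (valid for $\nu>-1/2$). Applying the upper Luke bound to $\II_{\nu+1}$ (requiring $\nu>-3/2$) and the lower Luke bound to $\II_{\nu+2}$ in the Tur\'an quotient immediately yields $\II_\nu(x)<\frac{\Gamma(\nu+3)}{\Gamma(\nu+2)^2}(x/2)^\nu\E^{2x}$, and $\Gamma(\nu+3)=(\nu+2)\Gamma(\nu+2)$ finishes it. Your proof never touches either of these facts: it argues straight from the power series, splits off the $m=0$ term, replaces $\Gamma(m+\nu+1)$ by the sharp factorial minorant $\Gamma(\nu+2)\prod_{j\geq2}(j-\tfrac32)=\Gamma(\nu+2)\frac{(2m-2)!}{4^{m-1}(m-1)!}$, and then sums against $\cosh(2x)$. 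I checked the arithmetic — the rewrite $\frac{x^{2m}}{4m(2m-2)!}=\frac{(2m-1)x^{2m}}{2(2m)!}$, the bound $2m-1\leq4^{m-1}$ giving the $\tfrac18(\cosh(2x)-1)$ estimate, the $m=0$ comparison $\frac{1}{\Gamma(\nu+1)}\leq\frac{(\nu+1)^+}{\Gamma(\nu+2)}$ (including the sign of $\Gamma$ on $(-1/2,0)$), and the closing inequality $(\nu+1)^+\leq(\nu+\tfrac{15}{8})\E^{2x}$ — and it all holds. What each buys: the paper's route is shorter but leans on a fairly specialised inequality of Segura; yours is self-contained, needing only the series, and makes visible exactly where the constraint $\nu>-3/2$ enters (the factors $\nu+j>j-\tfrac32$). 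One small note: the alternative you sketch at the end, via the three-term recurrence applied on top of Luke's bound, is also not what the paper does — the paper's trick is the Tur\'an quotient, which lets you shift indices up without any recurrence.
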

\begin{proof}
Let $x>0$.
From~\cite[Theorem 7, page 522]{S11}, the inequality
$\II_{\nu}(x)<\II_{\nu+1}(x)^2 / \II_{\nu+2}(x)$ holds whenever $\nu\geq-2$, and hence 
combining it with~\eqref{eq:CEVdensitybounds} (valid only for $\nu>-1/2$), we can write
$$
\II_{\nu}(x)<\frac{\Gamma(\nu+3)}{\Gamma(\nu+2)^2}\left(\frac{x}{2}\right)^{\nu}\E^{2x},
$$
when $\nu>-3/2$.
The lemma then follows from the trivial identity $\Gamma(\nu+3) = (\nu+2)\Gamma(\nu+2)$.
\end{proof}

\begin{lemma}\label{lem:Tailestimatep01}
Let $L>1$ and $p<1$.
Then the following tail estimate holds as $\tau$ tends to zero:
$$
\int_{L}^{\infty}\BS\left(k, \frac{y}{\tau^{\beta_p}}, \tau\right)\zeta_p\left(\frac{y}{\tau^{\beta_p}}\right)\D y
 = \mathcal{O}\left(\exp\left({-\frac{1}{4\xi^2t(1-p)}
 \left[\frac{L^{1-p}}{\tau^{(1-\beta_p)/2}} - y_0^{1-p}\right]^2}\right)\right).
$$
\end{lemma}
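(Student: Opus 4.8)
The strategy is to bound the integrand on $[L,\infty)$ from above by the product of crude bounds for the two factors, and then recognise the result as a Gaussian-type tail integral whose asymptotics are governed by the behaviour at the left endpoint $y=L$. First I would handle the Black--Scholes factor: since a Call price is always bounded above by the forward (here equal to $1$), we have $\BS(k, y/\tau^{\beta_p}, \tau)\leq 1$ for all $y,\tau>0$, which is already enough for the claimed order (a sharper bound of the form $\BS(k,w,T)\le \exp(-k^2/(2wT))$ could be used for a cleaner exponent but is not needed here). Next I would bound the density. On the range $p<1$ (with absorption or reflection) the density is $\varphi_{\pm\eta}$ from~\eqref{eq:CEVdensity}, and the only delicate ingredient is the modified Bessel function $\II_{\pm\eta}$, which is controlled by Lemma~\ref{lem:besselboundSeg}: for $x>0$ and order $\nu>-3/2$, $\II_\nu(x)<\frac{\nu+2}{\Gamma(\nu+2)}\left(\frac{x}{2}\right)^{\nu}\E^{2x}$. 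Substituting $y\mapsto y/\tau^{\beta_p}$ into $\varphi_{\pm\eta}$, the prefactor and the $\E^{2x}$ term combine with the Gaussian factor $\exp\big(-(y^{2(1-p)}+y_0^{2(1-p)})/(2\xi^2 t(1-p)^2\tau^{2\beta_p(1-p)})\big)$; using $2\beta_p(1-p) = 1-\beta_p$ (which follows from $\beta_p = 1/(3-2p)$) one sees that the dominant exponential, after completing the square, is precisely $\exp\big(-\frac{1}{4\xi^2 t(1-p)}\big[y^{1-p}\tau^{-(1-\beta_p)/2}-y_0^{1-p}\big]^2\big)$ up to a term that is polynomially bounded in $\tau$ and in $y$.

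Having reduced matters to an integral of the form $\int_L^\infty (\text{poly in } y,\tau)\,\exp\big(-a\,[y^{1-p}\tau^{-(1-\beta_p)/2}-b]^2\big)\D y$ with $a = 1/(4\xi^2 t(1-p))>0$ and $b = y_0^{1-p}$, I would estimate it by noting that on $[L,\infty)$ the function $y\mapsto [y^{1-p}\tau^{-(1-\beta_p)/2}-b]^2$ is eventually increasing in $y$ for $\tau$ small (since the bracket is positive and increasing once $y^{1-p}\tau^{-(1-\beta_p)/2}>b$, which holds for all $y\ge L$ once $\tau$ is small enough), so the whole integrand is maximised near $y=L$. A standard Laplace/Watson-type argument then gives that the integral is $\mathcal{O}\big(\exp(-a[L^{1-p}\tau^{-(1-\beta_p)/2}-y_0^{1-p}]^2)\big)$, possibly multiplied by a polynomial factor in $\tau$ which is absorbed into the $\mathcal{O}$ since the exponential decays faster than any power. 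The polynomial-in-$y$ growth of the prefactor is harmless because it is dominated by the Gaussian decay in $y$ uniformly in small $\tau$.

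The main obstacle, and the point requiring the most care, is keeping track of the exponents of $\tau$ when performing the substitution $y\mapsto y/\tau^{\beta_p}$ and verifying that all the stray powers of $\tau$ (from the prefactor $y_0^{1/2}y^{1/2-2p}/(|1-p|\xi^2 t)$ evaluated at $y/\tau^{\beta_p}$, from the $\E^{2x}$ bound on the Bessel function, and from the $y_0^{2(1-p)}$ term in the Gaussian) really are sub-exponential, i.e. at most polynomial, so that they disappear inside the $\mathcal{O}(\cdot)$. In particular one must check that the linear-in-$x$ term $\E^{2x}$ coming from Lemma~\ref{lem:besselboundSeg}, where $x = (y_0 y)^{1-p}/((1-p)^2\xi^2 t\,\tau^{\beta_p(1-p)})$, does not overwhelm the quadratic Gaussian term; since the Gaussian has a $y^{2(1-p)}$ in the exponent against $x$'s linear $y^{1-p}$, and the $\tau$-power in the Gaussian ($\tau^{-(1-\beta_p)}$) dominates that in $x$ ($\tau^{-(1-\beta_p)/2}$), the quadratic term wins for $y\ge L$ and $\tau$ small, which is exactly what makes the completed-square form above legitimate. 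Once this bookkeeping is done, the estimate follows; I would also remark that the same bound holds (with the identical leading exponent) in the reflecting case since only the order of the Bessel function changes, and Lemma~\ref{lem:besselboundSeg} covers both $\eta$ and $-\eta$ for $p<1$.
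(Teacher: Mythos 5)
Your proposal follows essentially the same route as the paper: bound $\BS$ by $1$, control the Bessel factor via Lemma~\ref{lem:besselboundSeg}, observe that the resulting linear-in-$y^{1-p}$ exponent (with $\tau$-power $\tau^{-(1-\beta_p)/2}$) is dominated by the quadratic Gaussian term (with $\tau$-power $\tau^{-(1-\beta_p)}$) once $y\ge L$ and $\tau$ is small, and then apply a Gaussian-tail estimate anchored at $y=L$ — exactly what the paper does via the substitution $q=(y^{1-p}\tau^{-\beta_p(1-p)}-y_0^{1-p})/(\xi\sqrt{t}(1-p))$ and the choice $\tau<\tau^*$. One small caveat: the phrase ``after completing the square, is precisely $\exp(-\frac{1}{4\xi^2t(1-p)}[\cdots]^2)$'' is not literally right (completing the square shifts the centre and gives coefficient $1/(2\xi^2t(1-p)^2)$); the factor $1/4$ arises, as in the paper, from absorbing the linear term into a portion of the quadratic on the region where $q$ is large, which is the point you correctly identify in your second paragraph.
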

\begin{proof}
Lemma~\ref{lem:besselboundSeg} and the density in~\eqref{eq:CEVdensity} imply
$$
\zeta_p\left(\frac{y}{\tau^{\beta_p}}\right)
\leq
\frac{b_0}
{\tau^{-2p\beta_p }}
y^{-2p}
\exp\left(-\frac{1}{2\xi^2t(1-p)^2}\left\{\frac{y^{1-p}}{\tau^{\beta_p(1-p)}} - y_0^{1-p}\right\}^2
+\frac{(yy_0)^{1-p}}{\tau^{\beta_p(1-p)}\xi^2t(1-p)^2}\right),
$$
where the constant $b_0$ is given by 
$$
\frac{(\eta+2)}
{|1-p|\xi^2t\Gamma(\eta+2)\Big(2(1-p)^2\xi^2 t\Big)^{\eta}},
\qquad
\text{resp.}
\qquad
\frac{(|\eta|+2)}
{|1-p|\xi^2t\Gamma(|\eta|+2)\Big(2(1-p)^2\xi^2 t\Big)^{|\eta|}},
$$
if the origin is reflecting (resp. absorbing) when $p<1/2$;
the exact value of $b_0$ is however irrelevant for the analysis.
Set now $L>1$.
Using this upper bound and the no-arbitrage inequality $\BS(\cdot)\leq 1$, we find
\begin{align*}
 & \int_{L}^{\infty}\BS\left(k, \frac{y}{\tau^{\beta_p}}, \tau\right)
\zeta_p\left(\frac{y}{\tau^{\beta_p}}\right)\D y
\leq 
\int_{L}^{\infty}\zeta_p\left(\frac{y}{\tau^{\beta_p}}\right)\D y\\
 & \displaystyle \leq 
\frac{b_0}
{\tau^{-2p\beta_p }}
\int_{L}^{\infty}y^{-2p}
\exp\left(-\frac{1}{2\xi^2t(1-p)^2}\left\{\frac{y^{1-p}}{\tau^{\beta_p(1-p)}} - y_0^{1-p}\right\}^2
+\frac{(yy_0)^{1-p}}{\tau^{\beta_p(1-p)}\xi^2t(1-p)^2}\right)\D y \\
 & \displaystyle \leq 
\frac{b_0}
{\tau^{-2p\beta_p }}
\int_{L}^{\infty}y^{1-2p}
\exp\left(-\frac{1}{2\xi^2t(1-p)^2}\left\{\frac{y^{1-p}}{\tau^{\beta_p(1-p)}} - y_0^{1-p}\right\}^2
+\frac{(yy_0)^{1-p}}{\tau^{\beta_p(1-p)}\xi^2t(1-p)^2}\right)\D y,
\end{align*}
where the last line follows since $y^{1-2p}>y^{-2p}$.
Setting $q=\left(y^{1-p}/\tau^{\beta_p(1-p)} - y_0^{1-p}\right)/(\xi\sqrt{t}(1-p))$ yields
\begin{align}\label{eq:boundsreflecting}
&\int_{L}^{\infty}y^{1-2p}
\exp\left(-\frac{\left(\frac{y^{1-p}}{\tau^{\beta_p(1-p)}} - y_0^{1-p}\right)^2}{2\xi^2t(1-p)^2}
+\frac{(yy_0)^{1-p}}{\tau^{\beta_p(1-p)}\xi^2t(1-p)^2}\right)\D y \nonumber\\
 &= \frac{\xi\sqrt{t} (1-p)}{\tau^{2\beta_p(p-1)}}
 \left[\xi\sqrt{t}(1-p)\int_{L_\tau}^{\infty}q\exp\left[-\frac{q^2}{2}+\frac{y_0^{1-p}q}{\xi\sqrt{t}(1-p)}\right]\D q 
  + y_0^{1-p}\int_{L_\tau}^{\infty}\exp\left[-\frac{q^2}{2}+\frac{y_0^{1-p}q}{\xi\sqrt{t}(1-p)}\right]\D q\right],
\end{align}
with $L_\tau:=\left(L^{1-p}/\tau^{\beta_p(1-p)}-y_0^{1-p}\right)/(\xi\sqrt{t}(1-p)) >0$
for small enough $\tau$ since $L>1$ and $p\in(-\infty,1)$.
Set now (we always choose the positive root)
$$
\tau^*:=
\left(\frac{L^{1-p}}{5y_0^{1-p}}\right)^{\left(\beta_{p}(1-p)\right)^{-1}},
$$
so that, for $\tau<\tau^*$ we have $L_{\tau}>4y_0^{1-p}/(\xi\sqrt{t}(1-p))$ and hence for $q>L_{\tau}$:
$$
\frac{y_0^{1-p}q}{\xi\sqrt{t}(1-p}\leq \frac{q^2}{4}.
$$
In particular, for the integrals in~\eqref{eq:boundsreflecting} we have the following bounds for $\tau<\tau^*$:
\begin{align*}
&\int_{L_\tau}^{\infty}q\exp\left(-\frac{q^2}{2}+\frac{y_0^{1-p}q}{\xi\sqrt{t}(1-p)}\right)\D q
\leq
\int_{L_\tau}^{\infty}q\exp\left(-\frac{q^2}{4}\right)\D q
 = 2\exp\left(-\frac{L_\tau^2}{4}\right), \\
&\int_{L_\tau}^{\infty}\exp\left(-\frac{q^2}{2}+\frac{y_0^{1-p}q}{\xi\sqrt{t}(1-p)}\right)\D q
\leq
\int_{L_\tau}^{\infty}\exp\left(-\frac{q^2}{4}\right)\D q
\leq \frac{4}{L_{\tau}}\exp\left(-\frac{L_\tau^2}{4}\right),
\end{align*}
where the last inequality follows from the upper bound
for the complementary normal distribution function in~\cite[Section 14.8]{W03}.
The lemma then follows from noting that $1-\beta_p=2\beta_p(1-p)$.
\end{proof}

\begin{lemma}\label{lem:fstartoptp01}
When $p<1$, Theorem~\ref{theorem:genfwdst} holds.
\end{lemma}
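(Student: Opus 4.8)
The plan is to extract from the integral representation~\eqref{eq:fstartintrep} (the atomic term vanishing since $k>0$) the scale at which the variance feeds the small-maturity asymptotics, and then to run a Laplace argument with a genuinely two-scale exponent. First I would substitute $y\mapsto y/\tau^{\beta_p}$, so that
\[
C(k,\tau)=\tau^{-\beta_p}\int_0^\infty \BS(k,y\tau^{-\beta_p},\tau)\,\zeta_p(y\tau^{-\beta_p})\,\D y.
\]
The exponent $\beta_p=1/(3-2p)$ is the unique one that makes the Black--Scholes singularity $\E^{-k^2\tau^{\beta_p-1}/(2y)}$ and the Gaussian-type tail $\E^{-y^{2(1-p)}\tau^{\beta_p-1}/(2\xi^2 t(1-p)^2)}$ of the CEV density~\eqref{eq:CEVdensity} share the same rate $\tau^{\beta_p-1}$ (using the identity $2\beta_p(1-p)=1-\beta_p$), so that the leading exponent of the integrand becomes $-\tau^{\beta_p-1}f_0(y)$ with $f_0$ as in~\eqref{eq:f0g0smmat}. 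A direct computation gives $f_0'(y)=-k^2/(2y^2)+y^{1-2p}/(\xi^2 t(1-p))$, whose unique zero on $(0,\infty)$ is $\yyp$; moreover $f_0$ is strictly decreasing on $(0,\yyp)$, strictly increasing on $(\yyp,\infty)$, and $f_0(0^+)=f_0(\infty)=+\infty$, so $\yyp$ is a nondegenerate global minimiser and $c_1(t,p)=f_0(\yyp)$.

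Next I would split the integral at levels $0<\delta<\yyp<L$. The tail $\int_L^\infty$ is handled by Lemma~\ref{lem:Tailestimatep01}: its bound is of order $\exp\big(-\tfrac{L^{2(1-p)}}{4\xi^2 t(1-p)}\tau^{\beta_p-1}(1+o(1))\big)$, so choosing $L$ large enough that $L^{2(1-p)}/(4\xi^2 t(1-p))>f_0(\yyp)$ makes it exponentially negligible relative to $\E^{-c_1 h_1}$. On $(0,\delta]$ the Black--Scholes factor is at most $\BS(k,\delta\tau^{-\beta_p},\tau)=\mathcal{O}(\E^{-k^2\tau^{\beta_p-1}/(2\delta)})$, so choosing $\delta$ small enough that $k^2/(2\delta)>f_0(\yyp)$ renders that piece exponentially negligible as well (this also absorbs the blow-up of $\zeta_p$ at the origin when $p<1/2$, since $\int_0^{\delta\tau^{-\beta_p}}\zeta_p\leq 1$). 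On the remaining compact interval $[\delta,L]$ I would insert two expansions, both uniform for $y$ in compact subsets of $(0,\infty)$: the small-maturity Black--Scholes expansion (as in~\cite[Corollary 3.5]{FJL12} or~\cite{GL11})
\[
\BS(k,y\tau^{-\beta_p},\tau)=\frac{y^{3/2}\tau^{3(1-\beta_p)/2}}{k^2\sqrt{2\pi}}\exp\Big(\frac{k}{2}-\frac{k^2\tau^{\beta_p-1}}{2y}\Big)\big(1+\mathcal{O}(\tau^{1-\beta_p})\big),
\]
and the large-argument asymptotics $\II_{\pm\eta}(z)=\E^{z}(2\pi z)^{-1/2}(1+\mathcal{O}(z^{-1}))$ from~\cite[Section 9.7]{Abra}, applied with $z=(y_0 y)^{1-p}\tau^{-\beta_p(1-p)}/((1-p)^2\xi^2 t)\to\infty$. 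At leading order the Bessel asymptotics ignore the sign of the index, which is exactly why the absorbing and reflecting specifications in~\eqref{eq:CEVdensity} (for $p<1/2$) yield the same expansion. Substituting and collecting the polynomial and $\tau$-powers reduces the compact piece to $A(\tau)\int_\delta^L g(y)\exp\big(-\tau^{\beta_p-1}f_0(y)+\tau^{(\beta_p-1)/2}f_1(y)\big)\,\D y$ for an explicit prefactor $A(\tau)$ and an explicit smooth positive $g$, with $f_1$ as in~\eqref{eq:f0g0smmat}.

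The last step is a Laplace analysis of this integral, and this is where the main obstacle lies: the subleading exponential $\exp(\tau^{(\beta_p-1)/2}f_1)$ is not a slowly varying amplitude --- it genuinely shifts the Gaussian --- so one must Taylor-expand $f_0$ to second order and $f_1$ to first order around $\yyp$ and then rescale $y=\yyp+\tau^{(1-\beta_p)/2}s$. Under this change of variables the quadratic part of $f_0$ contributes $\tfrac{1}{2}f_0''(\yyp)s^2$ and the linear part of $f_1$ contributes $f_1'(\yyp)s$, both of order one, while the cubic term of $f_0$ and the quadratic term of $f_1$ contribute relative corrections of order $\tau^{(1-\beta_p)/2}$. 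Extending the $s$-integral to $\RR$ (the Gaussian kills the tails) and completing the square produces $\sqrt{2\pi/f_0''(\yyp)}\,\exp\big(f_1'(\yyp)^2/(2f_0''(\yyp))\big)$ together with the Jacobian $\tau^{(1-\beta_p)/2}$. Gathering the constants, the powers of $y_0$ and of $\yyp$, and all the powers of $\tau$ then reproduces exactly $\E^{-c_1(t,p)h_1(\tau,p)+c_2(t,p)h_2(\tau,p)}\tau^{c_3(t,p)}c_5(t,p)$ with $c_4(t,p)=0$; in particular a short arithmetic check confirms that the accumulated $\tau$-exponent equals $(6-5p)/(6-4p)=c_3(t,p)$. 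The error $\mathcal{R}(\tau,p)=\mathcal{O}(\tau^{(1-\beta_p)/2})$ is the largest of the Black--Scholes error $\mathcal{O}(\tau^{1-\beta_p})$, the Bessel error $\mathcal{O}(\tau^{\beta_p(1-p)})=\mathcal{O}(\tau^{(1-\beta_p)/2})$, and the cubic Laplace correction $\mathcal{O}(\tau^{(1-\beta_p)/2})$. The delicate points throughout are to keep the linear term of $f_1$ inside the Gaussian rather than treating $\exp(\tau^{(\beta_p-1)/2}f_1)$ as a harmless factor, and to verify that the Black--Scholes and Bessel expansions are uniform over the shrinking window $\yyp+\tau^{(1-\beta_p)/2}[-M,M]$, which is legitimate since that window stays inside a fixed compact subset of $(0,\infty)$.
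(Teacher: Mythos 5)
Your proposal reproduces the paper's argument: the same substitution $y\mapsto y\tau^{-\beta_p}$, the same identification of $\beta_p$ via $1-\beta_p=2\beta_p(1-p)$, the same use of Lemma~\ref{lem:Tailestimatep01} for the tail, the small-maturity Black--Scholes expansion and the large-argument Bessel asymptotics (and the observation that these are sign-blind in $\eta$, making absorption and reflection indistinguishable at this order), and the same two-scale Laplace step in which the linear drift from $f_1$ must be completed against the Gaussian from $f_0''$ rather than treated as a harmless prefactor. The one place where you are arguably more careful than the paper is the extra cutoff at $\delta$: the paper applies the Bessel asymptotics on all of $(0,L]$, which is not uniform near the origin where the Bessel argument $(y_0 y)^{1-p}\tau^{-\beta_p(1-p)}/((1-p)^2\xi^2 t)$ can stay bounded, whereas you isolate $(0,\delta]$ and kill it using the Black--Scholes decay $\exp(-k^2\tau^{\beta_p-1}/(2\delta))$ together with $\int_0^{\delta\tau^{-\beta_p}}\zeta_p\leq 1$. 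This is a genuine tidying-up of a point the paper glosses over, but it does not change the route or the constants; both proofs then recombine the three error sources (Black--Scholes, Bessel, cubic Laplace) to reach $\mathcal{R}(\tau,p)=\mathcal{O}(\tau^{(1-\beta_p)/2})$ in the same way.
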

\begin{proof}
Let $\ttau:=\tau^{\beta_p}$, with $\beta_p$ defined in~\eqref{eq:betapalphasmmat}.
Applying the substitution $y\to y/\ttau$ to~\eqref{eq:fstartintrep} yields
\begin{align*}
C(k, \tau)
 & = \int_0^{\infty}\BS(k, y, \tau)\zeta_p(y)\D y
 = \frac{1}{\ttau}\int_0^{\infty}\BS\left(k, \frac{y}{\ttau}, \tau\right)\zeta_p\left(\frac{y}{\ttau}\right)\D y\\
 & = \frac{1}{\ttau}\int_0^{L}
\BS\left(k, \frac{y}{\ttau},\tau\right)\zeta_p\left(\frac{y}{\ttau}\right)\D y
 + \frac{1}{\ttau}\int_L^{\infty}\BS\left(k, \frac{y}{\ttau}, \tau\right)\zeta_p\left(\frac{y}{\ttau}\right)\D y,
\end{align*}
for some $L>0$ to be chosen later.
We start with the first integral.
Using the asymptotics for the modified Bessel function of the first kind~\cite[Section 9.7.1]{Abra} 
as $\tau$ tends to zero, 
we obtain
$$
\zeta_p\left(\frac{y}{\ttau}\right)
 = \frac{\tau^{3p\beta_p/2}y_0^{p/2}\E^{-\frac{y_0^{2(1-p)}}{2\xi^2t(1-p)^2}}}{\xi y^{3p/2}\sqrt{2\pi t}}	
\exp\left(-\frac{1}{\tau^{2\beta_p(1-p)}}\frac{y^{2(1-p)}}{2\xi^2t(1-p)^2}
+\frac{1}{\tau^{\beta_p(1-p)}}\frac{(yy_0)^{(1-p)}}{\xi^2t(1-p)^2}\right)
\left[1+\mathcal{O}\left(\tau^{(1-p)\beta_p}\right)\right].
$$
Note that this expansion does not depend on the sign of~$\eta$ and so the same asymptotics hold 
regardless of whether the origin is reflecting or absorbing.
In the Black-Scholes model, Call option prices satisfy
(Lemma~\ref{lem:BSsmalltimecev}):
\begin{equation*}%\label{eq:SmallTimeBS}
\BS\left(k, \frac{y}{\ttau},\tau\right)
=
\frac{y^{3/2}}{k^2\sqrt{2\pi}}\left(\frac{\tau}{\ttau}\right)^{3/2}
\exp\left(-\frac{k^2}{2 y}\frac{\ttau}{\tau}+\frac{k}{2}\right)
\left(1 + \mathcal{O}\left(\frac{\tau}{\ttau}\right)\right),
\end{equation*}
as $\tau$ tends to zero.
Using the identity $1-\beta_p=2\beta_p(1-p)$ we then compute 
\begin{multline*}
\frac{1}{\tau^{\beta_p}}\int_{0}^{L}\BS\left(k, \frac{y}{\tau^{\beta_p}}, \tau\right) \zeta_p\left(\frac{y}{\tau^{\beta_p}}\right) \D y \\
=
\frac{\tau^{\beta_{p}(4-3p)/2}y_0^{p/2}\E^{-\frac{y_0^{2(1-p)}}{2\xi^2t(1-p)^2}+\frac{k}{2}}}{2\pi k^2\xi\sqrt{t}}
\int_{0}^{L}y^{\frac{3}{2}(1-p)} \E^{-\frac{f_0(y)}{\tau^{1-\beta_p}}
 + \frac{f_1(y)}{\tau^{(1-\beta_p)/2}}} \D y
\left[1+\mathcal{O}\left(\tau^{(1-\beta_p)/2}\right)\right],
\end{multline*}
where $f_0,f_1$ are defined in~\eqref{eq:f0g0smmat}.
Solving the equation $f'_0(y)=0$ gives $y=\yyp$ with $\overline{y}_p$ defined in~\eqref{eq:betapalphasmmat} 
and we always choose the positive root and set $L>\overline{y}_p$.
Let 
$I(\tau) : = 
\int_{0}^{L}y^{\frac{3}{2}(1-p)}
\exp\left(-\frac{f_0(y)}{\tau^{1-\beta_p}}+\frac{f_1(y)}{\tau^{(1-\beta_p)/2}}\right) \D y$. 
Then for some $\varepsilon>0$ small enough, as $\tau$ tends to zero, the asymptotic equivalences
\begin{align*}
I(\tau)
&\sim 
\exp\left(-\frac{f_0(\overline{y}_p)}{\tau^{1-\beta_p}}+\frac{f_1(\overline{y}_p)}{\tau^{(1-\beta_p)/2}}+\frac{f_1'(\overline{y}_p)^2}{2f_0''(\overline{y}_p)}\right)
\overline{y}_p^{\frac{3}{2}(1-p)}
\int_{\overline{y}_p-\varepsilon}^{\overline{y}_p+\varepsilon}\exp\left({-\frac{1}{2}\left[\frac{\sqrt{f''_0(\overline{y}_p)}(y-\overline{y}_p)}{\tau^{(1-\beta_p)/2}}-\frac{f_1'(\overline{y}_p)}{\sqrt{f''_0(\overline{y}_p)}}\right]^2} \right)
\D y  \\
&\sim
\exp\left(-\frac{f_0(\overline{y}_p)}{\tau^{1-\beta_p}}+\frac{f_1(\overline{y}_p)}{\tau^{(1-\beta_p)/2}}+\frac{f_1'(\overline{y}_p)^2}{2f_0''(\overline{y}_p)}\right)
\overline{y}_p^{\frac{3}{2}(1-p)}
\int_{-\infty}^{\infty}\exp\left({-\frac{1}{2}\left[\frac{\sqrt{f''_0(\overline{y}_p)}(y-\overline{y}_p)}{\tau^{(1-\beta_p)/2}}-\frac{f_1'(\overline{y}_p)}{\sqrt{f''_0(\overline{y}_p)}}\right]^2} \right)
\D y \\
& =
\exp\left(-\frac{f_0(\overline{y}_p)}{\tau^{1-\beta_p}}+\frac{f_1(\overline{y}_p)}{\tau^{(1-\beta_p)/2}}+\frac{f_1'(\overline{y}_p)^2}{2f_0''(\overline{y}_p)}\right)
\tau^{(1-\beta_p)/2}\overline{y}_p^{\frac{3}{2}(1-p)}\sqrt{\frac{2\pi}{f_0''(\overline{y}_p)}}.
\end{align*}
hold.
It follows that  as $\tau$ tends to zero:
$$
\frac{1}{\tau^{\beta_p}}\int_{0}^{L}\BS\left(k, \frac{y}{\tau^{\beta_p}}, \tau\right)
\zeta_p\left(\frac{y}{\beta_p}\right) \D y
=
\exp\left(-\frac{c_1(t,p)}{\tau^{1-\beta_p}}+\frac{c_2(t,p)}{\tau^{(1-\beta_p)/2}}\right)
c_5(t,p)\tau^{c_3(t,p)}
\left[1+\mathcal{O}\left(\tau^{\frac{1-\beta_p}{2}}\right)\right].
$$
From Lemma~\ref{lem:Tailestimatep01} we know that 
$$
\frac{1}{\tau^{\beta_p}}\int_{L}^{\infty}
\BS\left(k, \frac{y}{\tau^{\beta_p}}, \tau\right)\zeta_p(y/\beta_p) \D y
 = \mathcal{O}\left(\exp\left({-\frac{1}{2\xi^2t(1-p)}\left(\frac{L^{1-p}}{\tau^{(1-\beta_p)/2}}-y_0^{1-p}\right)^2}\right)\right).
$$
Choosing $L>\max\left(1,\left(2\xi^2 t(1-p)f_0(\yyp)\right)^{1/(2-2p)},\yyp\right)$ 
makes this tail term exponentially subdominant to 
$\tau^{-\beta_p}\int_{0}^{L}\BS(k, y/\tau^{\beta_p}, \tau) \zeta_p(y/\beta_p) \D y$,
which completes the proof of the lemma.
\end{proof}

%%%%%%%%%%%%%%%%%%%%%%%%%%%%%%%%%%%%%%
%%%%%%%%%%%%%%%%%%%%%%%%%%%%%%%%%%%%%%
\subsubsection{Case: $p=1$}\label{sec:smmatsecp1}

In the lognormal case $p=1$, the random variable~$\log(\Vv)$ is Gaussian
with mean $\mu$ (defined in~\eqref{eq:constantsCEV}) and variance $\xi^2 t$.
The proof is similar to Section~\ref{subsection:p01proof}, 
but we need to re-scale the variance by $\tau|\log(\tau)|$.
We prove a tail estimate in Lemma~\ref{lem:Tailestimatepequals1} and derive asymptotics 
for option prices in Lemma~\ref{lemma:fwdstartoptp1}.

\begin{lemma}\label{lem:Tailestimatepequals1}
The following tail estimate holds for $p=1$ and $L>0$ as $\tau$ tends to zero ($\mu$ defined in~\eqref{eq:constantsCEV}):
$$
\int_{L}^{\infty}\BS\left(k, \frac{y}{\tau|\log(\tau)|}, \tau\right)\zeta_1\left(\frac{y}{\tau|\log(\tau)|}\right)\D y
 = \mathcal{O}\left(\exp\left\{-\frac{1}{2\xi^2 t}\left[\log\left(\frac{L}{\tau|\log(\tau)|}\right)-\mu\right]^2\right\}\right).
$$
\end{lemma}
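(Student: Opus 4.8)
The plan is to mimic the proof of Lemma~\ref{lem:Tailestimatep01}: we discard the Black--Scholes factor through the no-arbitrage bound $\BS(\cdot)\leq1$, and then we are left with a genuine tail of the rescaled variance density. Since $p=1$, the density $\zeta_1$ in~\eqref{eq:CEVdensity} is norm preserving ($\mm_t=0$) and $\log(\Vv)$ is Gaussian with mean~$\mu$ and variance~$\xi^2t$. First I would write, for all $\tau>0$,
$$
\int_{L}^{\infty}\BS\left(k, \frac{y}{\tau|\log(\tau)|}, \tau\right)\zeta_1\left(\frac{y}{\tau|\log(\tau)|}\right)\D y
\leq \int_{L}^{\infty}\zeta_1\left(\frac{y}{\tau|\log(\tau)|}\right)\D y
= \tau|\log(\tau)|\int_{c_\tau}^{\infty}\zeta_1(w)\D w,
$$
where $c_\tau := L/(\tau|\log(\tau)|)$ and we used the substitution $w=y/(\tau|\log(\tau)|)$.

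Next I would recognise the remaining integral as a Gaussian tail probability: since $\log(\Vv)\sim\Nn(\mu,\xi^2 t)$, we have $\int_{c_\tau}^{\infty}\zeta_1(w)\D w=\PP(\Vv>c_\tau)=\overline{\Nn}(x_\tau)$, with $\overline{\Nn}:=1-\Nn$ and $x_\tau:=(\log(c_\tau)-\mu)/(\xi\sqrt{t})$. The key observation is that $\tau|\log(\tau)|\downarrow0$ as $\tau\downarrow0$, so $\log(c_\tau)\to+\infty$, and hence $x_\tau>0$ with $x_\tau\to+\infty$ for $\tau$ small enough. Applying the standard bound for the complementary normal distribution function~\cite[Section 14.8]{W03}, $\overline{\Nn}(x)\leq(x\sqrt{2\pi})^{-1}\exp(-x^2/2)$ valid for $x>0$, gives
$$
\tau|\log(\tau)|\int_{c_\tau}^{\infty}\zeta_1(w)\D w
\leq \frac{\xi\sqrt{t}\,\tau|\log(\tau)|}{\sqrt{2\pi}\,(\log(c_\tau)-\mu)}
\exp\left(-\frac{(\log(c_\tau)-\mu)^2}{2\xi^2t}\right).
$$
Since the prefactor tends to zero (in particular it is bounded), it can be absorbed into the $\mathcal{O}(\cdot)$; substituting back $c_\tau=L/(\tau|\log(\tau)|)$ produces exactly the claimed estimate.

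The whole argument is essentially routine once the rescaling $y\mapsto y/(\tau|\log(\tau)|)$ has been identified, and the only point requiring a little care is verifying that $x_\tau$ is positive for $\tau$ small, which is needed to invoke the Mills-ratio bound; this follows from $\log(\tau|\log(\tau)|)\to-\infty$. The genuinely non-obvious ingredient is \emph{why} this specific rescaling (and not a power of~$\tau$, as for $p\neq1$) is the natural one, but this is dictated by the Laplace analysis of the compact part in Lemma~\ref{lemma:fwdstartoptp1} and plays no role in the tail estimate itself.
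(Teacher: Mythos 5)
Your proposal is correct and follows essentially the same route as the paper: bound the Black--Scholes factor by~$1$ (no-arbitrage), recognise the rescaled lognormal density as a Gaussian tail in log-scale, and apply the complementary-normal (Mills-ratio) bound from~\cite[Section 14.8]{W03}. The only cosmetic difference is that the paper substitutes directly to the standardised Gaussian variable $q$, whereas you first change variables to $w=y/(\tau|\log\tau|)$ and then standardise; the bounded prefactor you absorb into the $\mathcal{O}(\cdot)$ is the same Jacobian that the paper's one-step substitution handles implicitly.
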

\begin{proof}
By no-arbitrage arguments, the Call price is always bounded above by one, so that
$$
\int_{L}^{\infty}\BS\left(k, \frac{y}{\tau|\log(\tau)|}, \tau\right)\zeta_1\left(\frac{y}{\tau|\log(\tau)|}\right)\D y
 \leq \int_{L}^{\infty}\zeta_1\left(\frac{y}{\tau|\log(\tau)|}\right)\D y.
$$
With the substitution $q = \frac{1}{\xi\sqrt{t}}[\log(y/(\tau|\log(\tau)|))-\mu]$,
the lemma follows from the bound for the complementary Gaussian distribution function~\cite[Section 14.8]{W03}.
\end{proof}

\begin{lemma}\label{lemma:fwdstartoptp1}
Let $p=1$.
The following expansion holds for option prices as $\tau$ tends to zero:
$$
C(k, \tau)
=
c_5(t,1)\exp\Big(-c_1(t,1)h_1(\tau,p)+c_2(t,1)h_2(\tau,p)\Big)
\tau^{c_3(t,1)} |\log(\tau)|^{c_4(t,1)}
\left(1+\mathcal{O}\left(\frac{1}{|\log(\tau)|}\right)\right),
$$
with the functions $c_1,c_2,...,c_5$, $h_1$ and $h_2$ given in Table~\ref{tab:Table}.
\end{lemma}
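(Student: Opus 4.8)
The plan is to follow the $p<1$ argument of Section~\ref{subsection:p01proof}, with two changes: the Bessel-type density is replaced by the \emph{exact} lognormal $\zeta_1$ (so no density expansion is needed), and the variance is rescaled by $\tau|\log\tau|$ rather than $\tau^{\beta_p}$. It suffices to treat $k>0$ (the case $k<0$ follows by put-call parity); then $(1-\E^k)^+=0$ and $\mm_t=0$, so the atomic term in~\eqref{eq:fstartintrep} drops out. Substituting $y\mapsto y/(\tau|\log\tau|)$ yields
\begin{equation*}
C(k,\tau)=\frac{1}{\tau|\log\tau|}\int_0^\infty\BS\left(k,\frac{y}{\tau|\log\tau|},\tau\right)\zeta_1\left(\frac{y}{\tau|\log\tau|}\right)\D y,
\end{equation*}
which I split as $\int_0^L+\int_L^\infty$ for a fixed $L$ to be chosen. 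This rescaling is the right one because it makes the Black--Scholes total variance $w\tau=y/|\log\tau|$ tend to $0$, so that the small-maturity expansion of Lemma~\ref{lem:BSsmalltimecev} applies, while keeping the leading scale of the Gaussian exponent of $\zeta_1$ matched to that of the Black--Scholes exponent.

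I would first dispose of the tail: by Lemma~\ref{lem:Tailestimatepequals1}, $\tfrac{1}{\tau|\log\tau|}\int_L^\infty(\cdots)\D y=\mathcal{O}\big(\exp\big(-\tfrac1{2\xi^2t}[\log(L/(\tau|\log\tau|))-\mu]^2\big)\big)$; writing $\log(\tau|\log\tau|)=\log\tau+\log\log(\tau^{-1})$ and expanding the square, one checks that for any $L>\E\,y^*$ (with $y^*=k^2\xi^2t/2$ from~\eqref{eq:betapalphasmmat}) this is exponentially subdominant to the compact contribution found below.

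For the compact part, insert the expansion of Lemma~\ref{lem:BSsmalltimecev} and the explicit form of $\zeta_1$; after cancelling the $\tau|\log\tau|$ factors and using $\log(\tau|\log\tau|)=-|\log\tau|+\log|\log\tau|$, the integrand becomes
\begin{equation*}
\frac{y^{1/2}}{2\pi k^2\xi\sqrt{t}\,|\log\tau|^{3/2}}\exp\left(\frac{k}{2}-c_1(t,1)h_1(\tau,1)-|\log\tau|\,\phi(y)+\frac{\log|\log\tau|}{\xi^2t}(\log y-\mu)-\frac{(\log y-\mu)^2}{2\xi^2t}\right)\bigl(1+\mathcal{O}(|\log\tau|^{-1})\bigr),
\end{equation*}
where $\phi:=g_0-\mu/(\xi^2t)$ with $g_0$ from~\eqref{eq:f0g0smmat}; the first two terms inside the exponential are $y$-independent (recall $c_1(t,1)=1/(2\xi^2t)$ and $h_1(\tau,1)=(\log\tau+\log\log\tau^{-1})^2$). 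Since $\phi'<0$ on $(0,y^*)$ and $\phi'>0$ on $(y^*,\infty)$, $\phi$ attains its global minimum at the interior point $y^*$, with $\phi''(y^*)=1/((y^*)^2\xi^2t)>0$.

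The last and most delicate step is a Laplace expansion around $y^*$, of the type used in~\cite{Gerhold} for the $p<1$ case. Taylor-expanding $\phi$ to second order at $y^*$ and $\log y$, $(\log y-\mu)^2$ to first order, then completing the square in $(y-y^*)$, the quadratic term $-\tfrac12|\log\tau|\phi''(y^*)(y-y^*)^2$ absorbs the linear term into a shifted Gaussian centred at $y^*+\mathcal{O}(\log|\log\tau|/|\log\tau|)$, leaving an additive exponent correction equal at leading order to $\tfrac{(\log|\log\tau|)^2}{2\xi^2t\,|\log\tau|}=c_2(t,1)h_2(\tau,1)$. This is the one point where care is genuinely needed: since the coefficient $\log|\log\tau|$ of $\log y$ diverges (unlike $f_1'$ in the $p<1$ case), the square completion produces a bona fide $\tau$-dependent — and non-absorbable — correction rather than a constant. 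Carrying out the Gaussian integral gives $\sqrt{2\pi/(|\log\tau|\phi''(y^*))}$; combining it with the prefactor $|\log\tau|^{-3/2}$, evaluating $y^{1/2}$ and $\exp(-(\log y-\mu)^2/(2\xi^2t))$ at $y^*$, and identifying $\tau^{c_3(t,1)}=\exp(-|\log\tau|\phi(y^*))$ together with the remaining $|\log\tau|$-powers (which assemble into $|\log\tau|^{c_4(t,1)}$) and the residual constant ($=c_5(t,1)$), one recovers the stated expansion; the relative error $1+\mathcal{O}(1/|\log\tau|)$ is inherited from the $\mathcal{O}(w\tau)=\mathcal{O}(y/|\log\tau|)$ error of the Black--Scholes expansion (uniform on $[0,L]$) and the standard remainder of Laplace's method.
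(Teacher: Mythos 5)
Your proof is correct and follows essentially the same route as the paper's: rescale $y\mapsto y/(\tau|\log\tau|)$, split at a large $L$, bound the tail via Lemma~\ref{lem:Tailestimatepequals1}, expand the Black--Scholes price via Lemma~\ref{lem:BSsmalltimecev}, and run a Laplace argument around $y^*$ with a square completion in which the $\log|\log\tau|$-coefficient of the linear term produces the $c_2(t,1)h_2(\tau,1)$ correction. Your organisation of the exponent via $\phi:=g_0-\mu/(\xi^2t)$ (keeping the $(\log y-\mu)^2$ term explicit rather than absorbing it into a $g_2$-type factor) is a minor bookkeeping variant, and your observation that the divergent $\log|\log\tau|$ coefficient is what distinguishes this case from $p<1$ is exactly the right thing to flag.
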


\begin{proof}
Let $\ttau := \tau|\log(\tau)|$.
With the substitution $y\to y/\ttau$  and using~\eqref{eq:fstartintrep}, 
the option price is given by
\begin{align*}
C(k, \tau) & = 
 \int_0^{\infty}\BS(k, y, \tau)\zeta_1(y)\D y
 = \frac{1}{\ttau}
\int_0^{\infty}\BS\left(k, \frac{y}{\ttau}, \tau\right)
\zeta_1\left(\frac{y}{\ttau}\right)\D y\\
 & = \frac{1}{\ttau}\left\{
\int_0^{L}\BS\left(k, \frac{y}{\ttau}, \tau\right)
\zeta_1\left(\frac{y}{\ttau}\right)\D y
 + \int_L^{\infty}\BS\left(k, \frac{y}{\ttau}, \tau\right)\zeta_1\left(\frac{y}{\ttau}\right)\D y\right\}
=:\underbar{C}(k, \tau) + \overline{C}(k, \tau),
\end{align*}
for some $L>0$.
Consider the first term.
Using Lemma~\ref{lem:BSsmalltimecev} with $\ttau = \tau|\log(\tau)|$, we have, as $\tau$ tends to zero,
$$
\BS\left(k, \frac{y}{\tau|\log(\tau)|}, \tau\right)
=
\exp\left(-{\frac{k^2 |\log(\tau)|}{2 y}+\frac{k}{2}}\right)
\frac{y^{3/2} }{k^2 |\log(\tau)|^{3/2}\sqrt{2\pi}}
\left[1+\mathcal{O}\left(\frac{1}{|\log(\tau)|}\right)\right].
$$
Therefore
\begin{align*}
\underbar{C}(k, \tau)
&=
\frac{\E^{k/2}\left(1+\mathcal{O}\left(\frac{1}{|\log(\tau)|}\right)\right)}{|\log(\tau)|^{3/2}\xi k^2 2\pi \sqrt{t}} 
\int_{0}^{L}\exp\left(-\frac{k^2 |\log(\tau)|}{2y } - \frac{\left(\log\left(\frac{y}{\tau|\log(\tau)|}\right)-\mu\right)^2}{2\xi^2 t}\right) y^{1/2}
\D y \\
&=
\exp\left(
\frac{k}{2}
 - \frac{(\log(\tau) + \log|\log(\tau)|)^2+\mu^2}{2\xi^2 t}
 - \frac{\mu(\log(\tau) + \log|\log(\tau)|)}{\xi^2t}\right)
\frac{I_{1}(\tau)\left[1+\mathcal{O}\left(\frac{1}{|\log(\tau)|}\right)\right]}
{\xi k^2 2\pi \sqrt{t}|\log(\tau)|^{3/2}},
\end{align*}
where
$I_{1}(\tau):=\int_{0}^{L} g_2(y) \exp\left(-g_0(y)|\log\tau|+g_1(y)\log|\log(\tau)|\right)\D y$ with
$g_0$ and $g_1$ defined in~\eqref{eq:f0g0smmat} and
$$
g_2(y):=\sqrt{y}\exp\left(\frac{\mu\log(y)}{\xi^2 t}\right).
$$
The dominant contribution from the integrand is the $|\log(\tau)|$ term;
the minimum of~$g_0$ is attained at~$y^*$ given in~\eqref{eq:betapalphasmmat}, and 
$g''_0(y^*)=4/(\xi^6 t^3 k^4)>0$.
Set
$$
I_0(\tau):=
\int_{-\infty}^{\infty} 
\exp\left(-\frac{1}{2}\left((y-y^*)\sqrt{|\log(\tau)|g_0''(y^*)} - \frac{g'(y^*)\log|\log(\tau)|}
{\sqrt{|\log(\tau)|g_0''(y^*)}}\right)^2\right)
\D y  
  = \sqrt{\frac{2 \pi}{g_0''(y^*)|\log(\tau)|}}.
$$
Then for some $\varepsilon>0$ as $\tau$ tends to zero, the asymptotic equivalences
with $L>y^*$,
\begin{align*}
I_{1}(\tau) &\sim
\int_{y^*-\epsilon}^{y^*+\epsilon} g_2(y) 
\exp\Big\{-g_0(y)|\log(\tau)| +g_1(y)\log|\log(\tau)|\Big\}
\D y \\
&\sim
g_2(y^*)  \E^{-g_0(y^*)|\log(\tau)|+g_1(y^*)\log|\log(\tau)|} 
\int_{y^*-\epsilon}^{y^*+\epsilon} 
\E^{-\frac{1}{2}g_0''(y^*)(y-y^*)^2|\log(\tau)|  + g_1'(y^*)(y-y^*)\log|\log(\tau)|}
\D y\\
&\sim
g_2(y^*) \exp\left({-g_0(y^*)|\log(\tau)|+g_1(y^*)\log|\log(\tau)|
+ \frac{(g_1'(y^*)\log|\log(\tau)| )^2}{2g_0''(y^*)|\log(\tau)|}  }\right)
I_0(\tau) \\
& =
g_2(y^*)  \exp\left({-g_0(y^*)|\log(\tau)|+g_1(y^*)\log|\log(\tau)|
+ \frac{( g_1'(y^*)\log|\log(\tau)|)^2}{2g_0''(y^*)|\log(\tau)|} }\right)
\sqrt{\frac{2 \pi}{g_0''(y^*)|\log(\tau)|}}.
\end{align*}
hold.
Therefore as $\tau$ tends to zero:
$$
\underbar{C}(k, \tau)=
c_5(t,1) \exp\Big(-c_1(t,1)h_1(\tau,1)+c_2(t,1)h_2(\tau,1)\Big)
\tau^{c_3(t,1)} |\log(\tau)|^{c_4(t,1)}
\left[1+\mathcal{O}\left(\frac{1}{|\log(\tau)|}\right)\right],
$$
with the functions $c_1,c_2,...,c_5$, $h_1$ and $h_2$ given in Table~\ref{tab:Table}.
For ease of computation we note that
$$
c_5(t,1)=\frac{\sqrt{y^*}\exp\left(\frac{k}{2}-\frac{\mu^2}{2\xi^2t}+\frac{\mu\log(y^*)}{\xi^2 t}\right)}{k^2\xi\sqrt{2\pi t}\sqrt{g_0''(y^*)}}
=\frac{|k|\xi^3t^{3/2}\exp\left(\frac{k}{2}-\frac{\mu^2}{2\xi^2t}+\frac{\mu\log(y^*)}{\xi^2 t}\right)}{4\sqrt{\pi}}.
$$
Now by Lemma~\ref{lem:Tailestimatepequals1},
\begin{align*}
\overline{C}(k, \tau)
 &= \frac{1}{\tau|\log(\tau)|}\int_{L}^{\infty}\BS\left(k, \frac{y}{\tau|\log(\tau)|}, \tau\right)
\zeta_1\left(\frac{y}{\tau|\log(\tau)|}\right)\D y \\
&\qquad \qquad \qquad \qquad =\frac{1}{\tau|\log(\tau)|}\mathcal{O}\left(\exp\left\{-\frac{1}{2\xi^2 t}\left[\log\left(\frac{L}{\tau|\log(\tau)|}\right)-\mu\right]^2\right\}\right).
\end{align*}
Since for some $B>0$ we have that
$$
\exp\left({-\frac{1}{2\xi^2 t}\left[\log\left(\frac{L}{\tau|\log(\tau)|}\right)-\mu\right]^2}\right)
\leq
B
\left(\tau|\log(\tau)|\right)^{\frac{1}{\xi^2t}\left(\log(L)-\mu\right)}
\exp\left({-\frac{1}{2\xi^2t}h_1(\tau,1)}\right),
$$
choosing $L$ such that $\log(L)>\mu$ yields
$$
\mathcal{O}\left(\exp\left\{-\frac{1}{2\xi^2 t}\left[\log\left(\frac{L}{\tau|\log(\tau)|}\right)-\mu\right]^2\right\}\right)
=
\mathcal{O}\left(\exp\left(-\frac{1}{2\xi^2t}h_1(\tau,1)\right)\right).
$$
Hence $\overline{C}(k, \tau)$ is then exponentially subdominant to the compact part since 
$$
\exp\Big(c_1(t,1)h_1(\tau,1)-c_2(t,1)h_2(\tau,1)\Big)
 \mathcal{O}\left(\exp\left\{-\frac{1}{2\xi^2 t}\left[\log\left(\frac{L}{\tau|\log(\tau)|}\right)-\mu\right]^2\right\}\right) 
= \mathcal{O}\left(\E^{-c_2(t,1)h_2(\tau,1)}\right),
$$
and the result follows.
\end{proof}

%%%%%%%%%%%%%%%%%%%%%%%%%%%%%%%%%%%%%%%%
\subsection{Proof of Theorem~\ref{thm:largetimecevatm}}\label{sec:largetimeproofcevvariance}
Lemma~\ref{lem:BSAsymplargetimecomplete} and~\eqref{eq:fstartintrep} 
yield the following asymptotics as $\tau$ tends to infinity:
\begin{equation}\label{eq:largetimecevariancecall}
C(k, \tau)
 = 1-\mm_t+\mm_t(1-\E^{k})^{+}+
\tau^{-1/2}\E^{k/2}\mathfrak{L}(\tau)(1+\mathcal{O}(\tau^{-1})),
\end{equation}
where
$\mathfrak{L}(\tau) := \int_0^{\infty}q(z)\E^{-\tau z}\D z$,
and we set $q(z)\equiv-8\zeta_p(8z)/\sqrt{\pi z}.$
As $z$ tends to zero recall the following asymptotics for the modified Bessel function of the first kind of order $\eta$~\cite[Section 9.6.10]{Abra}:
$$
\II_{\eta}(z)=\frac{1}{\Gamma(\eta+1)}\left(\frac{z}{2}\right)^{\eta}\left(1+\mathcal{O}\left(z^2\right)\right).
$$
Using this asymptotic and~ the definition of the density in~\eqref{eq:CEVdensity}
we obtain the following asymptotics for the density as $y$ tends to zero when $p<1$ 
and absorption at the origin when $p<1/2$:
\begin{equation}\label{eq:zetaabszero}
\zeta_p(y)=\frac{y_0y^{1-2p}}{|1-p|\xi^2t \Gamma(|\eta|+1)\left(2(1-p)^2\xi^2 t\right)^{|\eta|}}
\exp\left(-\frac{y_0^{2(1-p)}}{2\xi^2 t (1-p)^2}\right)
\left(1+\mathcal{O}\left(y^{2(1-p)}\right)\right).
\end{equation}
Analogous arguments yield that when $p<1/2$ and the origin is reflecting, 
then, as $y$ tends to zero,
\begin{equation}\label{eq:zetarefzero}
\zeta_p(y)=\frac{y^{-2p}}{|1-p|\xi^2t \Gamma(\eta+1)\left(2(1-p)^2\xi^2t\right)^{\eta}}
\exp\left(-\frac{y_0^{2(1-p)}}{2\xi^2 t (1-p)^2}\right)
\left(1+\mathcal{O}\left(y^{2(1-p)}\right)\right).
\end{equation}
In order to apply Watson's lemma~\cite[Part 2, Chapter 2]{Miller} to~$\mathfrak{L}$,
it suffices to require that 
$q(z)=\mathcal{O}(\E^{c z})$ for some $c>0$ as $z$ tends to infinity.
This clearly holds here since $\lim_{z\uparrow\infty}\zeta_p(z)=0$.
We also require that $q(z)=a_0z^{l}(1+\mathcal{O}(z^n))$ as~$z$ tends to zero for some $l>-1$, $n>0$.
When $p\geq1$, it can be shown that~$\zeta_p$ is exponentially small, 
and a different method needs to be used.
When $p<1$ and the density is as in~\eqref{eq:zetaabszero} then $l=1-2p-\frac{1}{2}$ and so we require $p<\frac{3}{4}$.
Analogously, when $p<1/2$ and the density is~\eqref{eq:zetarefzero} then $l=-2p-\frac{1}{2}$ and we require $p<\frac{1}{4}$.
An application of Watson's Lemma in conjunction with~\eqref{eq:largetimecevariancecall} yields Theorem~\ref{thm:largetimecevatm}.

%%%%%%%%%%%%%%%%%%%%%%%%%%%%%%%%%%%%%%%%%%%%
\begin{appendix}
\section{Black-Scholes asymptotics}
This appendix gathers some useful expansions for the Black-Scholes Call price 
function~$\BS$ defined in~\eqref{eq:BSvariance}.

\begin{lemma}\label{lem:BSsmalltimecev}
Let $k,y>0$ and $\ttau:(0,\infty)\to (0,\infty)$ be a continuous function such that 
$\displaystyle \lim_{\tau\downarrow 0}\frac{\tau}{\ttau(\tau)}=0$. 
Then
$$
\BS\left(k, \frac{y}{\ttau(\tau)},\tau\right)
=
\frac{y^{3/2}}{k^2\sqrt{2\pi}}\left(\frac{\tau}{\ttau(\tau)}\right)^{3/2}
\exp\left(-\frac{k^2}{2 y}\frac{\ttau(\tau)}{\tau}+\frac{k}{2}\right)
\left\{1+\mathcal{O}\left(\frac{\tau}{\ttau(\tau)}\right)\right\},
\quad\text{as }\tau \text{ tends to zero}.
$$
\end{lemma}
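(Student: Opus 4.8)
The plan is to reduce the claim to the classical Mills-ratio expansion of the Gaussian tail. First I would abbreviate $\epsilon := \tau/\ttau(\tau)$, which tends to zero by hypothesis, and observe that substituting $w = y/\ttau(\tau)$ and $T = \tau$ into~\eqref{eq:BSvariance} gives $wT = y\epsilon$, so that
$$
\BS\left(k,\frac{y}{\ttau(\tau)},\tau\right) = \Nn(-x_+) - \E^{k}\Nn(-x_-),
\qquad\text{where}\quad x_\pm := \frac{k}{\sqrt{y\epsilon}} \mp \frac{\sqrt{y\epsilon}}{2}.
$$
Since $k>0$ and $\epsilon\downarrow 0$, both $x_\pm$ diverge to $+\infty$, so only the behaviour of the standard Gaussian survival function near infinity matters.

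Next I would record two ingredients. The algebraic identity $x_\pm^2 = \frac{k^2}{y\epsilon}\mp k + \frac{y\epsilon}{4}$ gives, writing $\varphi$ for the standard Gaussian density, $\varphi(x_\pm) = \frac{1}{\sqrt{2\pi}}\exp\left(-\frac{k^2}{2y\epsilon}\pm\frac{k}{2}-\frac{y\epsilon}{8}\right)$; in particular the prefactor $\E^k$ produces the exact cancellation $\E^k\varphi(x_-) = \varphi(x_+) = \frac{1}{\sqrt{2\pi}}\exp\left(-\frac{k^2}{2y\epsilon}+\frac{k}{2}\right)\left(1+\mathcal{O}(\epsilon)\right)$, the last equality using $\E^{-y\epsilon/8} = 1+\mathcal{O}(\epsilon)$. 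The second ingredient is the Mills-ratio expansion $\Nn(-x) = \frac{\varphi(x)}{x}\left(1 - \frac{1}{x^2} + \mathcal{O}(x^{-4})\right)$ as $x\to\infty$ (see e.g.~\cite[Section 14.8]{W03}). Feeding both into the displayed formula for $\BS$ and factoring out $\varphi(x_+) = \E^k\varphi(x_-)$ yields
$$
\BS\left(k,\frac{y}{\ttau(\tau)},\tau\right) = \varphi(x_+)\left[\left(\frac{1}{x_+}-\frac{1}{x_-}\right) - \left(\frac{1}{x_+^3}-\frac{1}{x_-^3}\right) + \mathcal{O}\left(x_+^{-5}\right) + \mathcal{O}\left(x_-^{-5}\right)\right].
$$

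The last step, and the only one requiring genuine care, is estimating the bracket: each term $1/x_\pm$ is of order $\sqrt{\epsilon}$, so the $O(1)$ contributions cancel identically and one must expand $x_\pm = \frac{k}{\sqrt{y\epsilon}}\left(1\mp\frac{y\epsilon}{2k}\right)$ to find the true order. From $\frac{1}{x_+}-\frac{1}{x_-} = \frac{x_--x_+}{x_+x_-} = \frac{\sqrt{y\epsilon}}{k^2/(y\epsilon) - y\epsilon/4} = \frac{(y\epsilon)^{3/2}}{k^2}\left(1+\mathcal{O}(\epsilon^2)\right)$, and the fact that $\frac{1}{x_+^3}-\frac{1}{x_-^3}$ together with the two remainders are all $\mathcal{O}(\epsilon^{5/2})$, the bracket equals $\frac{(y\epsilon)^{3/2}}{k^2}\left(1+\mathcal{O}(\epsilon)\right)$. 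Multiplying by $\varphi(x_+)$ and substituting back $\epsilon = \tau/\ttau(\tau)$ and $y\epsilon = y\tau/\ttau(\tau)$ produces exactly the stated expansion. The main obstacle is thus purely bookkeeping: one has to carry the Mills ratio to the $x^{-3}$ term rather than stopping at $\varphi(x)/x$, since the leading pieces cancel, and then verify that no further cancellation occurs, so that the surviving relative error is genuinely $\mathcal{O}(\tau/\ttau(\tau))$.
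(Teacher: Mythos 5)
Your proof is correct and follows essentially the same route as the paper's: rewrite $\BS(k, y/\ttau, \tau)$ as $\BS(k, y, \tau^*)$ with $\tau^* = \tau/\ttau \to 0$, apply the Mills-ratio (Gaussian tail) expansion to order $z^{-3}$, use the identity $\E^k\varphi(x_-) = \varphi(x_+)$, and observe that the leading $1/x_\pm$ terms cancel so that the $(y\epsilon)^{3/2}/k^2$ factor emerges from the difference. The only distinction is cosmetic (your $x_\pm = -d_\pm^*$ and $\epsilon = \tau^*$); your write-up is in fact slightly more explicit than the paper's about why the bracket is $\frac{(y\epsilon)^{3/2}}{k^2}(1+\mathcal{O}(\epsilon))$.
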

\begin{proof}
Let $k,y>0$ and set $\tau^*(\tau) \equiv \tau/{\ttau}(\tau)$.
By assumption, $\tau^*(\tau)$ tends to zero, 
and~\eqref{eq:BSvariance} implies
$$
\BS\left(k, \frac{y}{\ttau},\tau\right)= 
\BS\left(k, y,\tau^*(\tau)\right)
=\Nn(d_+^*(\tau)) - \E^k\Nn(d_-^*(\tau)),$$
where we set $d_\pm^*(\tau) := -k/(\sqrt{y\tau^*(\tau)})\pm \frac{1}{2}\sqrt{y\tau^*(\tau)}$. 
Note that $d_\pm^*$ tends to $-\infty$ as $\tau$ tends to zero.
The asymptotic expansion
$1-\Nn(z)=(2\pi)^{-1/2}\E^{-z^2/2}\left(z^{-1}-z^{-3}+\mathcal{O}(z^{-5})\right)$,
valid for large~$z$ (\cite[page 932]{Abra}), 
yields
\begin{align*}
\BS\left(k, \frac{y}{\ttau(\tau)},\tau\right)
 &  = \Nn\left(d_+^*(\tau)\right)-\E^k\Nn\left(d_-^*(\tau)\right)
 =1-\Nn\left(-d_+^*(\tau)\right)-\E^k(1-\Nn\left(-d_-^*(\tau)\right)) \\
&= \frac{1}{\sqrt{2\pi}}\exp\left(-\frac{1}{2}d_{+}^*(\tau)^2/2\right)
\left\{\frac{1}{d_{-}^*(\tau)} -\frac{1}{d_{+}^*(\tau)}
 + \frac{1}{d_{+}^*(\tau)^3} - \frac{1}{d_{-}^*(\tau)^3}
  + \mathcal{O}\left(\frac{1}{d_{+}^*(\tau)^5}\right)\right\},
\end{align*}
as $\tau$ tends to zero, where we used the identity
$\frac{1}{2}d_{-}^*(\tau)^2 - k = \frac{1}{2}d_{+}^*(\tau)^2$.
The lemma then follows from the following expansions as $\tau$ tends to zero:
\begin{align*}
\exp\left(-\frac{1}{2}d_{+}^*(\tau)^2\right)
 & = \exp\left(-\frac{k^2}{2y\tau^*} + \frac{k}{2}\right)\left(1+\mathcal{O}(\tau^*(\tau))\right),\\
\frac{1}{d_{-}^*(\tau)} -\frac{1}{d_{+}^*(\tau)}
 + \frac{1}{d_{+}^*(\tau)^3} - \frac{1}{d_{-}^*(\tau)^3}
  & = \frac{y^{3/2} \tau^*(\tau)^{3/2}}{k^2}\left(1+\mathcal{O}(\tau^*(\tau))\right).
\end{align*}
\end{proof}

\begin{lemma}\label{lem:BSAsymplargetimecomplete}
Let $y>0$ and $k\in\RR$. Then, as $\tau$ tends to infinity,
$$
\BS(k, y, \tau)
 = 1 - \frac{4}{\sqrt{2\pi\tau y}}\exp\left(-\frac{y\tau}{8} + \frac{k}{2}\right)\left(1+\mathcal{O}(\tau^{-1})\right).
$$
\end{lemma}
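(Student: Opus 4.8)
The plan is to run the expansion of Lemma~\ref{lem:BSsmalltimecev} in the opposite regime, the one where the total variance $y\tau$ diverges. Set $d_{\pm}(\tau):=-k/\sqrt{y\tau}\pm\tfrac12\sqrt{y\tau}$, so that~\eqref{eq:BSvariance} reads $\BS(k,y,\tau)=\Nn\bigl(d_+(\tau)\bigr)-\E^k\Nn\bigl(d_-(\tau)\bigr)$, and note that $d_+(\tau)\to+\infty$ while $d_-(\tau)\to-\infty$ as $\tau\to\infty$. First I would rewrite $\Nn(d_+)=1-\bigl(1-\Nn(d_+)\bigr)$ and $\Nn(d_-)=1-\Nn(-d_-)$, so that the two non-constant contributions are Gaussian tails evaluated at the points $d_+$ and $-d_-$, both tending to $+\infty$, and then insert the tail expansion $1-\Nn(z)=(2\pi)^{-1/2}\E^{-z^2/2}\bigl(z^{-1}-z^{-3}+\mathcal{O}(z^{-5})\bigr)$ already used in the proof of Lemma~\ref{lem:BSsmalltimecev}.

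The one point worth isolating is that these two contributions are of the same asymptotic order, which is what produces the constant $4$ (and not $2$) in the statement. This becomes transparent via the elementary identity $d_-(\tau)^2-d_+(\tau)^2=\bigl(d_--d_+\bigr)\bigl(d_-+d_+\bigr)=(-\sqrt{y\tau})\bigl(-2k/\sqrt{y\tau}\bigr)=2k$, whence $\E^k\E^{-d_-^2/2}=\E^{-d_+^2/2}$ and the prefactor $\E^k$ cancels exactly. After this cancellation one is left with
$$
\BS(k,y,\tau)=1-\frac{1}{\sqrt{2\pi}}\,\E^{-d_+(\tau)^2/2}\left[\frac{1}{d_+(\tau)}-\frac{1}{d_-(\tau)}+\mathcal{O}\bigl(d_+(\tau)^{-3}\bigr)\right].
$$

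It then remains to expand the bracket and the exponent. From $d_+-d_-=\sqrt{y\tau}$ and $d_+\cdot(-d_-)=\tfrac14 y\tau-k^2/(y\tau)$ one obtains $d_+^{-1}-d_-^{-1}=\sqrt{y\tau}\bigl(\tfrac14 y\tau-k^2/(y\tau)\bigr)^{-1}=\tfrac{4}{\sqrt{y\tau}}\bigl(1+\mathcal{O}(\tau^{-2})\bigr)$, and the remainder $\mathcal{O}(d_+^{-3})$ is of relative order $\mathcal{O}(\tau^{-1})$ with respect to this leading term; moreover $d_+(\tau)^2=\tfrac14 y\tau-k+k^2/(y\tau)$ yields $\E^{-d_+(\tau)^2/2}=\E^{-y\tau/8+k/2}\bigl(1+\mathcal{O}(\tau^{-1})\bigr)$. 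Substituting both into the display above produces exactly $\BS(k,y,\tau)=1-\tfrac{4}{\sqrt{2\pi y\tau}}\E^{-y\tau/8+k/2}\bigl(1+\mathcal{O}(\tau^{-1})\bigr)$. I do not expect a genuine obstacle here: the whole argument is bookkeeping with the Gaussian tail expansion, and the only mildly delicate point is checking that all the error terms collapse into the advertised $\mathcal{O}(\tau^{-1})$, the dominant contribution to which comes from the $k^2/(2y\tau)$ correction sitting in the exponent.
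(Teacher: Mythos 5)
Your proposal is correct and follows essentially the same route as the paper: the same decomposition into two Gaussian tails via $\Nn(d_+)=1-(1-\Nn(d_+))$ and $\Nn(d_-)=1-\Nn(-d_-)$, the same tail expansion $1-\Nn(z)=(2\pi)^{-1/2}\E^{-z^2/2}(z^{-1}-z^{-3}+\mathcal{O}(z^{-5}))$, and the same key identity $\tfrac12 d_-^2-k=\tfrac12 d_+^2$ to absorb the prefactor $\E^k$. The only cosmetic difference is that you collapse the two $\mathcal{O}(d_\pm^{-3})$ terms into a single relative error immediately rather than writing them out, which does not affect the argument.
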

\begin{proof}
Let $y>0$.
Then
$\BS(k, y, \tau)=\Nn\left(d_+^*(\tau)\right)-\E^k\Nn\left(d_-^*(\tau)\right)$,
where
$d_\pm^*(\tau) := -k/(\sqrt{y\tau})\pm \frac{1}{2}\sqrt{y\tau}$,
Hence $d_{\pm}^*$ tends to $\pm\infty$ as $\tau$ tends to infinity.
Similarly to the proof of the previous lemma,
\begin{align*}
\BS(k, y, \tau)
 & = \Nn\left(d_+^*(\tau)\right)-\E^k\left(1-\Nn\left(-d_-^*(\tau)\right)\right) \\
 & = 1- \frac{1}{\sqrt{2\pi}}\exp\left(-\frac{1}{2}d_{+}^*(\tau)^2\right)
\left\{\frac{1}{d_{+}^*(\tau)} -\frac{1}{d_{-}^*(\tau)}
 + \frac{1}{d_{-}^*(\tau)^3} - \frac{1}{d_{+}^*(\tau)^3}
  + \mathcal{O}\left(\frac{1}{d_{+}^*(\tau)^5}\right)\right\},
\end{align*}
as $\tau$ tends to infinity, where we used the identity
$\frac{1}{2}d_{-}^*(\tau)^2 - k = \frac{1}{2}d_{+}^*(\tau)^2$.
The lemma then follows from the following expansions as $\tau$ tends to infinity:
\begin{align*}
\exp\left(-\frac{1}{2}d_{+}^*(\tau)^2\right)
 & = \exp\left(-\frac{y\tau}{8} + \frac{k}{2}\right)\left(1+\mathcal{O}(\tau^{-1})\right),\\
\frac{1}{d_{+}^*(\tau)} -\frac{1}{d_{-}^*(\tau)}
 + \frac{1}{d_{-}^*(\tau)^3} - \frac{1}{d_{+}^*(\tau)^3}
  & = \frac{4}{\sqrt{2\pi\tau y}}\left(1+\mathcal{O}(\tau^{-1})\right).
\end{align*}
\end{proof}

\end{appendix}

%%%%%%%%%%%%%%%%%%%%%%%%%%%%%%%%%%%%%%%%%%
%%%%%%%%%%%%%%%%%%%%%%%%%%%%%%%%%%%%%%%%%%

\end{document}